\newtheorem{theorem}{Theorem}[section]
\newtheorem{corollary}[theorem]{Corollary}
\newtheorem{lemma}[theorem]{Lemma}
\newtheorem{define}[theorem]{Definition}
\numberwithin{equation}{section}
\def\|{|\kern-1.25pt|}
\def\bfs#1{{\setbox0=\hbox{$\scriptstyle#1$} 
     \kern-.020em\copy0\kern-\wd0
     \kern .040em\copy0\kern-\wd0
     \kern-.020em\raise.01em\box0 }}
\newcommand{\rmv}[1]{}
\begin{document}

\title{New results of $0$-APN power functions over $\mathbb{F}_{2^n}$}

\author{Yan-Ping Wang\textsuperscript{1}   
  \and
        Zhengbang Zha\textsuperscript{2}   {\thanks{Corresponding author.  Email addresses:~zhazhengbang@163.com}}     \and
        \vspace*{0.0cm}\\
{\small $^1$ College of Mathematics and Statistics, Northwest Normal University, Lanzhou 730070, China}\\
{\small $^2$ School of Mathematical Sciences, Luoyang Normal University, Luoyang 471934, China} \\
}

\date{}
\maketitle
\begin{abstract}
Partially APN functions attract researchers' particular interest recently. It plays an important role in studying APN functions.
In this paper, based on the multivariate method and resultant elimination, we propose several new infinite classes of $0$-APN power functions over $\mathbb{F}_{2^n}$. Furthermore, two infinite classes of $0$-APN power functions $x^d$ over $\mathbb{F}_{2^n}$ are characterized completely where  $(2^k-1)d\equiv 2^m-1~({\rm mod}\ 2^n-1)$ or $(2^k+1)d\equiv 2^m+1~({\rm mod}\ 2^n-1)$ for some positive integers $n, m, k$. These infinite classes of $0$-APN power functions can explain some examples of exponents of Table $1$ in \cite{BKRS2020}.
\end{abstract}


{\bf Key Words}\ \ APN function,  $0$-APN power function, Multivariate method, Resultant 

{\bf Mathematics Subject Classification} $06$E$30\cdot11$T$06\cdot94$A$60$

\section{Introduction}
\label{intro}

Let $\mathbb{F}_{2^n}$ be the finite field with $2^{n}$ elements, and $\mathbb{F}_{2^{n}}^{*}$ be the multiplicative group which consists of all the nonzero elements of $\mathbb{F}_{2^{n}}$. For a function $f:\mathbb{F}_{2^{n}}\rightarrow\mathbb{F}_{2^{n}}$, the derivative of $f(x)$ is defined by $D_{a}f(x) = f(x+a) + f(x)$, where $x\in\mathbb{F}_{2^n}$ and $a\in\mathbb{F}_{2^{n}}^{*}$.
For any $b\in\mathbb{F}_{2^{n}}$, we define $\delta_{f}(a,b) = \#\{x\in\mathbb{F}_{2^{n}} ~|~ D_{a}f(x)=b\} $.
The maximum value $\delta_{f} = \max\limits_{a\in\mathbb{F}_{2^{n}}^{*},b\in\mathbb{F}_{2^{n}}}\delta_{f}(a,b)$ is called the differential uniformity of $f(x)$.
A function $f(x)$ over $\mathbb{F}_{2^{n}}$ is called Almost Perfect Nonlinear (APN for short) if its differential uniformity $\delta_{f}$ equals $2$.

Block ciphers should be designed to resist all classical attacks. S-boxes are the core components of block ciphers. The primary purpose of S-boxes is to produce confusion inside block ciphers.
Such S-boxes are nonlinear functions over $\mathbb{F}_{2^{n}}$. These functions should have low differential uniformity for resisting differential attacks \cite{BS1991}.
In cryptography, APN functions over $\mathbb{F}_{2^{n}}$ are the optimal ones to resist differential attacks.
Therefore,  APN functions have important applications in block ciphers.
For example, the APN functions $x^{81}$ over $\mathbb{F}_{2^9}$ and $x^5$ over $\mathbb{F}_{2^7}$ have been respectively used in  MISTY and KASUMI block ciphers. APN power functions over finite fields attract researchers' interest for their simple algebraic form and some additional properties. So far, researchers only found six classes of APN power functions on $\mathbb{F}_{2^n}$: Gold functions~\cite{GR1968}, Kasami functions~\cite{KT1971}, Welch functions \cite{DobbWelch1999}, Niho functions \cite{DobbNiho1999}, Inverse functions~\cite{NK1994} and Dobbertin functions~\cite{Hans2001}. Furthermore, many results of APN functions appeared in the recent years, the reader may refer to \cite{CC2021} and references therein for more information.

In the conference SETA 2018 \cite{BKKRS2018} (for the journal edition, see \cite{BKKRS2020}), to study the conjecture of the highest possible algebraic degree of APN functions, Budaghyan et al. firstly proposed the following concept of partially APN.
\begin{define}\emph{({\rm\cite{BKKRS2020}})}\label{p-APN}
Let $f(x)$ be a function from $\mathbb{F}_{2^{n}}$ to itself. For a fixed $x_{0}\in\mathbb{F}_{2^{n}}$, then we call the function $f(x)$ is $x_{0}$-APN  (or partially APN) if all the points $x, y$ satisfying $f(x_{0}) + f(x) + f(y) + f(x_{0} + x + y) = 0$ belong to the curve $(x_{0} + x)(x_{0} + y)(x + y) = 0$.
\end{define}
It is obvious that $f(x)$ is $x_{0}$-APN for any $x_{0}\in\mathbb{F}_{2^{n}}$ if $f(x)$ is an APN function.
Partially APN power functions are of APN-like exponents that are not APN. 
In \cite{BKKRS2020} the authors provided some characterizations and propositions of partially APN functions.
Furthermore, they presented several classes of $x_{0}$-APN functions for some $x_{0}\in\mathbb{F}_{2^{n}}$. 
Pott \cite{Pott2019} pointed out that for any $n\geq3$ there exist partial $0$-APN permutations on $\mathbb{F}_{2^n}$, when he studied the relation between Steiner triple systems and partial $0$-APN permutations. 
Based on the idea and the known instances of $x_{0}$-APN functions, it is very interesting to construct more infinite classes of $x_{0}$-APN functions for some $x_{0}\in\mathbb{F}_{2^{n}}$.
If $f$ is a power function, then we only need to consider the partial APN properties of $f$ at $0$ or $1$ since it's special algebraic structure \cite{BKKRS2020}. Moreover, $f$  is $0$-APN if and only if the equation $f(x + 1) + f(x) + 1 = 0$ has no solution in $\mathbb{F}_{2^n}\backslash\{0,1\}$. In \cite{BKKRS2020,BKRS2020}, Budaghyan et al. explicitly constructed a number of $0$-APN but not APN power functions $f(x) = x^d$ over $\mathbb{F}_{2^{n}}$. They further listed all power functions over $\mathbb{F}_{2^{n}}$ for $1\leq n\leq11$ that are $0$-APN but not APN in Table $1$ of \cite{BKRS2020}. Recently, Qu and Li \cite{QL2022} constructed seven classes of $0$-APN power functions over $\mathbb{F}_{2^n}$. Two of them were proved to be locally-APN \cite{BCC2011}.
The purpose of this paper is to construct more new classes of $0$-APN power functions over $\mathbb{F}_{2^n}$.
As a result, we propose several new infinite classes of $0$-APN power functions over $\mathbb{F}_{2^n}$ by using the multivariate method and resultant elimination. For the sake of clarity and further studying, we list the new results of this paper in Table \ref{tab}, along with some prior ones.

The paper is organized as follows. Section \ref{pre} gives some necessary definitions and results.
In Section \ref{third}, several infinite classes of $0$-APN power functions are presented over different finite fields.
In Section \ref{fourth}, we completely characterize two infinite classes of the $0$-APN power functions $x^d$ over $\mathbb{F}_{2^n}$
under the conditions of $(2^k-1)d\equiv 2^m-1~({\rm mod}\ 2^n-1)$ and $(2^k+1)d\equiv 2^m+1~({\rm mod}\ 2^n-1)$ respectively.
The conclusion is given in Section \ref{conclu}.

\begin{table}[H]
\scriptsize
\centering
\caption{Known results of $0$-APN power functions $f(x)=x^d$ over $\mathbb{F}_{2^n}$}
\label{tab}
\begin{tabular}{m{1.2cm}p{2.2cm}p{5.2cm}m{1.6cm}p{5.2cm}}
\hline\noalign{\smallskip}
 Number  &           $x^d$       &     Conditions                &   References            &  Values of $(d, n)(n\leq11)$        \\
\noalign{\smallskip}\hline\noalign{\smallskip}
1)       &  $ x^{2^i-1}$                  & $\gcd(n, i-1)=1$          & \cite{BKKRS2020}        & $(7, 7), (31, 7), (15, 8),(63, 8)$, $(7, 9), (63, 9),(15, 10), (255, 10)$, $(7, 11), (15, 11), (31, 11), (127, 11)$, $ (255, 11), (511, 11)$   \\

2)       &  $ x^{21}$                     &  $6\nmid n$            & \cite{BKRS2020}        &   $(21, 7), (21, 8), (21, 9),(21, 10)$, $(21, 11)$           \\

3)       &  $ x^{2^r+2^t-1}$              & $\gcd(n, r)=\gcd(n, t)=1$  & \cite{BKRS2020}        &   $(7, 7), (31, 7), (47, 7),(47, 7)$, $ (15, 8), (63, 8), (7, 9), (63, 9)$, $(15, 10),(255, 10), (7, 11), (11, 11)$, $(15, 11), (19, 11),(23, 11),(31, 11)$, $(39, 11),  (47, 11),(67, 11),(71, 11)$, $(79, 11),(127, 11),(159, 11)$, $ (191, 11), (255, 11),(319, 11)$, $(383, 11),(511, 11),(767, 11)$  \\

4)       &  $ x^{2^{2t}+2^{t}+1}$         &  $n=4t$, $t$ even                                       & \cite{BKRS2020}        &   $(21, 8)  $       \\

5)       &  $ x^{2^n-2^s}$                &  $\gcd(n, s+1)=1$                                  & \cite{BKRS2020}        &   $(7, 7),(31, 7),(15, 8),(63, 8)$,  $(7, 9),(63, 9),(15, 10),( 255, 10)$,    $(7, 11),(15, 11),(31, 11),( 127, 11)$, $(255, 11),(511, 11)$       \\

6)       &  $ x^{j\cdot(2^{m}-1)}$       &  $n=2m$, $\gcd(j, 2^m+1)=1$, $m$ even, $j$ integer    & \cite{QL2022}        &   $ (15, 8),(45, 8)$    \\

7)       &  $ x^{2^{2m-1}-2^{m}-1}$       &  $n=2m$, $m$ even, $3\nmid m$                       & \cite{QL2022}        &   $ (111, 8)$    \\  

8)       &  $ x^{2^{2m-1}-2^{m-1}-1}$     &  $n=2m$, $m$ odd                                     & \cite{QL2022}        &   $(27, 6), (495, 10) $     \\

9)       &  $ x^{2^{3k}-2^{2k}+2^k-1}$    &  $n=4k$, $k$ even                                   & \cite{QL2022}        &   $ (51, 8)$            \\

10)        &  $ x^{2^{2m}-2^{m}-1} $       &  $n=2m+1$, $m\not\equiv 1~({\rm mod}\ 3)$              & \cite{QL2022}          &   $ (55, 7), (991, 11)$   \\

11)       &  $ x^{2^{2m-1}-2^{m-1}-1} $   &  $n=2m+1$, $m$ integer                              & \cite{QL2022}          &   $(111, 9), (479, 11) $   \\

12)       &  $ x^{2^{2m-1}-2^{m}-1} $     &  $n=2m+1$, $m$ integer                                & \cite{QL2022}          &   $(119, 9), (495, 11)  $   \\
13)       &  $ x^{2^{2k-1}-2^{k-1}-1}$    &  $n=4k$, $k$ odd                                    & Thm \ref{th1.1}        &   $ --$       \\ 

14)       &  $ x^{2^{2k-1}+2^{k}+1}$      &  $n=2k+1$, $k$ integer                                & Thm \ref{th:2.1}     & $(21, 7), (41, 9), (81, 11)$  \\  

15)       &  $ x^{2^{2k}+2^{k+1}+1} $     &  $n=2k+1$, $k\not\equiv 1~({\rm mod}\ 3)$              & Thm \ref{th:2.2}        &   $(49, 11)$     \\ 

16)       &  $ x^{2^{k+1}-2^{k-1}-1} $    &  $n=2k+1$, $k\not\equiv 1~({\rm mod}\ 3)$              & Thm \ref{th:2.3}        &   $(47, 11) $     \\

17)       &  $ x^{2^{2k}-2^{k+1}-1}$      &  $n=2k+1$, $k\not\equiv 4~({\rm mod}\ 9)$             & Thm \ref{th:2.4}        &   $(47, 7), (959, 11) $  \\   

18)       &  $ x^{2^{2k}+2^{k+1}+1}$      &  $n=3k-1$, $k$ integer                                 & Thm \ref{th:2.5}        &   $ (21, 8), (73, 11)$     \\  

19)       &  $ x^{2^{2k+1}+2^{k+1}+1}$    &  $n=3k-1$, $k$ even                                   & Thm \ref{th:2.6}        &   $ (81, 11) $  \\  

20)       &  $ x^{2^{2k+1}+2^{k}+1}$      &  $n=3k-1$, $k$ even                                      & Thm \ref{th:2.7}        &   $ (69, 11) $     \\ 


21)       &  $ x^{3\cdot2^{2k}+1}$        &  $n=3k-1$, $k$ even                                       & Thm \ref{th:2.16}        &   $(11, 11)  $    \\  

22)       &  $ x^{2^{2k-1}-2^{k}-1}$      &  $n=3k$, $k\not\equiv 0~({\rm mod}\ 3)$                  & Thm \ref{th:2.9}        &   $-- $  \\ 

23)       &  $ x^{2^{2k-1}+2^{k}+1}$      &  $n=3k$, $k$ odd                                         & Thm \ref{th:2.10}        &   $(41, 9)  $     \\ 

24)       &  $ x^{2^{2k}-2^{k+1}-1}$      &  $n=3k$, $k$ odd                                         & Thm \ref{th:2.11}        &   $(21, 9) $      \\ 

25)       &  $ x^{2^{2k+1}-2^{k}-1} $     &  $n=3k$, $k$ integer                                     & Thm \ref{th:2.18}        &   $(27, 6), (119, 9) $ \\ 


26)       &  $ x^{3\cdot(2^{k+1}-1)}$     &  $n=3k+1$, $k\not\equiv 11~({\rm mod}\ 34)$             & Thm \ref{th:2.20}        &   $(21, 7), (45, 10) $ \\  

27)       &  $ x^{d}$     &  $\gcd(n, mk)=\gcd(n, m-k)=1$, $(2^k-1)d\equiv 2^m-1~({\rm mod}\ 2^n-1)$, $n, m, k$ integers & Thm \ref{th:4.1}     & 
$(7, 7),(19, 7), (21, 7),(31, 7)$,  $(47, 7),(55, 7),(341, 11),(731, 11)$, $(887, 11),(991, 11),(137, 11),(293, 11)$, $(511, 11),(73, 11),(307, 11),(99, 11)$, $(463, 11),(879, 11),(255, 11),(85, 11)$, $(199, 11),(959, 11),(767, 11),(495, 11)$, $(67, 11),(443, 11),(703, 11),(895, 11)$, $(153, 11),(301, 11),(687, 11)$   \\

28)       &  $ x^{d}$     &  $(2^k+1)d\equiv 2^m+1~({\rm mod}\ 2^n-1)$, $\gcd(n, m+k)=\gcd(n, m-k)=1$, $\frac{n}{\gcd(n, k)}$ odd, $n, m, k$ integers & Thm \ref{th:4.2}(i)     &   $(27, 6), (207, 10),(231, 10),(189, 10)$,  $(363, 10),(11, 11),(121, 11),(171, 11)$, $(423, 11),(205, 11),(235, 11),(343, 11)$, $(429, 11),(221, 11),(189, 11)$\\
          &               &  $(2^k+1)d\equiv 2^m+1~({\rm mod}\ 2^n-1)$, $d\equiv 0~({\rm mod}\ 3)$, $\frac{n}{\gcd(n, k)}$ even, $\gcd(k, n)=1$, $\gcd(m+k,n)= \gcd(m-k,n)=2$, $n$ even, $k$ and $m$ odd    &   Thm \ref{th:4.2}(ii) & $(231, 10)$\\
\noalign{\smallskip}\hline
\end{tabular}
\footnotesize
\raggedright
The last column $(d, n)$ denotes the examples of $0$-APN (but not APN) power functions $x^d$ over $\mathbb{F}_{2^n}(n\leq11)$, which appeared in Table $1$ in \cite{BKRS2020}.
Note that the functions 13) and 22) only exist such examples over $\mathbb{F}_{2^n}$ of $n\geq12$, thus the corresponding examples do not list.
\end{table}

\section{Preliminaries}\label{pre}

In this section, we give some necessary definitions and results which will be used in this paper. 

We firstly recall the CCZ equivalence of power functions on $\mathbb{F}_{p^n}$.
\begin{lemma}\label{Lem:CCZ}\emph{({\rm\cite[Theorem $1$]{ Demp2018}})}
The power functions $p_{k}(x)=x^k$ and $p_{l}(x)=x^l$ on $\mathbb{F}_{p^n}$ are CCZ equivalent, if and only if
there exists a positive integer $0\leq a < n$, such that $l\equiv p^a k~({\rm mod}\ p^n-1)$  or $kl\equiv p^a~({\rm mod}\ p^n-1)$.
\end{lemma}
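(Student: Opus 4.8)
Recall that $x^{k}$ and $x^{l}$ are CCZ equivalent precisely when some affine permutation $\mathcal{L}$ of $V=\mathbb{F}_{p^{n}}\times\mathbb{F}_{p^{n}}$, viewed as a $2n$-dimensional $\mathbb{F}_{p}$-vector space, carries the graph $G_{k}=\{(x,x^{k}):x\in\mathbb{F}_{p^{n}}\}$ onto $G_{l}=\{(y,y^{l}):y\in\mathbb{F}_{p^{n}}\}$; I would prove the two implications separately. The ``if'' direction is routine: if $l\equiv p^{a}k\pmod{p^{n}-1}$, then since the exponent of a power function on $\mathbb{F}_{p^{n}}$ only matters modulo $p^{n}-1$ we have $x^{l}=(x^{k})^{p^{a}}$ identically, and as $y\mapsto y^{p^{a}}$ is an $\mathbb{F}_{p}$-linear permutation of $\mathbb{F}_{p^{n}}$ the map $(x,y)\mapsto(x,y^{p^{a}})$ is a linear permutation of $V$ carrying $G_{k}$ to $G_{l}$; if instead $kl\equiv p^{a}\pmod{p^{n}-1}$, then $\gcd(p^{a},p^{n}-1)=1$ forces $\gcd(k,p^{n}-1)=\gcd(l,p^{n}-1)=1$, so both power maps are permutations, and writing $k'$ for the inverse of $k$ modulo $p^{n}-1$ we get $l\equiv p^{a}k'\pmod{p^{n}-1}$, so $x^{l}$ is linearly equivalent to $x^{k'}$ by the first case, while $x^{k'}$ is the compositional inverse of $x^{k}$ and the coordinate swap $(x,y)\mapsto(y,x)$ is a linear permutation of $V$ taking $G_{k}$ to $G_{k'}$; composing gives the required affine permutation.

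For the ``only if'' direction the plan is to exploit the multiplicative symmetry of a power graph. For $d\in\{k,l\}$ and $t\in\mathbb{F}_{p^{n}}^{*}$, the $\mathbb{F}_{p}$-linear map $m^{(d)}_{t}\colon(x,y)\mapsto(tx,t^{d}y)$ satisfies $m^{(d)}_{t}(x,x^{d})=(tx,(tx)^{d})$, hence stabilizes $G_{d}$ setwise, so $t\mapsto m^{(d)}_{t}$ embeds $\mathbb{F}_{p^{n}}^{*}$ as a cyclic subgroup $S_{d}$ of order $p^{n}-1$ inside the setwise stabilizer $N_{d}\le\mathrm{AGL}(V)$ of $G_{d}$. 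From $\mathcal{L}G_{k}=G_{l}$ we get $\mathcal{L}N_{k}\mathcal{L}^{-1}=N_{l}$, so $\mathcal{L}S_{k}\mathcal{L}^{-1}$ is a cyclic order-$(p^{n}-1)$ subgroup of $N_{l}$; I would then analyse $N_{l}$ closely enough to show that $\mathcal{L}S_{k}\mathcal{L}^{-1}$ and $S_{l}$ are conjugate in $N_{l}$, so that after modifying $\mathcal{L}$ by an element of $N_{l}$ one may assume $\mathcal{L}m^{(k)}_{t}\mathcal{L}^{-1}=m^{(l)}_{\sigma(t)}$ for a fixed automorphism $\sigma(t)=t^{c}$ of $\mathbb{F}_{p^{n}}^{*}$ with $\gcd(c,p^{n}-1)=1$. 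Comparing eigenvalue multisets over $\overline{\mathbb{F}_{p}}$ --- that of $m^{(k)}_{t}$ being $\{t^{p^{i}}:0\le i<n\}\cup\{t^{kp^{i}}:0\le i<n\}$ and that of $m^{(l)}_{\sigma(t)}$ being $\{t^{cp^{i}}\}\cup\{t^{clp^{i}}\}$ --- and specializing $t$ to a multiplicative generator yields the multiset congruence
\begin{equation*}
\{p^{i}:0\le i<n\}\cup\{kp^{i}:0\le i<n\}\;\equiv\;\{cp^{i}:0\le i<n\}\cup\{cl\,p^{i}:0\le i<n\}\pmod{p^{n}-1}.
\end{equation*}
Since the entry $1$ occurs on the left, either $c$ is a power of $p$, whence matching the remaining $\langle p\rangle$-orbit forces $k\equiv p^{a}l\pmod{p^{n}-1}$, equivalently $l\equiv p^{n-a}k\pmod{p^{n}-1}$, which is condition (1); or $cl$ is a power of $p$, whence $\gcd(l,p^{n}-1)=1$ and $k\equiv p^{b}l^{-1}\pmod{p^{n}-1}$, i.e.\ $kl\equiv p^{b}\pmod{p^{n}-1}$, which is condition (2).

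The hard part is the structural step: one must describe $N_{d}$ well enough to pin down the $N_{l}$-conjugacy class of its torus $S_{l}$, and this is delicate precisely for the ``linear-like'' exponents --- the linear and Gold-type $d$, and small $n$ --- for which $N_{d}$ is unusually large and the torus is no longer essentially unique; moreover, since $\mathcal{L}$ is only affine rather than linear, one must verify that its translation part can be absorbed, using that each $G_{d}$ is an $S_{d}$-invariant cone through the origin. The complete argument is carried out in \cite{Demp2018}.
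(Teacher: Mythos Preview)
The paper does not prove this lemma at all: it is stated as a quotation of \cite[Theorem~1]{Demp2018} and used only as a black box to check CCZ-inequivalence of the constructed exponents. So there is nothing to compare your argument against in this paper --- the authors simply cite Dempwolff.

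Your sketch of the ``if'' direction is correct and standard. Your plan for the ``only if'' direction --- transporting the multiplicative torus $S_k$ through $\mathcal{L}$ and comparing eigenvalue multisets --- is indeed the strategy Dempwolff uses, and you have correctly identified the genuine difficulty: establishing that $\mathcal{L}S_k\mathcal{L}^{-1}$ is conjugate to $S_l$ inside $N_l$ requires a case analysis of the stabilizer $N_d$, which blows up for linearized and Gold-type exponents and for small $n$. You do not carry this out but defer to \cite{Demp2018}, which is exactly what the present paper does too. As a self-contained proof your write-up therefore has a gap at the conjugacy step, but as a summary of the argument with a pointer to the source it matches the paper's own treatment.
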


To investigate the solutions of a system of polynomial equations, the resultant of two polynomials is needed.

\begin{define}\emph{({\rm \cite[p.36]{LN97}})}
Let $q=p^r$, where $p$ is a prime and $r$ is a positive integer. Let $f(x)=a_{0}x^{n}+a_{1}x^{n-1}+\cdots +a_{n}\in \mathbb{F}_q[x]$ and $g(x)=b_{0}x^{m}+b_{1}x^{m-1}+\cdots+ b_{m}\in \mathbb{F}_q[x]$ be
two polynomials of  degree $n$ and $m$ respectively,  where $n, m \in \mathbb{N}$. Then the resultant $Res(f, g)$ of the two polynomials is defined by the determinant
\begin{eqnarray*}
Res(f, g)=
\left|\begin{array}{cccccccc}
   a_{0} &    a_{1}    & \cdots &   a_{n}  & 0  &    &\cdots & 0   \\
    0  &  a_{0} &    a_{1}    & \cdots &   a_{n}  & 0  & \cdots & 0 \\
    \vdots  &    &        &      &     &      &          &  \vdots    \\
    0  &  \cdots &   0   &  a_{0} &    a_{1}  &     &  \cdots &   a_{n} \\
    b_{0} &    b_{1}    & \cdots &      & b_{m}  & 0  & \cdots &  0     \\
      0   &    b_{0} &   b_{1}  & \cdots &    &  b_{m}  & \cdots &  0 \\
    \vdots &          &      &     &      &     &    &\vdots       \\
    0  &   \cdots   & 0 &  b_{0} &    b_{1} &    & \cdots &  b_{m}  \\
\end{array}\right|
& \begin{array}{l}
\left.\rule{0mm}{9.80mm}\right\}$m$~rows \\
\\\left.\rule{0mm}{9.80mm}\right\}$n$~ rows
\end{array}\\[0pt]
\end{eqnarray*}
of order $m+n$.
\end{define}
If the degree of $f$ is $Deg(f) = n$  (i.e., $a_{0}\neq 0$) and $f(x)=a_{0}(x -\alpha_{1})(x -\alpha_{2})\cdots (x -\alpha_{n})$ in
the splitting field of $f$ over $\mathbb{F}_q$, then $Res(f, g)$ is also given by the formula
\begin{eqnarray*}
Res(f, g)=a_{0}^{m}\prod_{i=1}^n g(\alpha_{i}).
\end{eqnarray*}
In this case, we have $Res(f, g) = 0$ if and only if $f$ and $g$ have a common root, which means that $f$ and $g$ have a common
divisor in $\mathbb{F}_q[x]$ of positive degree.

For  two polynomials $F(x,y),\, G(x,y)\in \mathbb{F}_q[x,y]$ of positive degree in $y$, the resultant $Res(F,G,y)$ of $F$ and $G$ with respect to $y$ is the resultant of $F$ and $G$ when considered as polynomials in the single variable $y$. 
In this case, $Res(F,G,y)\in \mathbb{F}_q[x] \cap \langle F,G\rangle$, where $\langle F,G\rangle$ is the ideal generated by $F$ and $G$. Thus any pair $(a,b)$ with $F(a,b)=G(a,b)=0$ is such that $Res(F,G,y)(a)=0$. For more information on resultants and elimination theory, the reader can refer to \cite{CLO2007}.


\section{New classes of $0$-APN power functions over $\mathbb{F}_{2^n}$}\label{third}

In this section, based on the multivariate method \cite{Dobbertin2002} and resultant elimination, we give some new classes of $0$-APN power functions over $\mathbb{F}_{2^n}$.

\subsection{The case of $n=4k$}
In this subsection, a new class of $0$-APN power functions over $\mathbb{F}_{2^n}$ is presented  for $n=4k$.

\begin{theorem}\label{th1.1}
Let $k$ be an odd integer with $n=4k$. Then
\begin{eqnarray*}
f(x)= x^{2^{2k-1}-2^{k-1}-1}
\end{eqnarray*}
is a $0$-APN function over $\mathbb{F}_{2^n}$.
\end{theorem}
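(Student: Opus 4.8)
Since $f(x)=x^{d}$ with $d=2^{2k-1}-2^{k-1}-1$ is a power function, by the criterion recalled in Section~\ref{intro} it suffices to show that
\[
(x+1)^{d}+x^{d}+1=0
\]
has no solution $x\in\mathbb{F}_{2^{n}}\setminus\{0,1\}$, where $n=4k$ and $k$ is odd. The computational key is the arithmetic identity $d+2^{k-1}+1=2^{2k-1}$ (equivalently $d+1=2^{k-1}(2^{k}-1)$): for $x\notin\{0,1\}$ both $x$ and $x+1$ are invertible, so we may multiply the equation by $x^{2^{k-1}+1}(x+1)^{2^{k-1}+1}$, expand using $(x+1)^{2^{j}}=x^{2^{j}}+1$, cancel the two highest terms that coincide, and divide by $x^{2^{k-1}+2}$. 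This yields the equivalent polynomial equation
\[
(\ast)\qquad x^{2^{2k-1}-2}+x^{d}+x^{d-1}+x^{2^{k-1}}+x^{2^{k-1}-1}+1=0,
\]
which may also be written as $x^{d-1}\bigl(x^{2^{k-1}}+x+1\bigr)=x^{2^{k-1}-1}(x+1)+1$. So the whole problem reduces to: $(\ast)$ has no root in $\mathbb{F}_{2^{4k}}$ other than $x=1$.

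To attack $(\ast)$ I would use the multivariate method together with resultant elimination, as announced in Section~\ref{third}. Put $y:=x^{2^{k-1}}$, so that $y^{2}=x^{2^{k}}$ and $y^{2^{k}}=x^{2^{2k-1}}$; substituting into $(\ast)$ and clearing the denominator turns it into
\[
F(x,y):=y^{2^{k}}+(x+1)\,y^{2^{k}-1}+x(x+1)\,y+x^{2}=0 .
\]
The defining relation $y=x^{2^{k-1}}$, iterated under the $2^{k}$‑power map, closes up into a finite system because $x^{2^{4k}}=x$; this supplies a second polynomial relation $G(x,y)=0$ linking $x$ and $y$ that is not a mere re-substitution of the first. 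I would then eliminate $y$ by forming $R(x):=\mathrm{Res}(F,G,y)\in\mathbb{F}_{2}[x]$, so that by the elimination property recalled in Section~\ref{pre} every admissible $x$ is a root of $R$.

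It then remains to analyse $R(x)$. One first strips off its extraneous factors: powers of $x$ and of $x+1$ (the latter accounting for the known trivial solution $x=1$), together with the low-degree factors — such as $x^{2^{k-1}}+x+1$ and $x^{2^{2k-2}}+x$ — that record the cases in which a denominator vanished during the reduction; each such case is treated directly and, via a $\gcd$ computation, is either vacuous in $\mathbb{F}_{2^{4k}}$ or reduces to a finite check. For the surviving core factor $R_{0}(x)$ one must prove it has no root in $\mathbb{F}_{2^{4k}}$; equivalently, $\gcd\bigl(R_{0}(x),\,x^{2^{4k}}+x\bigr)=1$. Here the hypothesis ``$k$ odd'' is essential. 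The natural route is to show that any root of $R_{0}$ already lies in the subfield $\mathbb{F}_{2^{2k}}$ (by iterating the relation $x^{d}=\bigl(y(x+1)+x\bigr)/(y+x+1)$ under the $2^{k-1}$‑power map and using $x^{2^{4k}}=x$ to force the orbit of $x$ to be short), where the assertion is exactly item~8) of Table~\ref{tab} (Qu and Li \cite{QL2022}), valid precisely because $k=m$ is odd; and this subclaim can also be seen directly, since $d\equiv-1\pmod{2^{k}-1}$, so over $\mathbb{F}_{2^{k}}$ the equation collapses to $x^{2}+x+1=0$, which has no solution when $k$ is odd.

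The main obstacle is this last step. The resultant $R(x)$ is of large degree, so identifying exactly which factors are extraneous and removing them cleanly is already delicate; and then proving that no root of the core factor $R_{0}$ lies in $\mathbb{F}_{2^{4k}}$ — that is, making the subfield-descent argument rigorous, with the parity of $k$ entering through the relevant $\gcd$'s such as $\gcd(2^{k-1}\pm1,2^{4k}-1)$ and $\gcd(k-1,4k)$ — is the real work. Finally, the smallest instance $k=3$ (so $n=12$), where $\mathbb{F}_{2^{2}}$, $\mathbb{F}_{2^{4}}$ and $\mathbb{F}_{2^{6}}$ are all subfields of $\mathbb{F}_{2^{12}}$ and could a priori contribute solutions, would be verified separately.
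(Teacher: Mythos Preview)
Your plan has a genuine gap at precisely the point you flag as the ``main obstacle,'' and the earlier set-up already contains the seed of the difficulty. With your substitution $y=x^{2^{k-1}}$, the relation $F(x,y)=y^{2^{k}}+(x+1)y^{2^{k}-1}+x(x+1)y+x^{2}$ has degree $2^{k}$ in $y$; hence any resultant $\mathrm{Res}(F,G,y)$ you form will be a polynomial in $x$ whose degree \emph{grows with $k$}. You therefore cannot carry out a single computer-algebra computation valid for all $k$, nor factor $R(x)$ ``once and for all'' into irreducibles over $\mathbb{F}_{2}$. The subsequent subfield-descent to $\mathbb{F}_{2^{2k}}$ (and the appeal to item~8) of Table~\ref{tab}) is likewise only asserted, not proved; the heuristic you give for it does not control where the roots of the $k$-dependent core factor $R_{0}$ lie.

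The paper avoids this by one small but decisive move: it first \emph{squares} your equation~(3.2), turning the exponent $2^{k-1}$ into $2^{k}$, so that every exponent is of the form $a\cdot 2^{2k}+b\cdot 2^{k}+c$ with $a,b,c\le 4$. Then it sets $y=x^{2^{k}}$, $z=y^{2^{k}}$, $u=z^{2^{k}}$ (so $u^{2^{k}}=x$ since $n=4k$), obtaining four conjugate equations each of bounded total degree, \emph{independent of $k$}. Successive resultants in $u$, then $z$, then $y$ give a single explicit polynomial in $\mathbb{F}_{2}[x]$ whose irreducible factors all have degree dividing $6$; thus any putative solution lies in $\mathbb{F}_{2^{6}}$. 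A short case analysis on $k\bmod 3$, together with the oddness of $k$ (which forces $\mathbb{F}_{2^{2}}\cap\mathbb{F}_{2^{4k}}=\mathbb{F}_{2}$), then rules out every such solution. This is what your substitution $y=x^{2^{k-1}}$ misses: it does not linearise the system to bounded degree, so the elimination never terminates in a $k$-free object.
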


\begin{proof}It suffices to prove that the equation
\begin{eqnarray}\label{eq:1.0}
(x + 1)^{2^{2k-1}-2^{k-1}-1} + x^{2^{2k-1}-2^{k-1}-1} + 1 = 0
\end{eqnarray}
has no solution in $\mathbb{F}_{2^n}\backslash\{0,1\}$. Assume that $x\neq0,1$ is a solution of Eq. \eqref{eq:1.0}. Thus Eq. \eqref{eq:1.0} can be reduced to
\begin{eqnarray}\label{eq:1.1}
x^{2^{2k-1}+2^{k-1}} + x^{2^{2k-1}+1} + x^{2^{2k-1}} + x^{2^{k}+2}  + x^{2^{k}+1} + x^{2^{k-1}+2} = 0.
\end{eqnarray}
Raising the square to Eq. \eqref{eq:1.1} leads to
\begin{eqnarray}\label{eq:1.2}
x^{2^{2k}+2^{k}} + x^{2^{2k}+2} + x^{2^{2k}} + x^{2^{k+1}+4}  + x^{2^{k+1}+2} + x^{2^{k}+4} = 0.
\end{eqnarray}
Let $y=x^{2^k}$, $z=y^{2^k}$ and $u=z^{2^k}$. Then $u^{2^k}=x$, and raising the $2^k$-th, $2^{2k}$-th and $2^{3k}$-th power to Eq. \eqref{eq:1.2} respectively gives
\begin{subequations}  \label{eq:1.3}
\begin{empheq}[left=\empheqlbrace]{align}
yz + x^2z + z + x^4y^2 + x^2y^2 + x^4y = 0,  \label{eq:1.3a} \\
zu + y^2u + u + y^4z^2 + y^2z^2 + y^4z = 0,  \label{eq:1.3b} \\
ux + z^2x + x + z^4u^2 + z^2u^2 + z^4u = 0,  \label{eq:1.3c} \\
xy + u^2y + y + u^4x^2 + u^2x^2 + u^4x = 0.  \label{eq:1.3d}
\end{empheq}
\end{subequations}
Computing the resultants of Eq. \eqref{eq:1.3b} and Eq. \eqref{eq:1.3c}, Eq. \eqref{eq:1.3b} and Eq. \eqref{eq:1.3d} with respect to $u$ respectively, we obtain
\begin{eqnarray*}  
Res_{1}(x,y,z)&=& xy^6z^2 + xy^6z + xy^4z^3 + xy^4z + xy^4 + xy^2z^3 + xy^2z^2 + xz^4 + x + y^8z^8 \nonumber \\
&& + y^8z^4 + y^6z^6 + y^6z^5 + y^4z^8 + y^4z^7 + y^4z^5 + y^2z^7 + y^2z^6 = 0,  \\   
Res_{2}(x,y,z)&=& y(y+1)(x^2y^{14}z^8 + x^2y^{14}z^4 + x^2y^{13}z^8 + x^2y^{13}z^4 + x^2y^{12}z^8 + x^2y^{12}z^4 \nonumber \\
&&+ x^2y^{11}z^8 + x^2y^{11}z^4 + x^2y^{10}z^8 + x^2y^{10}z^2 + x^2y^9z^8 + x^2y^9z^2 + x^2y^8z^8 \nonumber \\
&&+ x^2y^8z^2 + x^2y^7z^8 + x^2y^7z^2 + x^2y^6z^6 + x^2y^6z^4 + x^2y^5z^6 + x^2y^5z^4 + x^2y^4z^6 \nonumber \\
&&+ x^2y^4z^4 + x^2y^3z^6 + x^2y^3z^4 + xy^{14}z^8 +xy^{14}z^4 + xy^{13}z^8 + xy^{13}z^4 + xy^{12}z^8 \nonumber \\
&&+ xy^{12}z^4 + xy^{11}z^8 + xy^{11}z^4 + xy^{10}z^8 + xy^{10}z^4 + xy^9z^8 + xy^9z^4 + xy^8z^8 \nonumber \\
&&+ xy^8z^4 + xy^7z^8 + xy^7z^4 + xy^7 + xy^6z^4 + xy^6 + xy^5z^4 +xy^5 + xy^4z^4 + xy^4 \nonumber \\
&&+ xy^3z^4 + xy^3 + xy^2z^4 + xy^2 + xyz^4+ xy + xz^4 + x + y^{11}z^4 + y^{11}z^2 + y^{10}z^4 \nonumber \\
&&+ y^{10}z^2 + y^9z^4 + y^9z^2 + y^8z^4 + y^8z^2 + y^7z^6 + y^7 + y^6z^6 + y^6 + y^5z^6 + y^5 \nonumber \\
&&+ y^4z^6 + y^4 + y^3z^4 + y^3 + y^2z^4 + y^2 + yz^4 + y + z^4 + 1) = 0. 
\end{eqnarray*}
Note that $x,y\not\in\mathbb{F}_{2}$. Thus we compute the resultants Eq. \eqref{eq:1.3a} and $Res_{1}(x,y,z)$, Eq. \eqref{eq:1.3a} and $Res_{2}(x,y,z)/y(y+1)$ with respect to $z$, and obtain
\begin{subequations}  \label{eq:1.5}
\begin{empheq}[left=\empheqlbrace]{align}
Res_{1}(x,y) = 0, \\
Res_{2}(x,y) = 0,   
\end{empheq}
\end{subequations}
where the resultants $Res_{1}(x,y)$ and $Res_{2}(x,y)$ are listed in Appendix.
If $x\in \mathbb{F}_{2^2}$, then $x^{2^{2k}}=x$ and $x^{2^k}=x^2$ for $k$ being odd. Thereby we derive from Eq. \eqref{eq:1.2} that $x^2+x=0$, which means $x\in \mathbb{F}_{2}$, a contradiction. Thus $x\not\in \mathbb{F}_{2^2}$.

Assume that $xy + y + 1=0$. Then taking it to the $2^k$-th power derives $yz + z + 1=0$.   Plugging these two equations into Eq. \eqref{eq:1.3a}, we deduce $x^6 + x^5 + x^4 + x^3 + x^2 + x=x(x+1)(x^2 + x + 1)^2=0$. Notice that $x^2 + x + 1$ is an irreducible polynomial on $\mathbb{F}_{2}$, this yields $x\in \mathbb{F}_{2^2}$, which leads to a contradiction.  Hence $xy + y + 1\neq0$. Similarly, we can prove $xy + x + 1\neq0$.

Computing the resultant $Res_{1}(x,y)/(xy + y + 1)(xy + x + 1)$ and $Res_{2}(x,y)/(xy + y + 1)(xy + x + 1)$ with respect to $y$, and then the resultant can be decomposed into the product of some irreducible factors in $\mathbb{F}_{2}$ as
\begin{eqnarray}\label{eq:1.6}
&&  x^{352}(x + 1)^{352}(x^2 + x + 1)^{162}(x^6 + x + 1)^2(x^6 + x^3 + 1)^2(x^6 + x^4 + x^2 + x + 1)^2  \nonumber \\
&& \cdot(x^6 + x^4 + x^3 + x + 1)^2(x^6 + x^5 + 1)^2(x^6 + x^5 + x^2 + x + 1)^2(x^6 + x^5 + x^3 + x^2 + 1)^2  \nonumber \\
&& \cdot(x^6 + x^5 + x^4 + x + 1)^2 (x^6 + x^5 + x^4 + x^2 + 1)^2 =0
\end{eqnarray}
by MAGMA computation. Recall that $x\not\in \mathbb{F}_{2^2}$, we have $x^2 + x + 1\neq0$.
Suppose one of the equations $x^6 + x + 1=0$, $x^6 + x^3 + 1=0$, $x^6 + x^4 + x^2 + x + 1=0$, $x^6 + x^4 + x^3 + x + 1=0$, $x^6 + x^5 + 1=0$, $x^6 + x^5 + x^2 + x + 1=0$, $x^6 + x^5 + x^3 + x^2 + 1=0$, $x^6 + x^5 + x^4 + x + 1=0$ and $x^6 + x^5 + x^4 + x^2 + 1=0$ holds.  Then $x\in \mathbb{F}_{2^6}$.

When $k\equiv 0~({\rm mod}\ 3)$, the solutions of Eq. \eqref{eq:1.6} belong into $\mathbb{F}_{2^6}$. Then $x^{2^{2k}}=x$ and  $x^{2^k}=x^8$ for $k$ being odd. It follows from \eqref{eq:1.2} that
\begin{eqnarray*}
x^{9} + x^{3} + x + x^{20} + x^{18} + x^{12} = 0.
\end{eqnarray*}
The equation can be decomposed into the following product of irreducible factors on $\mathbb{F}_{2}$ as
\begin{eqnarray}\label{eq:1.7}
x(x + 1)(x^9 + x + 1)(x^9 + x^8 + 1) = 0.
\end{eqnarray}
The solutions of Eq. \eqref{eq:1.7} are in $\mathbb{F}_{2^9}$. Notice that $\mathbb{F}_{2^6}\cap\mathbb{F}_{2^9}=\mathbb{F}_{2^3}$.
For $x\in \mathbb{F}_{2^3}$, we have $x^{2^{2k}}=x$ and  $x^{2^k}=x$. Hence we deduce from \eqref{eq:1.2} that
\begin{eqnarray*}
x^{2} + x^{3} + x + x^{6} + x^{4} + x^{5} = x(x+1)(x^2 + x + 1)^2=0.
\end{eqnarray*}
It can be checked that $x^2 + x + 1$ is irreducible on $\mathbb{F}_{2}$. Then we obtain $x\in \mathbb{F}_{2^2}$, a contradiction.

When $k\not\equiv 0~({\rm mod}\ 3)$, the solutions of Eq. \eqref{eq:1.6} belong into $\mathbb{F}_{2^6}\cap\mathbb{F}_{2^n}=\mathbb{F}_{2^2}$.
This is a contradiction. Therefore, Eq. \eqref{eq:1.0} has no solution in $\mathbb{F}_{2^n}\backslash\{0,1\}$.
This completes the proof.
\end{proof}

\subsection{The case of $n=2k+1$}
In this subsection, four new classes of $0$-APN power functions over $\mathbb{F}_{2^n}$ are given for $n=2k+1$.

\begin{theorem}\label{th:2.1}
Let $n, k$ be positive integers with $n=2k+1$. Then
\begin{eqnarray*}
f(x)= x^{2^{2k-1}+2^{k}+1}
\end{eqnarray*}
is a $0$-APN function over $\mathbb{F}_{2^n}$.
\end{theorem}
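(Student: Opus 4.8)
The plan is to mimic the multivariate-plus-resultant strategy already used in the proof of Theorem~\ref{th1.1}. We must show that
\begin{eqnarray*}
(x+1)^{2^{2k-1}+2^{k}+1}+x^{2^{2k-1}+2^{k}+1}+1=0
\end{eqnarray*}
has no solution in $\mathbb{F}_{2^n}\setminus\{0,1\}$ when $n=2k+1$. First I would expand the left-hand side using $(x+1)^{2^j}=x^{2^j}+1$ and cancel the constant term, obtaining a sparse polynomial identity in $x$, $x^{2^k}$ and $x^{2^{2k-1}}$. Since $n=2k+1$ is odd and $2^{2k}\equiv 2^{-1}\pmod{2^n-1}$, it is natural to introduce the Frobenius substitutions $y=x^{2^k}$ and $z=x^{2^{2k}}$; note $z^{2^{k}}=x^{2^{3k}}=x^{2^{k}}\cdot x^{2^{3k}-2^{k}}$ and more usefully $y^{2^{k}}=x^{2^{2k}}=z$ while $z^{2^{k+1}}=x^{2^{3k+2}}=x^{2^{k}\cdot 2^{2k+1}}\cdot\ldots$; concretely, because $3k\equiv k-1\pmod n$ and $4k\equiv 2k-1\pmod n$, one gets a closed system: applying the $2^{k}$-th and $2^{2k}$-th powers to the reduced equation yields two or three polynomial equations in $x,y,z$ with $z=y^{2^k}$, $x=z^{2^{k}}\cdot(\text{shift})$ — in fact the cyclic relations $x^{2^{k}}=y$, $y^{2^k}=z$, $z^{2^k}=x^{2}$ (since $3k=n+k-1$, so $2^{3k}\equiv 2^{k-1}$, hmm) need to be pinned down carefully, but they will be monomial relations, exactly as Eqs.~\eqref{eq:1.3a}--\eqref{eq:1.3d} were.

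Once the system in $x,y,z$ is set up, I would eliminate variables by iterated resultants: first eliminate $z$ (or whichever is the ``last'' Frobenius conjugate) between pairs of equations to get polynomials $\mathrm{Res}_i(x,y)$, then eliminate $y$ to get a univariate polynomial $R(x)\in\mathbb{F}_2[x]$. Before each elimination I would strip off the ``trivial'' factors — those corresponding to $x\in\mathbb{F}_2$, to $x$ lying in a small subfield forced by the Frobenius relations, or to low-degree curves like $xy+x+1=0$, $xy+y+1=0$ — checking in each case directly from the reduced equation that such $x$ forces $x\in\mathbb{F}_2$ or a contradiction, just as in the proof of Theorem~\ref{th1.1}. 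The final polynomial $R(x)$ should then factor (via MAGMA) into a power of $x(x+1)$ times irreducible factors over $\mathbb{F}_2$ whose degrees $e$ I can read off; any genuine solution must lie in $\mathbb{F}_{2^e}\cap\mathbb{F}_{2^n}=\mathbb{F}_{2^{\gcd(e,n)}}$.

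The last step is the subfield argument. For each irreducible factor of degree $e$, I would compute $\gcd(e,n)$ under the divisibility constraints on $k$ (there are none beyond $n=2k+1$, so all residues of $k$ modulo small numbers must be handled, or one shows $\gcd(e,n)$ is always small); then on the subfield $\mathbb{F}_{2^{\gcd(e,n)}}$ the map $x\mapsto x^{2^k}$ becomes $x\mapsto x^{2^{(k\bmod \gcd(e,n))}}$, so the reduced equation collapses to a fixed low-degree polynomial in $x$ that I can factor explicitly and show has only roots in $\mathbb{F}_2$ (typically it becomes something like $x(x+1)(x^2+x+1)^j=0$, and $x^2+x+1$ being irreducible over $\mathbb{F}_2$ forces $x\in\mathbb{F}_4$, which is then excluded because $4\nmid n$). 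I expect the main obstacle to be exactly this bookkeeping: first, correctly deriving the monomial Frobenius relations among $x,y,z$ for the specific exponent $2^{2k-1}+2^{k}+1$ (a sign/shift error here propagates through everything), and second, organizing the case analysis on $k\bmod d$ for the various subfield degrees $d$ that appear, so that every branch terminates in the contradiction $x\in\mathbb{F}_2$. The resultant computations themselves are routine for MAGMA but produce large intermediate polynomials, so identifying and dividing out the spurious factors at the right moment is essential to keep the final factorization tractable.
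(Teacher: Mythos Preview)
Your overall plan---reduce to a sparse polynomial equation, introduce Frobenius conjugates as new variables, eliminate via resultants, then finish with a subfield argument---is exactly the paper's strategy. However, you are making the setup harder than necessary, and your stated Frobenius relation is wrong.

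The paper does \emph{not} introduce three variables $x,y,z$. Instead it first raises the expanded equation to the fourth power. Since $n=2k+1$, one has $4\cdot 2^{2k-1}=2^{2k+1}=2^{n}\equiv 1\pmod{2^n-1}$, so after this step the only nontrivial Frobenius power left is $2^{k+2}$. Setting $y=x^{2^{k+1}}$ gives $x^{2^{k+2}}=y^{2}$ and, crucially, $y^{2^{k+1}}=x^{2^{2k+2}}=x^{2}$, so the Frobenius orbit closes after \emph{two} steps. One obtains a genuine two-variable system
\[
xy^{2}+x^{5}+x^{4}y^{2}+x+y^{2}+x^{4}=0,\qquad yx^{4}+y^{5}+y^{4}x^{4}+y+x^{4}+y^{4}=0,
\]
and a single resultant in $y$ produces a univariate polynomial over $\mathbb{F}_{2}$ that factors as $x(x+1)$ times six irreducible quintics. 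All nontrivial roots lie in $\mathbb{F}_{2^{5}}$, and a short case split on $k\bmod 5$ finishes the subfield argument.

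By contrast, your relation $z^{2^{k}}=x^{2}$ is incorrect: with $z=x^{2^{2k}}$ one has $z^{2^{k}}=x^{2^{3k}}$ and $3k\equiv k-1\pmod{n}$, so $z^{2^{k}}=x^{2^{k-1}}$ (equivalently $z^{2}=x$). You noticed this yourself (``hmm''), and it means your three-variable cycle does not close the way you wrote. One can repair it---using $z^{2}=x$ collapses the system back to two variables anyway---but the clean route is the paper's: square twice first so that the awkward exponent $2^{2k-1}$ disappears entirely, then work with the single substitution $y=x^{2^{k+1}}$.
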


\begin{proof}
We need to prove that the equation
\begin{eqnarray}\label{1.0}
f(x + 1) + f(x) + 1 = (x + 1)^{2^{2k-1}+2^{k}+1} + x^{2^{2k-1}+2^{k}+1} + 1 = 0
\end{eqnarray}
has no solution in $\mathbb{F}_{2^n}\backslash\{0,1\}$. Assume that $x$ is a solution of Eq. \eqref{1.0} with $x\neq0,1$. Eq. \eqref{1.0} becomes
\begin{eqnarray}\label{1.1}
x^{2^{2k-1}+2^{k}} + x^{2^{2k-1}+1} + x^{2^{k}+1} + x^{2^{2k-1}} + x^{2^{k}} + x = 0.
\end{eqnarray}
Raising the fourth power to Eq. \eqref{1.1} gives
\begin{eqnarray}\label{1.2}
x^{1+2^{k+2}} + x^{5} + x^{2^{k+2}+4} + x + x^{2^{k+2}} + x^4 = 0.
\end{eqnarray}
Let $y=x^{2^{k+1}}$. Then $y^{2^{k+1}}=x^2$.  Eq. \eqref{1.2} can be written as
\begin{eqnarray}\label{1.3}
xy^2 + x^{5} + x^{4}y^2 + x + y^2 + x^4 = 0.
\end{eqnarray}
Raising the $2^{k+1}$-th power to Eq. \eqref{1.3} gives
\begin{eqnarray}\label{1.4}
yx^4 + y^{5} + y^{4}x^4 + y + x^4 + y^4 = 0.
\end{eqnarray}
Computing the resultant of Eq. \eqref{1.3} and Eq. \eqref{1.4} with respect to $y$, by MAGMA computation, and then the resultant can be decomposed into the product of some irreducible factors in $\mathbb{F}_{2}$ as
\begin{eqnarray}\label{1.5}
&& x(x + 1)(x^5 + x^2 + 1)(x^5 + x^3 + 1)(x^5 + x^3 + x^2 + x + 1)(x^5 + x^4 + x^2 + x + 1) \nonumber \\
&& (x^5 + x^4 + x^3 + x + 1)(x^5 + x^4 + x^3 + x^2 + 1)=0.
\end{eqnarray}
Observe that $x\neq0,1$, we have $x^5 + x^2 + 1=0$, $x^5 + x^3 + 1=0$, $x^5 + x^3 + x^2 + x + 1=0$, $x^5 + x^4 + x^2 + x + 1=0$, $x^5 + x^4 + x^3 + x + 1=0$ or $x^5 + x^4 + x^3 + x^2 + 1=0$.
Suppose that one of the five equations holds. Then $x\in \mathbb{F}_{2^5}$.

When $k\not\equiv 2~({\rm mod}\ 5)$, we get $5\nmid n$. It leads to $\mathbb{F}_{2^5}\cap\mathbb{F}_{2^n}=\mathbb{F}_{2}$, which contradicts  with $x\neq0, 1$.  Therefore, Eq. \eqref{1.0} has no solution in $\mathbb{F}_{2^n}\backslash\{0,1\}$.

When $k\equiv 2~({\rm mod}\ 5)$, the solutions of Eq. \eqref{1.5} belong into $\mathbb{F}_{2^5}$. Note that $k+3\equiv 0~({\rm mod}\ 5)$. Raising the square to Eq. \eqref{1.2} derives
\begin{eqnarray*}
x^{2+2^{k+3}} + x^{10} + x^{2^{k+3}+8} + x^2 + x^{2^{k+3}} + x^8 = 0.
\end{eqnarray*}
Since $x\in\mathbb{F}_{2^5}$, the equation can be written as
\begin{eqnarray*}
x^{3} + x^{10} + x^{9} + x^2 + x + x^8 = 0.
\end{eqnarray*}
The equation can be decomposed into the product of irreducible factors in $\mathbb{F}_{2}$ as
\begin{eqnarray}\label{1.7}
x(x + 1)(x^2 + x + 1)(x^3 + x + 1)(x^3 + x^2 + 1) = 0.
\end{eqnarray}
The solutions of Eq. \eqref{1.7} are in $\mathbb{F}_{2^2}$ or $\mathbb{F}_{2^3}$. Notice that $\mathbb{F}_{2^2}\cap\mathbb{F}_{2^5}=\mathbb{F}_{2}$ and  $\mathbb{F}_{2^3}\cap\mathbb{F}_{2^5}=\mathbb{F}_{2}$. Hence we obtain that $x=0, 1$ are the solutions of Eq. \eqref{1.0}, which contradicts with $x\neq0, 1$.
Therefore, Eq. \eqref{1.0} has no solution in $\mathbb{F}_{2^n}\backslash\{0,1\}$.
This completes the proof.
\end{proof}


\begin{theorem}\label{th:2.2}
Let $n$ and $k$ be positive integers with $k\not\equiv 1~({\rm mod}\ 3)$ and $n=2k+1$. Then
\begin{eqnarray*}
f(x)= x^{2^{2k}+2^{k+1}+1}
\end{eqnarray*}
is a $0$-APN function over $\mathbb{F}_{2^n}$.
\end{theorem}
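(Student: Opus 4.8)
The plan is to argue exactly as in Theorems~\ref{th1.1} and \ref{th:2.1}: show that
\[
(x+1)^{2^{2k}+2^{k+1}+1} + x^{2^{2k}+2^{k+1}+1} + 1 = 0
\]
has no solution in $\mathbb{F}_{2^{n}}\setminus\{0,1\}$. First note that $k=1$ is excluded by the hypothesis $k\not\equiv 1\ ({\rm mod}\ 3)$ — and must be, since there $2^{2k}+2^{k+1}+1\equiv 2\ ({\rm mod}\ 2^{n}-1)$, i.e.\ $f$ is the linear map $x^{2}$, which is not $0$-APN. So we may assume $k\ge 2$, and then $2^{2k},2^{k+1},1$ have pairwise disjoint binary supports, whence $(x+1)^{2^{2k}+2^{k+1}+1}=(x^{2^{2k}}+1)(x^{2^{k+1}}+1)(x+1)$. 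Expanding and cancelling the term $x^{2^{2k}+2^{k+1}+1}$ and the two constants reduces the equation to
\[
x^{2^{2k}+2^{k+1}} + x^{2^{2k}+1} + x^{2^{k+1}+1} + x^{2^{2k}} + x^{2^{k+1}} + x = 0 .
\]

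The key manoeuvre is to square this identity and reduce the exponents modulo $2^{n}-1$, using $2^{2k+1}=2^{n}\equiv 1$; this collapses $2^{2k}$ to $1$ and gives
\[
x^{2^{k+2}+1} + x^{2^{k+2}+2} + x^{2^{k+2}} + x^{3} + x^{2} + x = 0 .
\]
Grouping the first three and the last three terms, the left side equals $x^{2^{k+2}}(x^{2}+x+1)+x(x^{2}+x+1)$ (equivalently, one may put $y=x^{2^{k+1}}$ so that $y^{2}=x^{2^{k+2}}$), hence the equation factors completely as
\[
(x^{2}+x+1)\bigl(x^{2^{k+2}}+x\bigr)=0 .
\]

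It only remains to rule out both factors. If $x^{2}+x+1=0$ then $x\in\mathbb{F}_{2^{2}}$; but $n=2k+1$ is odd, so $\mathbb{F}_{2^{2}}\cap\mathbb{F}_{2^{n}}=\mathbb{F}_{2}$ and $x\in\{0,1\}$. If $x^{2^{k+2}}=x$ then $x\in\mathbb{F}_{2^{k+2}}\cap\mathbb{F}_{2^{n}}=\mathbb{F}_{2^{\gcd(k+2,\,2k+1)}}$, and since $2(k+2)-(2k+1)=3$ we have $\gcd(k+2,2k+1)=\gcd(k+2,3)$, which equals $1$ precisely because $k\not\equiv 1\ ({\rm mod}\ 3)$; so again $x\in\mathbb{F}_{2}$. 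In either case $x\in\{0,1\}$, contradicting $x\ne 0,1$, and therefore $f$ is $0$-APN.

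The one step that requires real thought is the middle paragraph: one must square the reduced equation — rather than raise it to a higher power of $2$ — precisely so that the exponent $2^{2k}$ becomes $2^{0}=1$, and then recognise that the resulting single bivariate relation already splits into factors linear in $x^{2^{k+2}}$. This is why, unlike in Theorem~\ref{th:2.1}, there is no need to eliminate a second equation by a resultant, and no machine computation is required; everything else (the expansion, the exponent bookkeeping, and the two subfield-intersection arguments) is routine.
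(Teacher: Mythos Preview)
Your proof is correct, and in fact cleaner than the paper's. Both arguments begin identically: expand, cancel, and square using $2^{2k+1}\equiv 1$ to obtain
\[
x^{2^{k+2}+2}+x^{2^{k+2}+1}+x^{2^{k+2}}+x^{3}+x^{2}+x=0.
\]
At this point the paper sets $y=x^{2^{k+1}}$, writes the equation as a polynomial in $x,y$, adjoins the Frobenius conjugate obtained by raising to the $2^{k+1}$-th power, and eliminates $y$ via a resultant computed in MAGMA; the resulting univariate polynomial factors as $x(x+1)(x^{2}+x+1)^{4}(x^{3}+x+1)(x^{3}+x^{2}+1)$, forcing $x\in\mathbb{F}_{2^{2}}\cup\mathbb{F}_{2^{3}}$, and the subfield-intersection argument finishes. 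You instead observe that the displayed equation factors \emph{by hand} as $(x^{2}+x+1)\bigl(x^{2^{k+2}}+x\bigr)=0$ --- which is immediate once one groups terms --- and then run the same intersection argument, using $\gcd(k+2,2k+1)=\gcd(k+2,3)=1$ in place of the paper's $\gcd(3,n)=1$.

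The gain is that your route needs no second equation, no resultant, and no computer algebra; the paper's Eq.~(2.3) already factors, so its Eq.~(2.4) and the MAGMA step are redundant. Your explicit remark that $k=1$ must be (and is) excluded, because then $2^{2k}=2^{k+1}$ and the exponent collapses to $2$ modulo $2^{n}-1$, is also a nice sanity check absent from the paper.
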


\begin{proof}
It suffices to show that the equation
\begin{eqnarray}\label{2.0}
f(x + 1) + f(x) + 1 = (x + 1)^{2^{2k}+2^{k+1}+1} + x^{2^{2k}+2^{k+1}+1} + 1 = 0
\end{eqnarray}
has no solution in $\mathbb{F}_{2^n}\backslash\{0,1\}$. Eq. \eqref{2.0}  can be simplified as
\begin{eqnarray}\label{2.1}
x^{2^{2k}+2^{k+1}} + x^{2^{2k}+1} + x^{2^{k+1}+1} + x^{2^{2k}} + x^{2^{k+1}} + x = 0.
\end{eqnarray}
Raising the square to  Eq. \eqref{2.1} results in
\begin{eqnarray}\label{2.2}
x^{1+2^{k+2}} + x^{3} + x^{2^{k+2}+2} + x + x^{2^{k+2}} + x^2 = 0.
\end{eqnarray}
Let $y=x^{2^{k+1}}$. Then $y^{2^{k+1}}=x^2$. Eq. \eqref{2.2} can be written as
\begin{eqnarray}\label{2.3}
xy^2 + x^{3} + x^{2}y^2 + x + y^2 + x^2 = 0.
\end{eqnarray}
Raising the $2^{k+1}$-th power to Eq. \eqref{2.3} gives
\begin{eqnarray}\label{2.4}
yx^2 + y^{3} + y^{2}x^2 + y + x^2 + y^2 = 0.
\end{eqnarray}
Computing the resultant of Eq. \eqref{2.3} and Eq. \eqref{2.4} with respect to $y$, and then with the help of MAGMA, the resultant can be decomposed into the following product of irreducible factors in $\mathbb{F}_{2}$ as
\begin{eqnarray}\label{2.5}
 x(x + 1)(x^2 + x + 1)^4(x^3 + x + 1)(x^3 + x^2 + 1)=0.
\end{eqnarray}
Observe that $x\neq0,1$, we obtain $x^2 + x + 1=0$, $x^3 + x + 1=0$ or $x^3 + x^2 + 1=0$.
Thus the solutions of Eq. \eqref{2.5} are in $\mathbb{F}_{2^2}$ or $\mathbb{F}_{2^3}$. We remark that $\mathbb{F}_{2^2}\cap\mathbb{F}_{2^n}=\mathbb{F}_{2}$ for $n$ being odd, and $\mathbb{F}_{2^3}\cap\mathbb{F}_{2^n}=\mathbb{F}_{2}$ for $n=2k+1$ and $k\not\equiv 1~({\rm mod}\ 3)$. Hence, Eq. \eqref{2.5} has no solution when $x\neq0, 1$.
It follows that Eq. \eqref{2.0} has no solution in $\mathbb{F}_{2^n}\backslash\{0,1\}$.
We complete the proof.
\end{proof}


\begin{theorem}\label{th:2.3}
Let $n$ and $k$ be positive integers with $k\not\equiv 1~({\rm mod}\ 3)$ and $n=2k+1$. Then
\begin{eqnarray*}
f(x)= x^{2^{k+1}-2^{k-1}-1}
\end{eqnarray*}
is a $0$-APN function over $\mathbb{F}_{2^n}$.
\end{theorem}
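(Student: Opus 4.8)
Following the pattern of the two preceding theorems, the plan is to prove that the equation $(x+1)^{d}+x^{d}+1=0$, with $d=2^{k+1}-2^{k-1}-1$, has no root in $\mathbb{F}_{2^{n}}\setminus\{0,1\}$. Since $d+(2^{k-1}+1)=2^{k+1}$, for $x\neq0,1$ I would first multiply this equation by the nonzero quantity $x^{2^{k-1}+1}(x+1)^{2^{k-1}+1}$ and expand, using $(x+1)^{2^{k+1}}=x^{2^{k+1}}+1$ and $(x+1)^{2^{k-1}+1}=x^{2^{k-1}+1}+x^{2^{k-1}}+x+1$. The two resulting cancellations should collapse everything to the polynomial equation
\[
x^{2^{k+1}+2^{k-1}}+x^{2^{k+1}+1}+x^{2^{k+1}}+x^{2^{k}+2}+x^{2^{k}+1}+x^{2^{k-1}+2}=0 .
\]

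The key step is the substitution $y=x^{2^{k-1}}$, which gives $x^{2^{k}}=y^{2}$ and $x^{2^{k+1}}=y^{4}$; dividing by the nonzero element $y$, the equation becomes
\[
y^{4}+(x+1)y^{3}+x(x+1)y+x^{2}=0 .
\]
I would then verify (easily by hand, or with MAGMA) that the left side factors completely into factors linear in $y$,
\[
y^{4}+(x+1)y^{3}+x(x+1)y+x^{2}=(y^{2}+x)(y+1)(y+x),
\]
so a putative solution $x$ must annihilate one of the three factors.

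The last step treats the three branches. If $y+1=0$ then $x^{2^{k-1}}=1$, and as the $(k-1)$-fold Frobenius is injective and fixes $1$, this forces $x=1$. If $y+x=0$ then $x^{2^{k-1}}=x$, so $x\in\mathbb{F}_{2^{k-1}}\cap\mathbb{F}_{2^{n}}=\mathbb{F}_{2^{\gcd(k-1,2k+1)}}$; since $2k+1=2(k-1)+3$ we get $\gcd(k-1,2k+1)=\gcd(k-1,3)$, and the hypothesis $k\not\equiv1\pmod 3$ makes this $1$, so $x\in\mathbb{F}_{2}$. If $y^{2}+x=0$ then $x^{2^{k}}=x$, so $x\in\mathbb{F}_{2^{k}}\cap\mathbb{F}_{2^{n}}=\mathbb{F}_{2^{\gcd(k,2k+1)}}=\mathbb{F}_{2}$, because $\gcd(k,2k+1)=1$. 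In each case $x\in\{0,1\}$, contradicting the choice of $x$; hence no such $x$ exists and $f$ is $0$-APN.

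The only non-routine decision is the choice of substitution: taking $y=x^{2^{k-1}}$ keeps the $x$-degree of the reduced bivariate polynomial down to $2$ (with $y$-degree $4$), which is precisely what allows the complete splitting into $y$-linear factors; a clumsier choice would leave a higher-degree equation that one would have to handle, less transparently, by building a second relation through the automorphism $t\mapsto t^{2^{k+1}}$ (which sends $x\mapsto y^{2}$ and $y\mapsto x$ because $n=2k+1$) and eliminating $y$ by a resultant, as in Theorems \ref{th:2.1} and \ref{th:2.2}. I would also stress that the hypothesis $k\not\equiv1\pmod 3$ is consumed in exactly one place, the $y=x$ branch, and is genuinely needed: when $k\equiv1\pmod 3$ one has $3\mid n$, so $\mathbb{F}_{2^{3}}\subseteq\mathbb{F}_{2^{n}}$ and that branch really does contribute solutions.
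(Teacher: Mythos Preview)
Your proof is correct and is genuinely more elementary than the paper's. Both arguments reach the same intermediate equation
\[
x^{2^{k+1}+2^{k-1}} + x^{2^{k+1}+1} + x^{2^{k+1}} + x^{2^{k}+2} + x^{2^{k}+1} + x^{2^{k-1}+2}=0,
\]
but then diverge. The paper raises this to the fourth power, substitutes $y=x^{2^{k+1}}$ (so that $y^{2^{k}}=x$), builds a second relation by applying the Frobenius $t\mapsto t^{2^{k}}$, and eliminates $y$ via a resultant computed in MAGMA; the resulting univariate polynomial factors as $x^{17}(x+1)^{17}(x^{2}+x+1)^{4}(x^{3}+x+1)(x^{3}+x^{2}+1)$, and the proof finishes by the field-intersection arguments $\mathbb{F}_{2^{2}}\cap\mathbb{F}_{2^{n}}=\mathbb{F}_{2}$ and $\mathbb{F}_{2^{3}}\cap\mathbb{F}_{2^{n}}=\mathbb{F}_{2}$. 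Your choice $y=x^{2^{k-1}}$ keeps the $x$-degree at $2$ and lets the quartic in $y$ split completely as $(y^{2}+x)(y+1)(y+x)$, so no second relation, no resultant, and no computer algebra are needed; the three branches $x^{2^{k}}=x$, $x^{2^{k-1}}=1$, $x^{2^{k-1}}=x$ translate directly into the same subfield constraints (the cubic factors in the paper's resultant correspond exactly to your $y=x$ branch). Your approach is shorter and makes the role of the hypothesis $k\not\equiv 1\pmod 3$ more transparent; the paper's resultant method is more mechanical and generalizes uniformly to the other theorems in the section where no such clean factorization is available.
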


\begin{proof}
We need to show that the equation
\begin{eqnarray}\label{3.0}
(x + 1)^{2^{k+1}-2^{k-1}-1} + x^{2^{k+1}-2^{k-1}-1} + 1 = 0
\end{eqnarray}
has no solution in $\mathbb{F}_{2^n}\backslash\{0,1\}$. Suppose that $x\neq0,1$ is a solution of Eq. \eqref{3.0}. Then Eq. \eqref{3.0} can be simplified as
\begin{eqnarray}\label{3.1}
x^{2^{k+1}+2^{k-1}} + x^{2^{k+1}+1} + x^{2^{k+1}} + x^{2^{k}+2} + x^{2^{k}+1} + x^{2^{k-1}+2} = 0.
\end{eqnarray}
Raising the fourth power to Eq. \eqref{3.1} gives
\begin{eqnarray}\label{3.2}
x^{5\cdot2^{k+1}} + x^{4\cdot2^{k+1}+4} + x^{4\cdot2^{k+1}} + x^{2\cdot2^{k+1}+8} + x^{2\cdot2^{k+1}+4} + x^{2^{k+1}+8} = 0.
\end{eqnarray}
Let $y=x^{2^{k+1}}$. Then $y^{2^{k}}=x$.  Eq. \eqref{3.2} can be written as
\begin{eqnarray}\label{3.3}
y^5 + x^4y^{4} + y^4 + x^8y^2 + x^4y^2 + x^8y  = 0.
\end{eqnarray}
Raising the $2^{k}$-th power to Eq. \eqref{3.3} results in
\begin{eqnarray}\label{3.4}
x^5 + y^2x^{4} + x^4 + y^4x^2 + x^2y^2 + xy^4 = 0.
\end{eqnarray}
Computing the resultant of Eq. \eqref{3.3} and Eq. \eqref{3.4} with respect to $y$, and then by MAGMA,  the resultant can be decomposed into the product of irreducible factors in $\mathbb{F}_{2}$ as
\begin{eqnarray*}
 x^{17}(x + 1)^{17}(x^2 + x + 1)^4(x^3 + x + 1)(x^3 + x^2 + 1)=0.
\end{eqnarray*}
Note that $x\not\in\mathbb{F}_{2}$, we have $x^2 + x + 1=0$, $x^3 + x + 1=0$ or $x^3 + x^2 + 1=0$.

Suppose $x^2 + x + 1=0$. Then $x \in\mathbb{F}_{2^2} \cap\mathbb{F}_{2^n}=\mathbb{F}_{2}$ for $n$ being odd, which contradicts with $x\not\in\mathbb{F}_{2}$.

Suppose  $x^3 + x + 1=0$ or $x^3 + x^2 + 1=0$. Then $x\in \mathbb{F}_{2^3}$. Since $n=2k+1$ and $k\not\equiv 1~({\rm mod}\ 3)$, it can be verified that $x\in\mathbb{F}_{2^3} \cap\mathbb{F}_{2^n}=\mathbb{F}_{2}$, which leads to a contradiction. Therefore Eq. \eqref{3.0} has no solution in $\mathbb{F}_{2^n}\backslash\{0,1\}$.
The proof is completed.
\end{proof}


\begin{theorem}\label{th:2.4}
Let $n$ and $k$ be positive integers with $k\not\equiv 4~({\rm mod}\ 9)$ and $n=2k+1$. Then
\begin{eqnarray*}
f(x)= x^{2^{2k}-2^{k+1}-1}
\end{eqnarray*}
is a $0$-APN function over $\mathbb{F}_{2^n}$.
\end{theorem}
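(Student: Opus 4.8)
Following the pattern of the previous theorems, I would show that
\begin{eqnarray*}
(x+1)^{2^{2k}-2^{k+1}-1}+x^{2^{2k}-2^{k+1}-1}+1=0
\end{eqnarray*}
has no solution in $\mathbb{F}_{2^n}\backslash\{0,1\}$, where $n=2k+1$. Assume $x\neq 0,1$ is such a solution. As $x,x+1\in\mathbb{F}_{2^n}^{*}$, multiplying the equation by $x^{2^{k+1}+1}(x+1)^{2^{k+1}+1}$, using $(x+1)^{2^{j}}=x^{2^{j}}+1$, and cancelling the common terms $x^{2^{2k}+2^{k+1}+1}$ and $x^{2^{k+1}+1}$ reduces it to
\begin{eqnarray}\label{eq:pln1}
x^{2^{2k}+2^{k+1}}+x^{2^{2k}+1}+x^{2^{2k}}+x^{2^{k+2}+2}+x^{2^{k+2}+1}+x^{2^{k+1}+2}=0 .
\end{eqnarray}

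Next I would pass to the two variables $x$ and $y=x^{2^{k+1}}$. Raising \eqref{eq:pln1} to the $4$-th power and using $x^{2^{n}}=x$ (so $x^{4\cdot 2^{2k}}=x^{2^{2k+2}}=x^{2}$) gives
\begin{eqnarray*}
x^{2^{k+3}+2}+x^{6}+x^{2}+x^{2^{k+4}+8}+x^{2^{k+4}+4}+x^{2^{k+3}+8}=0 ;
\end{eqnarray*}
since $y^{2^{k}}=x^{2^{2k+1}}=x$ and $x^{2^{k+2}}=y^{2}$, $x^{2^{k+3}}=y^{4}$, $x^{2^{k+4}}=y^{8}$, this becomes
\begin{eqnarray}\label{eq:pln2}
x^{8}y^{8}+x^{4}y^{8}+x^{8}y^{4}+x^{2}y^{4}+x^{6}+x^{2}=0 .
\end{eqnarray}
Every exponent of $x$ in \eqref{eq:pln2} is even, so raising it to the $2^{k}$-th power and using $y^{2^{k}}=x$ and $(x^{2^{k}})^{2}=x^{2^{k+1}}=y$ yields a second identity
\begin{eqnarray}\label{eq:pln3}
x^{8}y^{4}+x^{8}y^{2}+x^{4}y^{4}+x^{4}y+y^{3}+y=0 .
\end{eqnarray}
Thus $(x,y)$ is a common zero of \eqref{eq:pln2} and \eqref{eq:pln3}.

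I would then eliminate $y$: compute the resultant $Res(x,y)$ of \eqref{eq:pln2} and \eqref{eq:pln3} with respect to $y$ and factor it over $\mathbb{F}_{2}$ with MAGMA (the powers of $x$ and $x+1$ arising from the leading coefficients are irrelevant, since $x\neq 0,1$). I expect the remaining irreducible factors to have degree at most $2$, degree $3$, or degree $9$, the last being exactly what forces the hypothesis $k\not\equiv 4~({\rm mod}~9)$. A degree-$1$ factor is excluded by $x\neq 0,1$; a degree-$2$ factor would put $x$ in $\mathbb{F}_{2^{2}}\cap\mathbb{F}_{2^n}=\mathbb{F}_{2}$ since $n$ is odd, impossible. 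If $x$ were a root of a degree-$9$ irreducible factor, then $x\in\mathbb{F}_{2^{9}}$ would lie in no proper subfield of $\mathbb{F}_{2^{9}}$; but $9\mid n=2k+1$ is equivalent to $k\equiv 4~({\rm mod}~9)$, so under the hypothesis $\mathbb{F}_{2^{9}}\cap\mathbb{F}_{2^n}$ is a proper subfield of $\mathbb{F}_{2^{9}}$, a contradiction. Finally, if a degree-$3$ factor appears, then $x\in\mathbb{F}_{2^{3}}$ forces $3\mid n$, i.e. $k\equiv 1~({\rm mod}~3)$, so $x^{2^{2k}}=x^{2^{k+1}}=x^{4}$ and $x^{2^{k+2}}=x$; then \eqref{eq:pln1} becomes $x^{8}+x^{6}+x^{5}+x^{4}+x^{3}+x^{2}=0$, i.e. (as $x^{8}=x$) $x(x+1)(x^{2}+x+1)^{2}=0$, and since $\mathbb{F}_{2^{2}}\cap\mathbb{F}_{2^{3}}=\mathbb{F}_{2}$ this again gives $x\in\{0,1\}$, a contradiction. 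Hence \eqref{eq:pln1}, and therefore the original equation, has no solution in $\mathbb{F}_{2^n}\backslash\{0,1\}$, so $f$ is $0$-APN.

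The main obstacle is computational rather than structural: $Res(x,y)$ has large degree, so its factorization must be obtained with a computer algebra system, and one must confirm that this resultant is not identically zero --- equivalently, that \eqref{eq:pln2} and \eqref{eq:pln3} share no common factor in $y$ over $\mathbb{F}_{2}(x)$. One also has to account for every irreducible factor produced: for each, either its splitting field meets $\mathbb{F}_{2^n}$ only in $\mathbb{F}_{2}$, or its roots fed back into \eqref{eq:pln1} (as in the degree-$3$ case) give a contradiction, since the $4$-th-power and squaring steps may introduce spurious factors. A good consistency check is to verify the statement on the tabulated instances $(d,n)=(47,7)$ and $(959,11)$.
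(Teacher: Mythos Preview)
Your proposal is correct and matches the paper's approach: the paper squares your first displayed equation (rather than raising to the fourth power) and then applies the $2^{k+1}$-th power, obtaining the pair $xy^{2}+x^{3}+x+x^{4}y^{4}+x^{2}y^{4}+x^{4}y^{2}=0$ (whose square is your second displayed equation) together with exactly your third displayed equation. The resultant factors over $\mathbb{F}_{2}$ as $x^{7}(x+1)^{7}(x^{3}+x+1)^{2}(x^{3}+x^{2}+1)^{2}(x^{9}+x+1)(x^{9}+x^{8}+1)$ --- so no degree-$2$ factor actually appears --- and the paper's case split ($k\not\equiv 1\pmod 3$ versus $k\equiv 1\pmod 3$ with $k\not\equiv 4\pmod 9$) is precisely the degree-$3$/degree-$9$ analysis you outline.
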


\begin{proof}
It suffices to show that the equation
\begin{eqnarray}\label{4.0}
(x + 1)^{2^{2k}-2^{k+1}-1} + x^{2^{2k}-2^{k+1}-1} + 1 = 0
\end{eqnarray}
has no solution in $\mathbb{F}_{2^n}\backslash\{0,1\}$. Assume that $x\neq0,1$ is a solution of Eq. \eqref{4.0}. Eq. \eqref{4.0}  can be simplified as
\begin{eqnarray}\label{4.1}
x^{2^{2k}+2^{k+1}} + x^{2^{2k}+1} + x^{2k} + x^{2^{k+2}+2} + x^{2^{k+2}+1} + x^{2^{k+1}+2} = 0.
\end{eqnarray}
Raising the square to Eq. \eqref{4.1} leads to
\begin{eqnarray}\label{4.2}
x^{2^{k+2}+1} + x^{3} + x  + x^{4\cdot2^{k+1}+4} + x^{4\cdot2^{k+1}+2} + x^{2\cdot2^{k+1}+4} = 0.
\end{eqnarray}
Let $y=x^{2^{k+1}}$. Then $y^{2^{k+1}}=x^2$, and Eq. \eqref{4.2} can be written as
\begin{eqnarray}\label{4.3}
xy^{2} + x^3 + x + x^4y^4 + x^2y^4 +x^4y^2  = 0.
\end{eqnarray}
Raising the $2^{k+1}$-th power to Eq. \eqref{4.3} gives
\begin{eqnarray}\label{4.4}
yx^4 + y^3 + y + y^4x^8 + y^2x^8 + y^4x^4 = 0.
\end{eqnarray}
Computing the resultant of Eq. \eqref{4.3} and Eq. \eqref{4.4} with respect to $y$, and then the resultant can be decomposed into the product of irreducible factors in $\mathbb{F}_{2}$ as
\begin{eqnarray}\label{4.5}
 x^{7}(x + 1)^{7}(x^3 + x + 1)^2(x^3 + x^2 + 1)^2(x^9 + x + 1)(x^9 + x^8 + 1)=0
\end{eqnarray}
by MAGMA. Note that $x\not\in\mathbb{F}_{2}$, we have $x^3 + x + 1=0$, $x^3 + x^2 + 1=0$,  $x^9 + x + 1=0$ or $x^9 + x^8 + 1=0$.
If $x^3 + x + 1=0$ or $x^3 + x^2 + 1=0$, then $x\in \mathbb{F}_{2^3}$. If $x^9 + x + 1=0$ or $x^9 + x^8 + 1=0$, then $x\in \mathbb{F}_{2^9}$.

When $k\not\equiv 1~({\rm mod}\ 3)$, we have $\gcd(3,n)=1$ and $\mathbb{F}_{2^3} \cap\mathbb{F}_{2^n}=\mathbb{F}_{2^9} \cap\mathbb{F}_{2^n}=\mathbb{F}_{2}$.
Therefore Eq. \eqref{4.0} has no solution in $\mathbb{F}_{2^n}\backslash\{0,1\}$.

When $k\equiv 1~({\rm mod}\ 3)$ and $k\not\equiv 4~({\rm mod}\ 9)$, we have $\gcd(n, 9)=3$. This means that the solutions of Eq. \eqref{4.5} belong into $\mathbb{F}_{2^3}$. Note that $k+2\equiv 0~({\rm mod}\ 3)$. We derive from  Eq. \eqref{4.1} that
$$ x^2 + x^3 + x + x^6 + x^5 + x^4 =0, $$
which can be simplified as
$$ x(x + 1)(x^2 + x + 1)^2 =0. $$
Observe that $x^2 + x + 1$ is irreducible in $\mathbb{F}_{2}$. The solutions of the above equation lie in $\mathbb{F}_{2^2}$. However, $\mathbb{F}_{2^2} \cap\mathbb{F}_{2^3}=\mathbb{F}_{2}$.
Hence, Eq. \eqref{4.0} has no solution in $\mathbb{F}_{2^n}\backslash\{0,1\}$. The proof is completed.
\end{proof}


\subsection{The case of $n=3k-1$}

In this subsection, we present four new classes of $0$-APN functions over $\mathbb{F}_{2^n}$ of $n=3k-1$.

\begin{theorem}\label{th:2.5}
Let $n$ and $k$ be positive integers with $n=3k-1$. Then
\begin{eqnarray*}
f(x)= x^{2^{2k}+2^{k+1}+1}
\end{eqnarray*}
is a $0$-APN function over $\mathbb{F}_{2^n}$.
\end{theorem}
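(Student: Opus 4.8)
The plan is to follow the multivariate resultant strategy used in the preceding proofs. It suffices to show that
\[ (x+1)^{2^{2k}+2^{k+1}+1} + x^{2^{2k}+2^{k+1}+1} + 1 = 0 \]
has no solution in $\mathbb{F}_{2^n}\backslash\{0,1\}$. The exponent here is the same as in Theorem~\ref{th:2.2}, so expanding $(x+1)^{2^{2k}+2^{k+1}+1}=(x^{2^{2k}}+1)(x^{2^{k+1}}+1)(x+1)$ over $\mathbb{F}_{2^n}$ and cancelling reduces the equation to
\[ x^{2^{2k}+2^{k+1}} + x^{2^{2k}+1} + x^{2^{k+1}+1} + x^{2^{2k}} + x^{2^{k+1}} + x = 0 , \]
which I will refer to as the reduced equation.

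Next I would use the hypothesis $n=3k-1$, which gives $x^{2^{3k}}=x^2$ for every $x\in\mathbb{F}_{2^n}$. Setting $y=x^{2^k}$ and $z=x^{2^{2k}}$ one has $x^{2^k}=y$, $y^{2^k}=z$, $z^{2^k}=x^2$ and $x^{2^{k+1}}=y^2$, so the reduced equation becomes
\[ zy^2 + zx + xy^2 + z + y^2 + x = 0 , \]
and applying the $2^k$-th and $2^{2k}$-th power maps to it yields two further relations
\[ x^2z^2 + x^2y + yz^2 + x^2 + z^2 + y = 0 , \qquad x^4y^2 + y^2z + x^4z + y^2 + x^4 + z = 0 . \]
I would then eliminate $z$ by taking two of the three pairwise resultants with respect to $z$, obtaining polynomials $R_1(x,y)$ and $R_2(x,y)$, and finally eliminate $y$ by computing $Res(R_1,R_2,y)$; factoring this univariate polynomial over $\mathbb{F}_2$ with MAGMA should express it as $x^a(x+1)^a$ times a product of low-degree irreducibles (for the same exponent in Theorem~\ref{th:2.2} only degrees $2$ and $3$ appeared). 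Before each elimination step one must dispose of the degenerate cases in which the relevant leading coefficient vanishes; for instance the coefficient of $z$ in the first relation is $y^2+x+1$, and if it vanishes the relation forces $x^2+x+1=0$, i.e.\ $x\in\mathbb{F}_{2^2}$, a case treated below.

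It then remains to show that no irreducible factor of the univariate polynomial can have a root in $\mathbb{F}_{2^n}\backslash\{0,1\}$. Since $n=3k-1\equiv 2~({\rm mod}\ 3)$, we have $\gcd(3,n)=\gcd(9,n)=1$, so $\mathbb{F}_{2^3}\cap\mathbb{F}_{2^n}=\mathbb{F}_{2^9}\cap\mathbb{F}_{2^n}=\mathbb{F}_2$, which rules out any factor of degree $3$ or $9$. For a factor of degree $2$ (and likewise $6$, since $\mathbb{F}_{2^6}\cap\mathbb{F}_{2^n}\subseteq\mathbb{F}_{2^2}$): if $k$ is even then $n$ is odd and $\mathbb{F}_{2^2}\cap\mathbb{F}_{2^n}=\mathbb{F}_2$; if $k$ is odd and $x\in\mathbb{F}_{2^2}$, then $x^{2^{2k}}=x$ and $x^{2^{k+1}}=x$ (as $2k$ and $k+1$ are even), so the reduced equation becomes $x^2+x=0$, again forcing $x\in\mathbb{F}_2$. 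In every case $x\in\{0,1\}$, a contradiction; hence the original equation has no solution in $\mathbb{F}_{2^n}\backslash\{0,1\}$.

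I expect the main obstacle to be the resultant computation together with the bookkeeping of the degenerate cases: the three relations are only quadratic in each of $x,y,z$, but the two successive eliminations can inflate the degree of the final univariate polynomial considerably, and---as in the proof of Theorem~\ref{th1.1}---one must check at each stage that every spurious factor split off by a vanishing leading coefficient reduces to the already-excluded possibilities $x\in\mathbb{F}_{2^2}\cup\mathbb{F}_{2^3}$. Everything downstream of the MAGMA factorization is the routine $\gcd$-of-degrees argument sketched above.
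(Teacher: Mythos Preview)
Your approach is correct and follows essentially the same multivariate-resultant strategy as the paper, but the paper's execution is more economical. Because all three relations you write down are \emph{linear} in $z$, the paper keeps only two of them---the original $y^2z+xz+xy^2+z+y^2+x=0$ and its $2^{2k}$-th power $x^4y^2+zy^2+zx^4+y^2+x^4+z=0$---and a single resultant in $z$ already factors completely as
\[
x(x+1)(x^2+x+1)(y^2+y+1)^2=0.
\]
Thus no second elimination in $y$ is required: one is immediately reduced to $x^2+x+1=0$ or $y^2+y+1=0$, and since $y=x^{2^k}$ the latter is equivalent to $x\in\mathbb{F}_{2^2}$ as well. Your two-step elimination plan would work, but is unnecessary here. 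On the other hand, your treatment of the residual case $x\in\mathbb{F}_{2^2}$ with $k$ odd---substituting $x^{2^{2k}}=x$, $x^{2^{k+1}}=x$ into the reduced equation to obtain $x^2+x=0$---is a bit tidier than the paper's argument, which plugs a primitive cube root of unity into the original equation and derives the contradiction $1=0$.
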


\begin{proof}
We need to prove that the equation
\begin{eqnarray}\label{5.0}
(x + 1)^{2^{2k}+2^{k+1}+1} + x^{2^{2k}+2^{k+1}+1} + 1 = 0
\end{eqnarray}
has no solution in $\mathbb{F}_{2^n}\backslash\{0,1\}$. Eq. \eqref{5.0}  can be written as
\begin{eqnarray}\label{5.1}
x^{2^{2k}+2^{k+1}} + x^{2^{2k}+1} + x^{2^{k+1}+1} + x^{2^{2k}} + x^{2^{k+1}} + x = 0.
\end{eqnarray}
Let $y=x^{2^k}$ and $z=y^{2^k}$. Then $z^{2^k}=x^2$, and Eq. \eqref{5.1} turns into
\begin{eqnarray}\label{5.3}
y^2z + xz + xy^2 + z + y^2 + x = 0.
\end{eqnarray}
Raising the $2^{2k}$-th power to  Eq. \eqref{5.3} leads to
\begin{eqnarray}\label{5.5}
x^4y^2 + zy^{2} + zx^4 + y^2 + x^4 + z = 0.
\end{eqnarray}
Computing the resultant of Eq. \eqref{5.3} and Eq. \eqref{5.5} with respect to $z$, and then decomposing the resultant into
\begin{eqnarray*}
 x(x + 1)(x^2 + x + 1)(y^2 + y + 1)^2=0
\end{eqnarray*}
with the help of MAGMA. For $x\not\in\mathbb{F}_{2}$, we have $x^2 + x + 1=0$ or $y^2 + y + 1=0$. Obviously, the polynomial $x^2 + x + 1 $ is irreducible on $\mathbb{F}_{2}$.

Suppose that $x^2 + x + 1=0$. If $k$ is even, then $x \in\mathbb{F}_{2^2} \cap\mathbb{F}_{2^n}=\mathbb{F}_{2}$, a contradiction.
If $k$ is odd, then $x \in\mathbb{F}_{2^2}\backslash\{0,1\}$. Let $\omega\in\mathbb{F}_{2^2}\backslash\{0,1\}$ and $\omega^3=1$. Plugging $x=\omega$ into Eq. \eqref{5.0} derives $1=0$ for $k$ being odd, which is impossible. Similarly, we can prove $y^2 + y + 1\neq0$. Therefore, Eq. \eqref{5.0} has no solution in $\mathbb{F}_{2^n}\backslash\{0,1\}$.
The proof is completed.
\end{proof}

\begin{theorem}\label{th:2.6}
Let $n$ be an integer and $k$ be even with $n=3k-1$. Then
\begin{eqnarray*}
f(x)= x^{2^{2k+1}+2^{k+1}+1}
\end{eqnarray*}
is a $0$-APN function over $\mathbb{F}_{2^n}$.
\end{theorem}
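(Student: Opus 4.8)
The plan is to follow exactly the template used for Theorems \ref{th:2.1}--\ref{th:2.5}: reduce the $0$-APN condition to showing that a low-degree polynomial equation has no solution in $\mathbb{F}_{2^n}\setminus\{0,1\}$, then eliminate variables via resultants and examine the resulting univariate factors. First I would write out $f(x+1)+f(x)+1=0$ with $d=2^{2k+1}+2^{k+1}+1$ and expand the binomial, cancelling the highest term $x^d$ against $(x+1)^d$; since $d=2\cdot 2^{2k}+2\cdot 2^k+1$, the exponents involved are sums of at most three powers of two, so the expanded equation will have a bounded number of monomials. I would then raise to a suitable $2^j$-th power to clear the odd middle bit (as in the passage from \eqref{5.0} to \eqref{5.3}, where the fourth or second power is taken so that all exponents become even and the substitution $y=x^{2^k}$, $z=y^{2^k}$ is legitimate). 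With $n=3k-1$ we have $z^{2^k}=x^{2^{3k}}=x^{2^{n+1}}=x^2$, so the Frobenius orbit closes after three steps, giving a clean two-variable system in $x,y,z$ — one equation plus its $2^{2k}$-th (equivalently $z\mapsto x^2$) iterate.

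The core computation is then: take the resultant of the two equations with respect to $z$ (or $y$), which by the remarks in Section \ref{pre} lies in the elimination ideal, and factor it over $\mathbb{F}_2$ using MAGMA. By analogy with Theorem \ref{th:2.5} I expect the factorization to be a product of $x$, $x+1$, a few cyclotomic-type factors like $x^2+x+1$, $y^2+y+1$, and possibly small-degree irreducibles such as $x^3+x+1$, $x^3+x^2+1$; each such factor pins $x$ (or $y$) into a small subfield $\mathbb{F}_{2^e}$ with $e\in\{2,3,\dots\}$. Since $n=3k-1$ with $k$ \emph{even}, $n$ is odd, so $\gcd(n,2)=1$, hence $\mathbb{F}_{2^2}\cap\mathbb{F}_{2^n}=\mathbb{F}_2$; and $\gcd(3k-1,3)=\gcd(-1,3)=1$ always, so $\mathbb{F}_{2^3}\cap\mathbb{F}_{2^n}=\mathbb{F}_2$ too. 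These two divisibility facts should kill most factors outright, forcing $x\in\{0,1\}$, contrary to assumption.

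The step I expect to be the main obstacle is handling the $x^2+x+1=0$ (and symmetrically $y^2+y+1=0$) case, since a root $\omega$ of $x^2+x+1$ lies in $\mathbb{F}_{2^2}$, and although $\mathbb{F}_{2^2}\not\subseteq\mathbb{F}_{2^n}$ in general, one must still rule out $\omega\in\mathbb{F}_{2^n}$ carefully — but here $n$ odd gives $\mathbb{F}_{2^2}\cap\mathbb{F}_{2^n}=\mathbb{F}_2$ immediately, so actually this case is benign, unlike in Theorem \ref{th:2.5} where $k$ odd forced a separate substitution argument with $x=\omega$. If instead the resultant produces a factor like $x^9+x+1$ (root in $\mathbb{F}_{2^9}$), I would check $\gcd(9,3k-1)$; since $3k-1\equiv -1\pmod 3$ this gcd is $1$, so again only $\mathbb{F}_2$ survives. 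The genuine risk is that for some residue class of $k$ modulo a small number the resultant degenerates and a factor's subfield \emph{does} meet $\mathbb{F}_{2^n}$ nontrivially; in that event I would, as in Theorem \ref{th:2.4}, substitute the subfield relation $x^{2^k}=x^{2^{k \bmod e}}$ back into the degree-reduced equation \eqref{4.1}-style, obtain a new univariate equation, factor it, and iterate the intersection argument until only $x\in\mathbb{F}_2$ remains. The hypothesis that $k$ is even is presumably exactly what is needed to avoid one such bad class (it makes the $x^2+x+1$ branch collapse and may be used when reducing a Frobenius exponent like $k+1$ or $2k+1$ modulo $3$), so I would keep track of the parity of $k$ throughout the subfield reductions.
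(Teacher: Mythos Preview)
Your plan is correct and matches the paper's approach, with one small practical wrinkle. Using only the original equation and its $2^{2k}$-th Frobenius iterate gives just two equations in the three unknowns $x,y,z$; eliminating $z$ then yields a bivariate polynomial whose factors are genuinely mixed (e.g.\ $x^3+x^2y^2+xy^2+x+1$), so you cannot read off a univariate constraint the way it happened to work in Theorem~\ref{th:2.5}. The paper therefore keeps all three conjugates --- the $2^k$-th iterate as well --- eliminates $z$ from two different pairs to get two equations in $x,y$, and then takes their resultant in $y$, obtaining $x^6(x+1)^6(x^2+x+1)^{68}=0$. From there your endgame is exactly right: $k$ even makes $n=3k-1$ odd, so $\mathbb{F}_{2^2}\cap\mathbb{F}_{2^n}=\mathbb{F}_2$ and the $x^2+x+1$ branch dies immediately; no $\mathbb{F}_{2^3}$- or $\mathbb{F}_{2^9}$-type factors appear, and none of the contingency substitutions you outlined are needed.
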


\begin{proof}
We will verify that the equation
\begin{eqnarray}\label{6.0}
(x + 1)^{2^{2k+1}+2^{k+1}+1} + x^{2^{2k+1}+2^{k+1}+1} + 1 = 0
\end{eqnarray}
has no solution in $\mathbb{F}_{2^n}\backslash\{0,1\}$. Eq. \eqref{6.0}  can become
\begin{eqnarray}\label{6.1}
x^{2^{2k+1}+2^{k+1}} + x^{2^{2k+1}+1} + x^{2^{k+1}+1} + x^{2^{2k+1}} + x^{2^{k+1}} + x = 0.
\end{eqnarray}
Let $y=x^{2^k}$ and $z=y^{2^k}$. Then $z^{2^k}=x^2$, and Eq. \eqref{6.1} can be written as
\begin{eqnarray}\label{6.2}
y^2z^2 + xz^2 + xy^2 + z^2 + y^2 + x = 0.
\end{eqnarray}
Raising the $2^{k}$-th power to Eq. \eqref{6.2} gives
\begin{eqnarray}\label{6.3}
z^2x^4 + yx^4 + yz^2 + x^4 + z^2 + y = 0.
\end{eqnarray}
Raising the $2^{k}$-th power to Eq. \eqref{6.3} obtains
\begin{eqnarray}\label{6.4}
x^4y^4 + zy^4 + zx^4 + y^4 + x^4 + z = 0.
\end{eqnarray}
Computing the resultants of Eq. \eqref{6.2} and Eq. \eqref{6.3}, Eq. \eqref{6.2} and Eq. \eqref{6.4} with respect to $z$ respectively, we obtain
\begin{subequations}  \label{eq:6.5}
\begin{empheq}[left=\empheqlbrace]{align}
& (x + y + 1)^2(xy + 1)^2(xy + x + y)^2(x^2 + x + 1)^2 = 0,  \label{6.5a} \\
& (y^2 + y + 1)^2(x^3 + x^2y^2 + xy^2 + x + 1)(x^3y^2 + x^2y^2 + x^2 + x + y^2)\notag \\
& \cdot(x^3y^2 + x^3 + x^2 + xy^2 + y^2 + 1) = 0.  \label{6.5b}
\end{empheq}
\end{subequations}
Computing the resultant of Eq. \eqref{6.5a} and Eq. \eqref{6.5b} with respect to $y$, and then decomposing it into
\begin{eqnarray*}
 x^6(x + 1)^6(x^2 + x + 1)^{68}=0.
\end{eqnarray*}
It is clear that the polynomial $x^2 + x + 1$ is irreducible on $\mathbb{F}_{2}$. Suppose that $x^2 + x + 1=0$. Then $x \in\mathbb{F}_{2^2}$. However, $\mathbb{F}_{2^2}\cap\mathbb{F}_{2^n}=\mathbb{F}_{2}$ for $k$ being even, which is a contradiction. Therefore, $x^2 + x + 1\neq0$. Thereby Eq. \eqref{6.0} has no solution in $\mathbb{F}_{2^n}\backslash\{0,1\}$.
The proof is completed.
\end{proof}

\begin{theorem}\label{th:2.7}
Let $n$ be an integer and $k$ be even with $n=3k-1$. Then
\begin{eqnarray*}
f(x)= x^{2^{2k+1}+2^{k}+1}
\end{eqnarray*}
is a $0$-APN function over $\mathbb{F}_{2^n}$.
\end{theorem}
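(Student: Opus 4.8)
The plan is to follow the same multivariate-plus-resultant strategy used in the proof of Theorem~\ref{th:2.6}, since the exponent $2^{2k+1}+2^{k}+1$ differs from $2^{2k+1}+2^{k+1}+1$ only in the middle term. First I would reduce the defining equation $(x+1)^{2^{2k+1}+2^{k}+1}+x^{2^{2k+1}+2^{k}+1}+1=0$ by expanding the binomial powers and cancelling, obtaining a sparse equation in the three Frobenius-conjugate variables. Concretely, since $n=3k-1$, set $y=x^{2^k}$, $z=y^{2^k}=x^{2^{2k}}$, so that $z^{2^k}=x^{2^{3k}}=x^{2^{n+1}}=x^{2}$. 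After raising the reduced equation to a suitable $2$-power to clear the $-1$ in the exponent $2^k$ (here the exponent $2^{2k+1}+2^k+1$ already has all nonnegative binary digits, so only mild squaring is needed to express everything in $x,y,z$ and their squares/fourth powers), I would rewrite the identity as a polynomial $F_1(x,y,z)=0$; then applying the $2^k$-th power map cyclically gives $F_2(x,y,z)=F_1(y,z,x^2)=0$ and, if needed, $F_3(x,y,z)=F_1(z,x^2,y^2)=0$ — exactly the pattern of \eqref{6.2}--\eqref{6.4}.

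Next I would eliminate $z$: compute $\mathrm{Res}(F_1,F_2,z)$ and, if a second relation is required, $\mathrm{Res}(F_1,F_3,z)$, each of which should factor (by MAGMA) into low-degree pieces in $x,y$ analogous to \eqref{6.5a}--\eqref{6.5b}, with distinguished factors like $(x^2+x+1)$, $(y^2+y+1)$ and a few cubic-in-one-variable factors. Then eliminating $y$ from the surviving factors via one more resultant should collapse everything to something of the shape $x^{a}(x+1)^{a}(x^2+x+1)^{b}=0$, as in Theorem~\ref{th:2.6}. At that point the only candidate solutions lie in $\mathbb{F}_{2^2}$; since $k$ is even, $n=3k-1$ is odd, so $\gcd(2,n)=1$ and $\mathbb{F}_{2^2}\cap\mathbb{F}_{2^n}=\mathbb{F}_2$, giving $x\in\{0,1\}$, a contradiction. (If a spurious factor $y^2+y+1$ survives, I would handle it by the substitution trick used in Theorem~\ref{th:2.5}: plug the corresponding primitive cube root into the original equation and derive $1=0$.)

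The main obstacle I anticipate is not conceptual but bookkeeping: making sure the reduction from the binomial identity to the sparse trinomial-type equation \eqref{6.1}-analogue is done correctly, and that after the $z$-elimination the extraneous factors (the ones coming from the resultant rather than from genuine common roots) are all accounted for — in particular checking that none of the cubic factors $x^3+\dots$ in $x,y$ can contribute a solution in $\mathbb{F}_{2^n}$ for the given congruence class of $k$. A secondary subtlety is that the resultant computations must be carried out over $\mathbb{F}_2$ (or $\mathbb{Z}$) symbolically and then the degree/field arguments applied; I would lean on the fact, recorded in the Preliminaries, that $\mathrm{Res}(F,G,y)\in\langle F,G\rangle$, so every common zero of the system is a zero of each successive resultant, which legitimizes discarding factors that are provably nonzero (e.g. $x+y+1\ne0$, $xy+1\ne0$, etc., each ruled out by substituting back into \eqref{6.2} exactly as in Theorem~\ref{th:2.6}). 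Once those nonvanishing claims are established, the final field-intersection argument closes the proof.
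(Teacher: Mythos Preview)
Your proposal follows essentially the same route as the paper: set $y=x^{2^k}$, $z=y^{2^k}$ so that $z^{2^k}=x^2$, write down the three Frobenius-conjugate equations, eliminate $z$ by two resultants, then eliminate $y$ by one more resultant, and finish with a field-intersection argument using that $n=3k-1$ is odd when $k$ is even.

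One point where your expectation is off: the final resultant does \emph{not} collapse to the shape $x^a(x+1)^a(x^2+x+1)^b$ as in Theorem~\ref{th:2.6}. In the paper's computation the last resultant factors as
\[
x^{6}(x+1)^{6}(x^2+x+1)^{3}(x^4+x+1)^{6}(x^4+x^3+1)^{6}(x^4+x^3+x^2+x+1)^{6},
\]
so degree-$4$ irreducible factors over $\mathbb{F}_2$ appear in addition to $x^2+x+1$. Your contingency plan only mentions cubic factors, but quartic ones are what actually show up. Fortunately this does not break the argument: since $k$ is even, $n=3k-1$ is odd, hence $\gcd(4,n)=1$ and $\mathbb{F}_{2^4}\cap\mathbb{F}_{2^n}=\mathbb{F}_2$ just as $\mathbb{F}_{2^2}\cap\mathbb{F}_{2^n}=\mathbb{F}_2$. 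So the field-intersection step you describe still closes the proof, but you should be prepared for (and explicitly dispose of) the $\mathbb{F}_{2^4}$ case rather than only the $\mathbb{F}_{2^2}$ case. Also, the intermediate resultants after eliminating $z$ do not factor nicely here (unlike \eqref{6.5a}--\eqref{6.5b}); the paper keeps them as raw polynomials and only factors after the final elimination of $y$.
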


\begin{proof}
It suffices to prove that the equation
\begin{eqnarray}\label{7.0}
(x + 1)^{2^{2k+1}+2^{k}+1} + x^{2^{2k+1}+2^{k}+1} + 1 = 0
\end{eqnarray}
has no solution in $\mathbb{F}_{2^n}\backslash\{0,1\}$. Eq. \eqref{7.0}  can be written as
\begin{eqnarray}\label{7.1}
x^{2^{2k+1}+2^{k}} + x^{2^{2k+1}+1} + x^{2^{k}+1} + x^{2^{2k+1}} + x^{2^{k}} + x = 0.
\end{eqnarray}
Let $y=x^{2^k}$ and $z=y^{2^k}$. Then $z^{2^k}=x^2$, and taking the $2^k$-th and $2^{2k}$-th power to Eq. \eqref{7.1} respectively, we deduce
\begin{subequations}  \label{eq:7.2}
\begin{empheq}[left=\empheqlbrace]{align}
yz^2 + xz^2 + xy + z^2 + y + x = 0,  \label{7.2a} \\
zx^4 + yx^4 + yz + x^4 + z + y = 0,  \label{7.2b} \\
x^2y^4 + zy^4 + zx^2 + y^4 + x^2 + z = 0.  \label{7.2c}
\end{empheq}
\end{subequations}
Computing the resultants of Eq. \eqref{7.2a} and Eq. \eqref{7.2b}, Eq. \eqref{7.2a} and Eq. \eqref{7.2c} with respect to $z$ respectively, we obtain
\begin{subequations}  \label{eq:7.3}
\begin{empheq}[left=\empheqlbrace]{align}
 x^9y^2 + x^9y + x^8y^3 + x^8y^2 + x^8 + xy^3 + xy + x + y^2 + y = 0,  \label{7.3a} \\
 x^5y^8 + x^5y + x^4y^9 + x^4y^8 + x^4 + xy^9 + xy + x + y^8 + y = 0.  \label{7.3b}
\end{empheq}
\end{subequations}
Computing the resultant of Eq. \eqref{7.3a} and Eq. \eqref{7.3b} with respect to $y$, and then by MAGMA computation, the resultant can be decomposed into the product of irreducible factors in $\mathbb{F}_{2}$ as
\begin{eqnarray}\label{7.4}
 x^6(x + 1)^6(x^2 + x + 1)^{3}(x^4 + x + 1)^6(x^4 + x^3 + 1)^6(x^4 + x^3 + x^2 + x + 1)^6=0.
\end{eqnarray}
Observe that $x \not\in\mathbb{F}_{2}$, we have $x^2 + x + 1=0$, $x^4 + x + 1=0$, $x^4 + x^3 + 1=0$ or $x^4 + x^3 + x^2 + x + 1=0$.
Hence the solutions of Eq. \eqref{7.4} lie in $\mathbb{F}_{2^2}$ or $\mathbb{F}_{2^4}$. But $\mathbb{F}_{2^2} \cap\mathbb{F}_{2^n}=\mathbb{F}_{2}$ and $\mathbb{F}_{2^4} \cap\mathbb{F}_{2^n}=\mathbb{F}_{2}$ for $k$ being even. It leads to $x\in\mathbb{F}_{2}$, which is a contradiction. Therefore, Eq. \eqref{7.0} has no solution in $\mathbb{F}_{2^n}\backslash\{0,1\}$. This completes the proof.
\end{proof}

\begin{theorem}\label{th:2.16}
Let $n$ be an integer and $k$ be even with $n=3k-1$. Then
\begin{eqnarray*}
f(x)= x^{3\cdot2^{2k}+1}
\end{eqnarray*}
is a $0$-APN function over $\mathbb{F}_{2^n}$.
\end{theorem}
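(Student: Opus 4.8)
The plan is to follow the same multivariate-plus-resultant strategy used for Theorems \ref{th:2.5}--\ref{th:2.7}, since $n = 3k-1$ again makes the substitution $y = x^{2^k}$, $z = y^{2^k}$ close up nicely (note $z^{2^k} = x^{2^{3k}} = x^{2^{n+1}} = x^2$). First I would reduce the defining equation $(x+1)^{3\cdot 2^{2k}+1} + x^{3\cdot 2^{2k}+1} + 1 = 0$. Writing $3\cdot 2^{2k}+1 = 2^{2k+1} + 2^{2k} + 1$, expansion gives a sum of six monomials of the form $x^{2^{2k+1}+2^{2k}}, x^{2^{2k+1}+1}, x^{2^{2k}+1}, x^{2^{2k+1}}, x^{2^{2k}}, x$, and (possibly after raising to a suitable power of $2$ to clear the exponent $2^{2k}$ down to something manageable) I would rewrite everything in terms of $x$, $y$, and $z$. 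The outcome should be a low-degree polynomial equation $P(x,y,z) = 0$, analogous to Eq. \eqref{5.3} or Eq. \eqref{6.2}.

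Next I would generate companion equations by applying the Frobenius $x \mapsto x^{2^k}$ repeatedly: this cycles $x \to y \to z \to x^2$, so from $P(x,y,z)=0$ I obtain one or two further equations $P(y,z,x^2)=0$ (i.e. $x^4$ appearing after clearing) and $P(z,x^2,y^2)=0$. Then I would eliminate variables with resultants in MAGMA: first eliminate $z$ from the pair of equations to get relation(s) in $x$ and $y$ only, then eliminate $y$ from those to get a univariate polynomial in $x$. Based on the pattern of the preceding theorems, I expect this univariate polynomial to factor over $\mathbb{F}_2$ as a product of $x^a(x+1)^a$ times cyclotomic-type irreducible factors whose roots lie in $\mathbb{F}_{2^2}$ and perhaps $\mathbb{F}_{2^4}$ or $\mathbb{F}_{2^3}$.

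The final step is the arithmetic argument ruling out those roots. Since $k$ is even, $n = 3k-1$ is odd, so $\gcd(2,n) = \gcd(n, \text{anything forcing }\mathbb{F}_{2^2}) $ gives $\mathbb{F}_{2^2} \cap \mathbb{F}_{2^n} = \mathbb{F}_2$; likewise one checks $\mathbb{F}_{2^4}\cap\mathbb{F}_{2^n} = \mathbb{F}_2$ because $4 \nmid n$ when $n$ is odd. If a factor with roots in $\mathbb{F}_{2^3}$ appears, I would need $3 \nmid n = 3k-1$, which holds automatically. If, as in Theorem \ref{th:2.5}, a factor like $x^2+x+1$ survives only for a residue class of $k$, I would substitute a primitive cube root $\omega$ directly into Eq. \eqref{7.0} and derive $1 = 0$. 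I expect the main obstacle to be purely computational: the resultants here may be bulkier than in Theorems \ref{th:2.5}--\ref{th:2.7} because the exponent $3\cdot 2^{2k}+1$ produces higher-degree monomials, so controlling the MAGMA elimination and correctly reading off the irreducible factorization — together with handling any stray low-degree factors such as $xy+1$ or $x+y+1$ by a separate direct substitution as was done for $xy+y+1$ in Theorem \ref{th1.1} — will be where the real care is needed.
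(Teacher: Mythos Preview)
Your proposal is correct and follows essentially the same route as the paper: expand $(x+1)^{2^{2k+1}+2^{2k}+1}+x^{2^{2k+1}+2^{2k}+1}+1=0$, substitute $y=x^{2^k}$, $z=x^{2^{2k}}$ (so $z^{2^k}=x^2$), take Frobenius conjugates, eliminate $z$ and then $y$ by resultants, and rule out the resulting factor $x^2+x+1$ using $n=3k-1$ odd. One simplification you will discover on execution is that the base equation involves only $x$ and $z$ (it reads $z^3+xz^2+xz+z^2+z+x=0$), so each Frobenius conjugate is already bivariate and the resultant computation is much lighter than you anticipate---the final univariate factorization is just $x(x+1)(x^2+x+1)^{53}$, with no $\mathbb{F}_{2^3}$ or $\mathbb{F}_{2^4}$ factors and no need for a direct $\omega$-substitution.
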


\begin{proof}
We will show that the equation
\begin{eqnarray}\label{16.0}
(x + 1)^{3\cdot2^{2k}+1} + x^{3\cdot2^{2k}+1} + 1 = 0
\end{eqnarray}
has no solution in $\mathbb{F}_{2^n}\backslash\{0,1\}$. We deduce from Eq. \eqref{16.0}  that
\begin{eqnarray}\label{16.1}
x^{^{3\cdot2^{2k}}} + x^{2^{2k+1}+1} + x^{2^{2k}+1} + x^{2^{2k+1}} + x^{2^{2k}} + x = 0.
\end{eqnarray}
Let $y=x^{2^k}$ and $z=y^{2^k}$. Then $z^{2^k}=x^2$, and taking the $2^k$-th and $2^{2k}$-th power to Eq. \eqref{16.1} respectively obtains
\begin{subequations}  \label{eq:16.2}
\begin{empheq}[left=\empheqlbrace]{align}
z^3 + xz^2 + xz + z^2 + z + x = 0,  \label{16.2a} \\
x^6 + yx^4 + yx^2 + x^4 + x^2 + y = 0,  \label{16.2b} \\
y^6 + zy^4 + zy^2 + y^4 + y^2 + z = 0.  \label{16.2c}
\end{empheq}
\end{subequations}
Computing the resultant of Eq. \eqref{16.2a} and Eq. \eqref{16.2c} with respect to $z$ respectively gives
\begin{eqnarray} \label{16.3}
 (y^2 + y + 1)^8(x + y^2) = 0.
\end{eqnarray}
Computing the resultant of Eq. \eqref{16.2b} and Eq. \eqref{16.3} with respect to $y$, and then decomposing it into 
\begin{eqnarray*}
 x(x + 1)(x^2 + x + 1)^{53} =0.
\end{eqnarray*}
Note that the polynomial $x^2 + x + 1$ is irreducible over $\mathbb{F}_{2}$.
Assume that $x^2 + x + 1=0$. Then $x \in\mathbb{F}_{2^2}\backslash\mathbb{F}_{2}$. However, $x \in\mathbb{F}_{2^2}\cap\mathbb{F}_{2^n}=\mathbb{F}_{2}$ for $k$ being even, which is a contradiction. Therefore, Eq. \eqref{16.0} has no solution in $\mathbb{F}_{2^n}\backslash\{0,1\}$.
This completes the proof.
\end{proof}

\subsection{The case of $n=3k$}
In this subsection, four new classes of $0$-APN power functions over $\mathbb{F}_{2^n}$ are given for $n=3k$.

\begin{theorem}\label{th:2.9}
Let $n$ and $k$ be positive integers with $k\not\equiv 0~({\rm mod}\ 3)$ and $n=3k$. Then
\begin{eqnarray*}
f(x)= x^{2^{2k-1}-2^{k}-1}
\end{eqnarray*}
is a $0$-APN function over $\mathbb{F}_{2^n}$.
\end{theorem}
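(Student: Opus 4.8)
The plan is to prove that the equation $(x+1)^d + x^d + 1 = 0$ with $d = 2^{2k-1}-2^k-1$ has no solution in $\mathbb{F}_{2^n}\backslash\{0,1\}$. Assuming $x\neq 0,1$ is such a solution, I would first multiply the equation by the nonzero quantity $x^{2^k+1}(x+1)^{2^k+1}$ and use the identities $(x+1)^{2^i}=x^{2^i}+1$; after expanding $(x+1)^{2^k+1}=x^{2^k+1}+x^{2^k}+x+1$ and cancelling the repeated monomials, this collapses to the polynomial equation
\begin{eqnarray*}
x^{2^{2k-1}+2^k}+x^{2^{2k-1}+1}+x^{2^{2k-1}}+x^{2^{k+1}+2}+x^{2^{k+1}+1}+x^{2^k+2}=0,
\end{eqnarray*}
which is the exact analogue of Eq. \eqref{eq:1.1}. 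Squaring it and putting $y=x^{2^k}$, $z=x^{2^{2k}}$ (so that $z^{2^k}=x^{2^{3k}}=x$ since $n=3k$) rewrites it as
\begin{eqnarray*}
zy^2+zx^2+z+x^4y^4+x^2y^4+x^4y^2=0,
\end{eqnarray*}
and applying the $2^k$-th and $2^{2k}$-th power maps yields a cyclic system of three such equations in $x,y,z$.

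The next step is elimination. The displayed equation is linear in $z$ with leading coefficient $x^2+y^2+1=(x+y+1)^2$, so I would first treat the degenerate branch $x+y+1=0$ separately: substituting $y=x+1$ forces $x^2y^2+y^2+x^2=(x^2+x+1)^2=0$, i.e.\ $x\in\mathbb{F}_{2^2}$, which is disposed of below, and likewise for the cyclic companions $y+z+1=0$ and $z+x+1=0$. Away from these branches I would solve for $z$, substitute into the other two equations, and compute the resultant of the two resulting polynomials in $x,y$ with respect to $y$. A MAGMA computation should factor this resultant over $\mathbb{F}_2$ into $x^a(x+1)^a$ times a product of powers of irreducibles of small degree; I expect only $x^2+x+1$, possibly the cubics $x^3+x+1$, $x^3+x^2+1$, and factors whose degree is coprime to $n$ to occur.

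Finally, for each surviving irreducible factor of degree $r$, any root lies in $\mathbb{F}_{2^r}$, and I would rule it out of $\mathbb{F}_{2^n}\backslash\mathbb{F}_{2}$. When $\gcd(r,n)=1$ this is immediate since $\mathbb{F}_{2^r}\cap\mathbb{F}_{2^n}=\mathbb{F}_{2}$; using $n=3k$ together with $k\not\equiv 0~({\rm mod}\ 3)$ handles the generic factors. The remaining factors are $x^2+x+1$ when $k$ is even (so $2\mid n$) and any cubic factor (since $3\mid n$ always); for these, as well as for the leftover case $x\in\mathbb{F}_{2^2}$ from the degenerate branches, I would substitute the root directly back into the reduced polynomial equation, using that $x^{2^k}$ equals $x$, $x^2$, or $x^4$ according to the residue of $k$ modulo $2$ or $3$ — for instance for $x=\omega$ with $\omega^2+\omega+1=0$ and $k$ even one gets $\omega+\omega^2=1\neq 0$ — and derive a contradiction in each case. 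I expect the main obstacle to be twofold: keeping the intermediate resultants small enough that MAGMA returns a usable factorization, and carrying out the residue bookkeeping that closes the cubic (and, for even $k$, the quadratic) factors, since these are exactly the subfields of $\mathbb{F}_{2^{3k}}$ that the coprimality argument cannot eliminate.
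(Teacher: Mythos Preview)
Your setup matches the paper exactly: clear denominators, square once, substitute $y=x^{2^k}$, $z=x^{2^{2k}}$, take the $2^k$- and $2^{2k}$-th powers to obtain the cyclic system, then eliminate $z$ and $y$ via resultants. The side handling of the degenerate branch $x+y+1=0$ is fine and is simply absorbed into the resultant in the paper's version.

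The genuine gap is your expectation about the shape of the final resultant. In the paper the factorization over $\mathbb{F}_2$ is
\[
x^{122}(x+1)^{122}(x^2+x+1)^{76}(x^3+x+1)(x^3+x^2+1)\,P_1(x)^3 P_2(x)^3,
\]
where $P_1,P_2$ are \emph{irreducible of degree $12$}. Since $3\mid\gcd(12,3k)$ always, these factors are never ``coprime to $n$'' and cannot be dismissed by the intersection argument you sketch; nor do they fall under your quadratic/cubic endgame. In fact the degree-$12$ factors are where essentially all the work lies. The paper splits on $k\bmod 4$: for $k$ odd one has $\mathbb{F}_{2^{12}}\cap\mathbb{F}_{2^n}=\mathbb{F}_{2^3}$ and the case collapses to the cubic analysis; for $k\equiv 2\pmod 4$ the intersection is $\mathbb{F}_{2^6}$ and one must substitute $x^{2^k}\in\{x^4,x^{16}\}$ (depending on $k\bmod 3$) back into the squared equation and refactor; for $k\equiv 0\pmod 4$ the intersection is all of $\mathbb{F}_{2^{12}}$, and the substitution produces auxiliary polynomials whose irreducible factors over $\mathbb{F}_2$ have degrees $54$ and $83$, which only then yield a contradiction via a second field-intersection argument. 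Your plan, as written, stops short of this and would stall the moment the degree-$12$ factors appear; you should budget for the full $k\bmod 12$ case split (reduced to $k\bmod 4$ crossed with $k\bmod 3$, discarding $k\equiv 0\pmod 3$) and expect two further rounds of MAGMA factorization in the $4\mid k$ branch.
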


\begin{proof}
It suffices to prove that the equation
\begin{eqnarray}\label{9.0}
(x + 1)^{2^{2k-1}-2^{k}-1} + x^{2^{2k-1}-2^{k}-1} + 1 = 0
\end{eqnarray}
has no solution in $\mathbb{F}_{2^n}\backslash\{0,1\}$. Assume that $x\neq0,1$ is a solution of Eq. \eqref{9.0}. Eq. \eqref{9.0} can be simplified as
\begin{eqnarray}\label{9.1}
x^{2^{2k-1}+2^{k}} + x^{2^{2k-1}+1} + x^{2^{2k-1}} + x^{2^{k+1}+2}  + x^{2^{k+1}+1} + x^{2^{k}+2} = 0.
\end{eqnarray}
Raising the fourth power to Eq. \eqref{9.1} leads to
\begin{eqnarray}\label{9.2}
x^{2^{2k}+2^{k+1}} + x^{2^{2k}+2} + x^{2^{2k}} + x^{2^{k+2}+4}  + x^{2^{k+2}+2} + x^{2^{k+1}+4} = 0.
\end{eqnarray}
Let $y=x^{2^k}$ and $z=y^{2^k}$. Then $z^{2^k}=x$, and raising the $2^k$-th and $2^{2k}$-th power to Eq. \eqref{9.2} respectively derives
\begin{subequations}  \label{eq:9.3}
\begin{empheq}[left=\empheqlbrace]{align}
y^2z + x^2z + z + x^4y^4 + x^2y^4 + x^4y^2 = 0,  \label{9.3a} \\
z^2x + y^2x + x + y^4z^4 + y^2z^4 + y^4z^2 = 0,  \label{9.3b} \\
x^2y + z^2y + y + z^4x^4 + z^2x^4 + z^4x^2 = 0.  \label{9.3c}
\end{empheq}
\end{subequations}
Computing the resultants of Eq. \eqref{9.3a} and Eq. \eqref{9.3b}, Eq. \eqref{9.3a} and Eq. \eqref{9.3c} with respect to $z$ respectively obtains
\begin{eqnarray*}
Res_{1}(x,y) &=& x(x+1)(xy^2 + y^2 + 1)(xy^2 + x + 1)(x^{12}y^{16} + x^{12}y^8 + x^{11}y^{12} + x^{11}y^{10} + x^{11}y^8 \nonumber \\
&&+ x^{11}y^6 + x^{10}y^{16} + x^{10}y^{14} + x^{10}y^{12} + x^{10}y^{10} + x^{10}y^8 + x^{10}y^4 + x^9y^{14} + x^9y^6 \nonumber \\
&&+ x^8y^{16} + x^8y^{12} + x^7y^{12} + x^7y^{10} + x^7y^6 + x^7y^2 + x^6y^{16} + x^6y^{14} + x^6y^{12} + x^6y^8 \nonumber \\
&&+ x^6y^4 + x^6y^2 + x^6 + x^5y^{14} + x^5y^{10} + x^5y^6 + x^5y^4 + x^4y^4 + x^4 + x^3y^{10} + x^3y^2 \nonumber \\
&&+ x^2y^{12} + x^2y^8 + x^2y^6 + x^2y^4 + x^2y^2 + x^2 + xy^{10} + xy^8 + xy^6 + xy^4 + y^8 + 1) = 0,   \\ 
Res_{2}(x,y) &=& y(y+1)(x^2y + y + 1)(x^2y + x^2 + 1)(x^{16}y^{12} + x^{16}y^{10} + x^{16}y^8 + x^{16}y^6 + x^{14}y^{10} \nonumber \\
&&+ x^{14}y^9  + x^{14}y^6 + x^{14}y^5 + x^{12}y^{11} + x^{12}y^{10} + x^{12}y^8 + x^{12}y^7 + x^{12}y^6 + x^{12}y^2 \nonumber \\
&&+ x^{10}y^{11} + x^{10}y^{10} + x^{10}y^7  + x^{10}y^5 + x^{10}y^3 + x^{10}y + x^8y^{12} + x^8y^{11} + x^8y^{10} + x^8y^6 \nonumber \\
&&+ x^8y^2 + x^8y + x^8 + x^6y^{11} + x^6y^9 + x^6y^7 + x^6y^5 + x^6y^2 + x^6y + x^4y^{10} + x^4y^6 \nonumber \\
&&+ x^4y^5 + x^4y^4 + x^4y^2 + x^4y + x^2y^7 + x^2y^6 + x^2y^3 + x^2y^2 + y^6 + y^4 + y^2 + 1) = 0.  
\end{eqnarray*}
Recall that $x,y\not\in\mathbb{F}_{2}$. Thus we compute the resultant of  $Res_{1}(x,y)/x(x+1)$ and $Res_{2}(x,y)/y(y+1)$ with respect to $y$, by MAGMA computation, and then decomposing it into the product of irreducible factors in $\mathbb{F}_{2}$ as
\begin{eqnarray}\label{9.5}
&& x^{122}(x + 1)^{122}(x^2 + x + 1)^{76}(x^3 + x + 1)(x^3 + x^2 + 1)  \nonumber \\
&& (x^{12} + x^{11} + x^8 + x^6 + x^4 + x^3 + x^2 + x + 1)^3 \nonumber \\
&& (x^{12} + x^{11} + x^{10} + x^9 + x^8 + x^6 + x^4 + x + 1)^3 =0.
\end{eqnarray}
Then we have $x^2 + x + 1=0$, $x^3 + x + 1=0$, $x^3 + x^2 + 1=0$, $x^{12} + x^{11} + x^8 + x^6 + x^4 + x^3 + x^2 + x + 1=0$ or $x^{12} + x^{11} + x^{10} + x^9 + x^8 + x^6 + x^4 + x + 1=0$.  Thereby
$x\in \mathbb{F}_{2^2}$, $x\in \mathbb{F}_{2^3}$ or $x\in \mathbb{F}_{2^{12}}$.

Assume that $x\in \mathbb{F}_{2^2}$. When $k$ is odd, we have $x\in\mathbb{F}_{2^2}\cap\mathbb{F}_{2^n}=\mathbb{F}_{2}$, which contradicts with $x\neq 0,1$. When $k$ is even, we have $x\in\mathbb{F}_{2^2}\cap\mathbb{F}_{2^n}=\mathbb{F}_{2^2}$. Then $x^{2^{2k}}=x$ and  $x^{2^k}=x$. Thus we derive from Eq. \eqref{9.2} that $x^8+x=0$. This means $x\in \mathbb{F}_{2^3}$, and $x\in\mathbb{F}_{2^2}\cap\mathbb{F}_{2^3}=\mathbb{F}_{2}$, which is a contradiction.

Assume that $x\in \mathbb{F}_{2^3}$. When $k\equiv 1~({\rm mod}\ 3)$, we have $x^{2^{2k}}=x^4$ and  $x^{2^k}=x^2$. It follows from Eq. \eqref{9.2} that
\begin{eqnarray*}
x^{6} + x^{4} + x^5 + x^3 = x^3(x+1)^3=0.
\end{eqnarray*}
Thus we have $x\in \mathbb{F}_{2}$, it is impossible since $x\neq 0,1$.
When $k\equiv 2~({\rm mod}\ 3)$, we have $x^{2^{2k}}=x^2$ and  $x^{2^k}=x^4$. We conclude from Eq. \eqref{9.2} that
\begin{eqnarray*}
x^{3} + x^{2} + x^6 + x^5 = x^2(x+1)(x^2 + x + 1)=0.
\end{eqnarray*}
Observe that $x^2 + x + 1$ is an irreducible polynomial in $\mathbb{F}_{2}$. We have $x\in \mathbb{F}_{2^2}$. It leads to $x\in\mathbb{F}_{2^2}\cap\mathbb{F}_{2^3}=\mathbb{F}_{2}$, which contradicts with $x\neq 0,1$.

Assume that $x\in \mathbb{F}_{2^{12}}$. When $k\equiv 1~({\rm mod}\ 4)$ or $k\equiv 3~({\rm mod}\ 4)$, we have $\mathbb{F}_{2^{12}}\cap\mathbb{F}_{2^n}=\mathbb{F}_{2^3}$. It means that the solutions of Eq. \eqref{9.5} is in $\mathbb{F}_{2^3}$, which is impossible since $x\not\in \mathbb{F}_{2^3}$.

When $k\equiv 2~({\rm mod}\ 4)$, we have $x\in\mathbb{F}_{2^{12}}\cap\mathbb{F}_{2^n}=\mathbb{F}_{2^6}$. If $k\equiv 1~({\rm mod}\ 3)$, then $x^{2^{2k}}=x^4$ and  $x^{2^k}=x^{16}$. We derive from Eq. \eqref{9.2} that $x^{6} + x^{4} + x^{68} + x^{66} =0$. With the help of MAGMA, we can decompose it into the product of irreducible factors in $\mathbb{F}_{2}$ as
\begin{eqnarray*}
 && x^4(x+1)^4(x^5 + x^2 + 1)^2(x^5 + x^3 + 1)^2(x^5 + x^3 + x^2 + x + 1)^2(x^5 + x^4 + x^2 + x + 1)^2    \nonumber \\
&& (x^5 + x^4 + x^3 + x + 1)^2(x^5 + x^4 + x^3 + x^2 + 1)^2=0.
\end{eqnarray*}
The solutions of the above equation are in $\mathbb{F}_{2^5}$. Then $x\in\mathbb{F}_{2^6}\cap\mathbb{F}_{2^5}=\mathbb{F}_{2}$, which contradicts with $x\neq 0,1$.
If $k\equiv 2~({\rm mod}\ 3)$, then $x^{2^{2k}}=x^{16}$ and  $x^{2^k}=x^4$. We derive from Eq. \eqref{9.2} that
\begin{eqnarray*}
x^{24} + x^{16} + x^{20} + x^{12}= x^{12}(x+1)^{12}=0,
\end{eqnarray*}
which means $x\in\mathbb{F}_{2}$. It leads to a contradiction.

When $k\equiv 0~({\rm mod}\ 4)$, we have $\mathbb{F}_{2^{12}}\cap\mathbb{F}_{2^n}=\mathbb{F}_{2^{12}}$. If $k\equiv 1~({\rm mod}\ 3)$, then $x^{2^{2k}}=x^{256}$ and  $x^{2^k}=x^{16}$. We conclude from Eq. \eqref{9.2} that $x^{288} + x^{258} + x^{256} + x^{68} + x^{66} + x^{36} =0$. With the help of MAGMA, it can be decomposed into the following product of irreducible factors in $\mathbb{F}_{2}$ as
\begin{eqnarray*}
 && x^{36}(x+1)^{36}(x^{54} + x^{53} + x^{52} + x^{51} + x^{50} + x^{49} + x^{46} + x^{45} + x^{44} + x^{40} + x^{39} + x^{38} + x^{36} \nonumber \\
&&+ x^{35} + x^{30} + x^{29} + x^{27} + x^{25} + x^{24} + x^{20} + x^{19} + x^{18} + x^{17} + x^{15} + x^{13} + x^{12} + x^{10} + x^9 \nonumber \\
&&+ x^8 + x^5 + x^4 + x^3 + x^2 + x + 1)^2(x^{54} + x^{53} + x^{52} + x^{51} + x^{50} + x^{49} + x^{46} + x^{45} + x^{44} \nonumber \\
&&+ x^{42} + x^{41} + x^{39} + x^{37} + x^{36} + x^{35} + x^{34} + x^{30} + x^{29} + x^{27} + x^{25} + x^{24} + x^{19} + x^{18} + x^{16} \nonumber \\
&&+ x^{15} + x^{14} + x^{10} + x^9 + x^8 + x^5 + x^4 + x^3 + x^2 + x + 1)^2 =0.
\end{eqnarray*}
The solutions of the equation are in $\mathbb{F}_{2^{54}}$. Observe that $\mathbb{F}_{2^{12}}\cap\mathbb{F}_{2^{54}}=\mathbb{F}_{2^6}$, we obtain $x^{2^{2k}}=x^{4}$ and  $x^{2^k}=x^{16}$. It follows from Eq. \eqref{9.2} that $x^6 + x^4 + x^5 + x^3 =x^3(x+1)^3=0$. Thus $x\in \mathbb{F}_{2}$, which is a contradiction.
If $k\equiv 2~({\rm mod}\ 3)$, then $x^{2^{2k}}=x^{16}$ and  $x^{2^k}=x^{256}$. We derive from Eq. \eqref{9.2} that  $x^{528} + x^{18} + x^{16} + x^{1028} + x^{1026} + x^{516} =0$. Similarly, by MAGMA computation, we can decompose it into the following product of irreducible factors in $\mathbb{F}_{2}$ as
\begin{eqnarray}\label{9.9}
  x^{16}(x+1)^{16}(x^{83} + \cdots + 1)^2(x^{83} + \cdots + 1)^2(x^{83} + \cdots + 1)^2(x^{83} + \cdots + 1)^2 =0,
\end{eqnarray}
where the four irreducible polynomials over $\mathbb{F}_{2}$ can be seen in Appendix. Hence we obtain that $x\in \mathbb{F}_{2^{83}}$. It leads to $x\in\mathbb{F}_{2^{12}}\cap\mathbb{F}_{2^{83}}=\mathbb{F}_{2}$,
which is a contradiction. Therefore, Eq. \eqref{9.0} has no solution in $\mathbb{F}_{2^n}\backslash\{0,1\}$. This completes the proof.
\end{proof}

\begin{theorem}\label{th:2.10}
Let $n$ be an integer and $k$ be odd with $n=3k$. Then
\begin{eqnarray*}
f(x)= x^{2^{2k-1}+2^{k}+1}
\end{eqnarray*}
is a $0$-APN function over $\mathbb{F}_{2^n}$.
\end{theorem}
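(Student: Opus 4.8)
The plan is to run the same multivariate elimination scheme used in the proofs of Theorems~\ref{th:2.5} and \ref{th:2.9}. By the observation recalled in Section~\ref{intro}, it suffices to show that
\[
(x+1)^{2^{2k-1}+2^{k}+1}+x^{2^{2k-1}+2^{k}+1}+1=0
\]
has no solution in $\mathbb{F}_{2^n}\setminus\{0,1\}$. Writing $(x+1)^{2^{2k-1}+2^{k}+1}=(x^{2^{2k-1}}+1)(x^{2^{k}}+1)(x+1)$ and cancelling, the equation reduces to
\[
x^{2^{2k-1}+2^{k}}+x^{2^{2k-1}+1}+x^{2^{k}+1}+x^{2^{2k-1}}+x^{2^{k}}+x=0 .
\]
Squaring this once removes the half-power $2^{2k-1}$; then, setting $y=x^{2^{k}}$ and $z=x^{2^{2k}}$ — so that $z^{2^{k}}=x^{2^{3k}}=x$ since $n=3k$ — one obtains the polynomial identity
\[
zy^{2}+zx^{2}+x^{2}y^{2}+z+y^{2}+x^{2}=0
\]
in $x,y,z$ over $\mathbb{F}_{2^n}$; call this equation $(\mathrm{E})$.

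Next I would apply the $2^{k}$-th and $2^{2k}$-th power maps to $(\mathrm{E})$. Since the $2^{k}$-th power cyclically permutes $(x,y,z)$, this produces two companion equations, namely $xz^{2}+xy^{2}+y^{2}z^{2}+x+z^{2}+y^{2}=0$ and $x^{2}y+z^{2}y+x^{2}z^{2}+y+x^{2}+z^{2}=0$. Computing the resultants of $(\mathrm{E})$ with each of these with respect to $z$ eliminates $z$ and leaves two polynomials in $x,y$; taking their resultant with respect to $y$ yields a univariate polynomial $R(x)\in\mathbb{F}_2[x]$, which I would factor into irreducibles with MAGMA. By analogy with Theorems~\ref{th:2.5} and \ref{th:2.9}, I expect $R(x)$ to be a product of $x$, $x+1$, a power of $x^{2}+x+1$, the two cubics $x^{3}+x+1$ and $x^{3}+x^{2}+1$, and possibly a few higher-degree irreducible factors whose roots lie in $\mathbb{F}_{2^{6}}$ or $\mathbb{F}_{2^{12}}$.

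It then remains to eliminate each surviving factor. A root of $x^{2}+x+1$ lies in $\mathbb{F}_{2^{2}}$; since $k$ is odd, $n=3k$ is odd, so $\mathbb{F}_{2^{2}}\cap\mathbb{F}_{2^n}=\mathbb{F}_{2}$ and this is impossible. The delicate case is $x\in\mathbb{F}_{2^{3}}$: because $3\mid n$, we have $\mathbb{F}_{2^{3}}\subseteq\mathbb{F}_{2^n}$, so no intersection argument is available. Here I would substitute back into the squared equation, use $x^{2^{3}}=x$ to replace $x^{2^{k}}$ and $x^{2^{2k}}$ by $x^{2^{k\bmod 3}}$ and $x^{2^{2k\bmod 3}}$, and treat the three residues $k\equiv 0,1,2\pmod 3$ separately; in each subcase the equation collapses to a polynomial of degree at most $4$ over $\mathbb{F}_{2}$ whose roots lie in $\mathbb{F}_{2}$ or $\mathbb{F}_{2^{2}}$, and $\mathbb{F}_{2^{2}}\cap\mathbb{F}_{2^{3}}=\mathbb{F}_{2}$ forces the contradiction $x\in\{0,1\}$. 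Roots landing in $\mathbb{F}_{2^{6}}$ or $\mathbb{F}_{2^{12}}$ are handled the same way: for $k$ odd one has $\gcd(n,6)\in\{3,6\}$ and $\gcd(n,12)\in\{3,6\}$ according to $k\bmod 4$, so such roots actually lie in $\mathbb{F}_{2^{3}}$ or $\mathbb{F}_{2^{6}}$, reducing to the already-treated analysis (possibly after one further MAGMA factorization over $\mathbb{F}_{2^{6}}$).

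The step I expect to be the main obstacle is exactly this last one: since $\mathbb{F}_{2^{3}}$ — and, depending on $k\bmod 4$, $\mathbb{F}_{2^{6}}$ — always embeds in $\mathbb{F}_{2^n}$, the corresponding factors of $R(x)$ cannot be dismissed by a subfield argument, and one must re-reduce the defining equation modulo the appropriate Frobenius power and carefully enumerate the cases $k\bmod 3$ (and $k\bmod 6$). The resultant computations themselves are bulky but routine and are carried out with MAGMA.
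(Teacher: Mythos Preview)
Your proposal is correct and follows essentially the same route as the paper: reduce to the trilinear equation in $(x,y,z)$, take the three Frobenius conjugates, eliminate $z$ and then $y$ by resultants, and handle the surviving irreducible factors by the subfield/substitution arguments you describe for $k$ odd and the three residues of $k\bmod 3$. In fact the situation is even cleaner than you anticipate: the intermediate resultants factor completely as $(x^{2}+x+1)(x+y^{2})(xy^{2}+y^{2}+1)(xy^{2}+x+1)$ and $(y^{2}+y+1)(x^{2}+y)(x^{2}y+y+1)(x^{2}y+x^{2}+1)$, so after stripping the quadratic factors the final univariate resultant is just $x^{3}(x+1)^{3}(x^{2}+x+1)^{9}(x^{3}+x+1)^{3}(x^{3}+x^{2}+1)^{3}$ and no $\mathbb{F}_{2^{6}}$ or $\mathbb{F}_{2^{12}}$ analysis is needed.
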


\begin{proof}
We need to prove that the equation
\begin{eqnarray}\label{10.0}
(x + 1)^{2^{2k-1}+2^{k}+1} + x^{2^{2k-1}+2^{k}+1} + 1 = 0
\end{eqnarray}
has no solution in $\mathbb{F}_{2^n}\backslash\{0,1\}$. Eq. \eqref{10.0}  can be simplified as
\begin{eqnarray}\label{10.1}
x^{2^{2k-1}+2^{k}} + x^{2^{2k-1}+1} + x^{2^{k}+1} + x^{2^{2k-1}} + x^{2^{k}} + x = 0.
\end{eqnarray}
Raising the square to Eq. \eqref{10.1} leads to
\begin{eqnarray}\label{10.2}
x^{2^{2k}+2^{k+1}} + x^{2^{2k}+2} + x^{2^{k+1}+2} + x^{2^{2k}} + x^{2^{k+1}} + x^2 = 0.
\end{eqnarray}
Let $y=x^{2^k}$ and $z=y^{2^k}$. Then $z^{2^k}=x$, and raising the $2^k$-th and $2^{2k}$-th power to Eq. \eqref{10.2} respectively we have
\begin{subequations}  \label{eq:10.3}
\begin{empheq}[left=\empheqlbrace]{align}
y^2z + x^2z + x^2y^2 + z + y^2 + x^2 = 0,  \label{10.3a} \\
z^2x + y^2x + y^2z^2 + x + z^2 + y^2 = 0,  \label{10.3b} \\
x^2y + z^2y + z^2x^2 + y + x^2 + z^2 = 0.  \label{10.3c}
\end{empheq}
\end{subequations}
Computing the resultants of Eq. \eqref{10.3a} and Eq. \eqref{10.3b}, Eq. \eqref{10.3a} and Eq. \eqref{10.3c} with respect to $z$ respectively, we obtain
\begin{subequations}  \label{eq:10.4}
\begin{empheq}[left=\empheqlbrace]{align}
(x^2 + x + 1)(x + y^2)(xy^2 + y^2 + 1)(xy^2 + x + 1) = 0,  \label{10.4a} \\
(y^2 + y + 1)(x^2 + y)(x^2y + y + 1)(x^2y + x^2 + 1) = 0.  \label{10.4b}
\end{empheq}
\end{subequations}
Suppose $x^2 + x + 1=0$. We get $x \in\mathbb{F}_{2^2}$, and thus $x \in\mathbb{F}_{2^2} \cap\mathbb{F}_{2^n}=\mathbb{F}_{2}$ for $k$ being odd, which contradicts with $x\neq0,1$.
Therefore, $x^2 + x + 1\neq0$. Similarly, it can be checked that $y^2 + y + 1\neq0$.

We compute the resultant of Eq. \eqref{10.4a}$/(x^2 + x + 1)$ and Eq. \eqref{10.4b}$/(y^2 + y + 1)$ with respect to $y$, and then decomposing it into the following product of irreducible factors over $\mathbb{F}_{2}$ as
\begin{eqnarray}\label{10.5}
 x^3(x + 1)^3(x^2 + x + 1)^{9}(x^3 + x + 1)^3(x^3 + x^2 + 1)^3 =0.
\end{eqnarray}
Notice that $x\neq0,1$, hence we have $x^2 + x + 1=0$, $x^3 + x + 1=0$ or $x^3 + x^2 + 1=0$. Suppose $x^2 + x + 1=0$. Then $x\in\mathbb{F}_{2^2}$, and hence
$x\in\mathbb{F}_{2^2}\cap\mathbb{F}_{2^n}=\mathbb{F}_{2}$, which is a contradiction.

Suppose $x^3 + x + 1=0$ or $x^3 + x^2 + 1=0$. We get $x\in\mathbb{F}_{2^3}$.
When $k\equiv 0~({\rm mod}\ 3)$, we have $x^{2^{2k}}=x$ and $x^{2^{k+1}}=x^2$. It follows from Eq. \eqref{10.2} that $$ x^4+x=0.$$
This yields $x=0$ or $x^3=1$. If $x^3=1$, from $x^3 + x + 1=0$ or $x^3 + x^2 + 1=0$ we obtain $x=0$, which is a contradiction.

When $k\equiv 1~({\rm mod}\ 3)$, we get $x^{2^{2k}}=x^4$ and $x^{2^{k+1}}=x^4$. We conclude from Eq. \eqref{10.2} that $$ x^2(x^3+1)^2=0.$$
Thus we have $x=0$ or $x^3=1$. Similarly, from $x^3 + x + 1=0$ or $x^3 + x^2 + 1=0$ we obtain $x=0$, which is impossible.

When $k\equiv 2~({\rm mod}\ 3)$, we obtain $x^{2^{2k}}=x^2$ and $x^{2^{k+1}}=x$. We deduce from Eq. \eqref{10.2} that $$ x(x^3+1)=0.$$
Thus we have $x=0$ or $x^3=1$. Similarly, we only obtain a solution $x=0$. It leads to a contradiction.
Hence, Eq. \eqref{10.0} has no solution in $\mathbb{F}_{2^n}\backslash\{0,1\}$.
The proof is completed.
\end{proof}

\begin{theorem}\label{th:2.11}
Let $n$ be a positive integer and $k$ be odd with $n=3k$. Then
\begin{eqnarray*}
f(x)= x^{2^{2k}-2^{k+1}-1}
\end{eqnarray*}
is a $0$-APN function over $\mathbb{F}_{2^n}$.
\end{theorem}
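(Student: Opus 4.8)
The plan is to follow the multivariate-plus-resultant method used throughout this section, specialized to $n=3k$ with $k$ odd, reusing in part a computation already carried out for Theorem~\ref{th:2.4}. By the criterion recalled in Section~\ref{intro}, it suffices to prove that
\begin{eqnarray*}
(x+1)^{2^{2k}-2^{k+1}-1}+x^{2^{2k}-2^{k+1}-1}+1=0
\end{eqnarray*}
has no solution $x\in\mathbb{F}_{2^n}\backslash\{0,1\}$. Since the exponent coincides with that of Theorem~\ref{th:2.4}, the same formal clearing of denominators — multiply by $x^{2^{k+1}+1}(x+1)^{2^{k+1}+1}$ and use $(x+1)^{2^{j}}=x^{2^{j}}+1$ — reduces this equation to Eq.~\eqref{4.1}, a step that does not involve $n$. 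What changes is the substitution: because $n=3k$, I would put $y=x^{2^{k}}$ and $z=y^{2^{k}}=x^{2^{2k}}$, so that $z^{2^{k}}=x$. Using $x^{2^{2k}}=z$, $x^{2^{k+1}}=y^{2}$ and $x^{2^{k+2}}=y^{4}$, Eq.~\eqref{4.1} turns into the polynomial equation $zy^{2}+xz+z+x^{2}y^{4}+xy^{4}+x^{2}y^{2}=0$, and applying the Frobenius maps $(\,\cdot\,)^{2^{k}}$ and $(\,\cdot\,)^{2^{2k}}$, which cyclically permute $x\mapsto y\mapsto z\mapsto x$, yields a system of three equations in $x,y,z$.

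Next I would eliminate $z$ by taking the resultants with respect to $z$ of two suitable pairs from this system; since the equation coming from Eq.~\eqref{4.1} is linear in $z$, these resultants are obtained simply by substituting $z=(x^{2}y^{4}+xy^{4}+x^{2}y^{2})/(y^{2}+x+1)$ into the other two. The outcome should be two polynomials in $x,y$ carrying spurious factors such as $x(x+1)$, $y(y+1)$ and a few factors linear or quadratic in $y$ (compare the two-variable resultants in the proofs of Theorems~\ref{th1.1} and~\ref{th:2.9}); the factors $x,x+1,y,y+1$ are discarded because $x,y\notin\mathbb{F}_{2}$, and the remaining low-degree factors are shown not to vanish by substituting each back into the system and deriving $x\in\mathbb{F}_{2}$, exactly as $xy+y+1\neq0$ and $xy+x+1\neq0$ were treated in the proof of Theorem~\ref{th1.1}. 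One further resultant with respect to $y$ then produces a univariate polynomial in $x$, which I would factor over $\mathbb{F}_{2}$ with MAGMA; by analogy with Eq.~\eqref{10.5} I expect the relevant irreducible factors to be $x^{2}+x+1$, $x^{3}+x+1$, $x^{3}+x^{2}+1$, possibly together with some higher-degree factors whose roots lie in $\mathbb{F}_{2^{6}}$ or $\mathbb{F}_{2^{12}}$.

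Finally I would rule out every surviving factor. Since $k$ is odd, $n=3k$ is odd, so a root of $x^{2}+x+1$ would lie in $\mathbb{F}_{2^{2}}\cap\mathbb{F}_{2^{n}}=\mathbb{F}_{2}$, a contradiction; and a root of any higher factor would lie in $\mathbb{F}_{2^{6}}\cap\mathbb{F}_{2^{n}}=\mathbb{F}_{2^{3}}$ or $\mathbb{F}_{2^{12}}\cap\mathbb{F}_{2^{n}}=\mathbb{F}_{2^{3}}$ (again using $k$ odd), so in all remaining cases $x\in\mathbb{F}_{2^{3}}$ — which is a priori possible because $3\mid n$. For $x\in\mathbb{F}_{2^{3}}\backslash\mathbb{F}_{2}$ I would substitute $x^{2^{k}}=x^{2^{k\bmod 3}}$ and $x^{2^{2k}}=x^{2^{2k\bmod 3}}$ into Eq.~\eqref{4.1} and reduce exponents modulo $7$, splitting into the three subcases $k\equiv 0,1,2\pmod 3$; in each the equation collapses to something of the form $x(x+1)(x^{2}+x+1)^{2}=0$ or $x^{2}(x+1)^{2}=0$, which (as $x^{2}+x+1$ is irreducible over $\mathbb{F}_{2}$) forces $x\in\mathbb{F}_{2^{2}}$, hence $x\in\mathbb{F}_{2^{3}}\cap\mathbb{F}_{2^{2}}=\mathbb{F}_{2}$, a contradiction. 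I expect the main difficulty to be the bookkeeping in this last step — correctly matching each irreducible factor MAGMA returns with the subfield it generates and with the residue of $k$ modulo $3$ (and, for the larger factors, a possible split into residues modulo a multiple of $3$, as in the proof of Theorem~\ref{th:2.9}) — whereas the elimination computations themselves, though bulky, are mechanical once the substitution $y=x^{2^{k}}$, $z=y^{2^{k}}$ is fixed.
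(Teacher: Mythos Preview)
Your approach is essentially the same as the paper's: the same clearing to Eq.~\eqref{4.1}${}={}$Eq.~\eqref{11.1}, the same substitution $y=x^{2^k}$, $z=y^{2^k}$ with $z^{2^k}=x$, and the same two-step resultant elimination. The only discrepancy is that you overestimate the final step: the MAGMA computation in the paper shows the last resultant factors simply as $x^{31}(x+1)^{31}(x^{2}+x+1)^{60}$, so no degree-$3$ or higher irreducible factors appear and your planned $\mathbb{F}_{2^{3}}$ case analysis (splitting on $k\bmod 3$) is never needed --- the single observation that $n=3k$ is odd, hence $\mathbb{F}_{2^{2}}\cap\mathbb{F}_{2^{n}}=\mathbb{F}_{2}$, finishes the proof.
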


\begin{proof}
It suffices to prove that the equation
\begin{eqnarray}\label{11.0}
(x + 1)^{2^{2k}-2^{k+1}-1} + x^{2^{2k}-2^{k+1}-1} + 1 = 0
\end{eqnarray}
has no solution in $\mathbb{F}_{2^n}\backslash\{0,1\}$. We derive from Eq. \eqref{11.0} that
\begin{eqnarray}\label{11.1}
x^{2^{2k}+2^{k+1}} + x^{2^{2k}+1} + x^{2^{2k}} + x^{2^{k+2}+2} + x^{2^{k+2}+1} + x^{2^{k+1}+2} = 0.
\end{eqnarray}
Let $y=x^{2^k}$ and $z=y^{2^k}$. Then $z^{2^k}=x$, and raising the $2^k$-th and $2^{2k}$-th power to Eq. \eqref{11.1} respectively obtains
\begin{subequations}  \label{eq:11.3}
\begin{empheq}[left=\empheqlbrace]{align}
y^2z + xz + z + x^2y^4 + xy^4 + x^2y^2 = 0,  \label{11.3a} \\
z^2x + yx + x + y^2z^4 + yz^4 + y^2z^2 = 0,  \label{11.3b} \\
x^2y + zy + y + z^2x^4 + zx^4 + z^2x^2 = 0.  \label{11.3c}
\end{empheq}
\end{subequations}
Computing the resultants of Eq. \eqref{11.3a} and Eq. \eqref{11.3b}, Eq. \eqref{11.3a} and Eq. \eqref{11.3c} with respect to $z$ respectively, we have
\begin{subequations}  \label{eq:11.4}
\begin{empheq}[left=\empheqlbrace]{align}
& x^8y^{18} + x^8y^{17} + x^8y^{10} + x^8y^9 + x^7y^8 + x^7y^4 + x^6y^{10} + x^6y^6 + x^5y^{12} + x^5y^8 + x^5y^4 \notag \\
& + x^5y + x^5 + x^4y^{18} + x^4y^{17} + x^4y^{14} + x^4y^{10} + x^4y^6 + x^3y^{12} + x^3y^8 + x^2y^{14} + x^2y^{10} \notag \\
& + xy^9 + xy^8 + xy + x = 0,  \label{11.4a} \\
& x^8y^8 + x^8y^4 + x^7y^4 + x^7y^2 + x^6y^6 + x^6y^2 + x^5y^6 + x^5y^4 + x^4y^8 + x^4y + x^3y^5 + x^3y^3 \notag \\
& + x^2y^7 + x^2y^3 + xy^7 + xy^5 + y^5 + y= 0.  \label{11.4b}
\end{empheq}
\end{subequations}
Computing the resultant of Eq. \eqref{11.4a} and Eq. \eqref{11.4b} with respect to $y$, by MAGMA computation, and then decomposing it into
\begin{eqnarray*}
 x^{31}(x + 1)^{31}(x^2 + x + 1)^{60} =0.
\end{eqnarray*}
Note that $x^2 + x + 1 $ is irreducible on $\mathbb{F}_{2}$. Suppose $x^2 + x + 1=0$. Then $x \in\mathbb{F}_{2^2} \cap\mathbb{F}_{2^n}=\mathbb{F}_{2}$ for $k$ being odd, which contradicts with $x\neq0,1$. Hence Eq. \eqref{11.0} has no solution in $\mathbb{F}_{2^n}\backslash\{0,1\}$.
The proof is completed.
\end{proof}

\begin{theorem}\label{th:2.18}
Let $n$ and $k$ be positive integers with $n=3k$. Then
\begin{eqnarray*}
f(x)= x^{2^{2k+1}-2^{k}-1}
\end{eqnarray*}
is a $0$-APN function over $\mathbb{F}_{2^n}$.
\end{theorem}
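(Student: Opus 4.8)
The plan is to imitate the multivariate/resultant argument used for the preceding theorems of this section. Since $f$ is a power function, it is $0$-APN exactly when
\begin{eqnarray*}
(x+1)^{2^{2k+1}-2^{k}-1} + x^{2^{2k+1}-2^{k}-1} + 1 = 0
\end{eqnarray*}
has no solution $x\in\mathbb{F}_{2^n}\backslash\{0,1\}$. For such an $x$ we have $x+1\neq0$, so I would multiply through by $x^{2^{k}+1}(x+1)^{2^{k}+1}$ and use, in characteristic $2$, the identities $(x+1)^{2^{2k+1}}=x^{2^{2k+1}}+1$ and $(x+1)^{2^{k}+1}=x^{2^{k}+1}+x^{2^{k}}+x+1$ to clear the negative exponents. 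Since $\bigl(2^{2k+1}-2^{k}-1\bigr)+2^{k}+1=2^{2k+1}$, after cancellation this collapses to the sparse relation
\begin{eqnarray*}
x^{2^{2k+1}+2^{k}} + x^{2^{2k+1}+1} + x^{2^{2k+1}} + x^{2^{k+1}+2} + x^{2^{k+1}+1} + x^{2^{k}+2} = 0 ,
\end{eqnarray*}
a six-term polynomial equation of the same shape as Eq.~\eqref{9.1} and Eq.~\eqref{10.1}.

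Next I would set $y=x^{2^{k}}$ and $z=x^{2^{2k}}$. Because $n=3k$, one further $2^{k}$-th power gives $z^{2^{k}}=x^{2^{3k}}=x$, so the Frobenius $t\mapsto t^{2^{k}}$ permutes $\{x,y,z\}$ cyclically. Writing each of the six exponents above as a combination $a+b\cdot2^{k}+c\cdot2^{2k}$ with small $a,b,c$ turns the displayed equation into a low-degree polynomial relation $F(x,y,z)=0$ (explicitly $yz^{2}+xz^{2}+z^{2}+x^{2}y^{2}+xy^{2}+x^{2}y=0$), and applying the cyclic Frobenius twice yields two further relations in $x,y,z$. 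I would then eliminate $z$ by computing the two resultants of $F$ with its Frobenius images with respect to $z$, obtaining two polynomials in $x,y$, and finally eliminate $y$ by one more resultant. As in the proofs of Theorems~\ref{th:2.9} and~\ref{th:2.10}, I expect the intermediate resultants to carry spurious factors such as $x$, $x+1$, and factors linear in $xy$ (of type $xy+x+1$, $xy+y+1$, $x+y^{2}$); these must be divided out and disposed of separately using $x,y\notin\mathbb{F}_{2}$, after which a MAGMA factorization of the remaining univariate polynomial over $\mathbb{F}_{2}$ should confine $x$ to a union of small extension fields $\mathbb{F}_{2^{2}},\mathbb{F}_{2^{3}},\dots$.

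The last step is the subfield/contradiction analysis. Each irreducible factor of degree $m$ forces its root into $\mathbb{F}_{2^{m}}\cap\mathbb{F}_{2^{3k}}=\mathbb{F}_{2^{\gcd(m,3k)}}$; when this gcd equals $1$ the proof is finished, and when it is a proper divisor larger than $1$ (which can only occur for special residues of $k$ modulo $3$, and possibly modulo a few further small integers) I would substitute the resulting identities $x^{2^{k}}=x^{2^{j}}$, $x^{2^{2k}}=x^{2^{j'}}$ back into the six-term equation, reduce it to a genuine low-degree polynomial in $x$ over that subfield, and iterate until $x$ is forced into $\mathbb{F}_{2}$, contradicting $x\neq0,1$. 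The main obstacle I anticipate is precisely this branching: as the analysis of Theorem~\ref{th:2.9} illustrates, the case tree on $k\bmod 3$ (and, for the residues surviving it, on $k$ modulo $4$ or $9$) can generate several nested reductions and intermediate polynomials of large degree, so the delicate part is the bookkeeping of these cases together with checking that none of the resultants degenerates identically to $0$, rather than any individual computation.
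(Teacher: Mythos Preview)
Your approach is essentially identical to the paper's: the same six-term reduction, the same substitution $y=x^{2^k}$, $z=x^{2^{2k}}$ with the cyclic Frobenius, the explicit relation $F(x,y,z)=yz^{2}+xz^{2}+z^{2}+x^{2}y^{2}+xy^{2}+x^{2}y$, and the same two-stage resultant elimination. The only discrepancy is that you overestimate the endgame: after dividing out $x(x+1)$ and $y(y+1)$ from the intermediate resultants, the final univariate resultant factors as $x^{19}(x+1)^{19}(x^{2}+x+1)^{16}$, so only the $\mathbb{F}_{2^{2}}$ case arises and the branching is merely on the parity of $k$ (for $k$ even one substitutes $x^{2^{k}}=x^{2^{2k}}=x$ into the six-term equation to get $x^{4}+x^{2}=0$), with none of the $\bmod\ 3$, $\bmod\ 4$, or $\bmod\ 9$ case tree you feared.
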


\begin{proof}
We need to prove that the equation
\begin{eqnarray}\label{18.0}
(x + 1)^{2^{2k+1}-2^{k}-1} + x^{2^{2k+1}-2^{k}-1} + 1 = 0
\end{eqnarray}
has no solution in $\mathbb{F}_{2^n}\backslash\{0,1\}$. Assume that $x\neq0,1$ is a solution of Eq. \eqref{18.0}.  Eq. \eqref{18.0} can be simplified as
\begin{eqnarray}\label{18.1}
x^{2^{2k+1}+2^{k}} + x^{2^{2k+1}+1} + x^{2^{2k+1}} + x^{2^{k+1}+2}  + x^{2^{k+1}+1} + x^{2^{k}+2} = 0.
\end{eqnarray}
Let $y=x^{2^k}$ and $z=y^{2^k}$. Then $z^{2^k}=x$, and taking the $2^k$-th and $2^{2k}$-th power to Eq. \eqref{18.1} respectively derives
\begin{subequations}  \label{eq:18.3}
\begin{empheq}[left=\empheqlbrace]{align}
yz^2 + xz^2 + z^2 + x^2y^2 + xy^2 + x^2y = 0,  \label{18.3a} \\
zx^2 + yx^2 + x^2 + y^2z^2 + yz^2 + y^2z = 0,  \label{18.3b} \\
xy^2 + zy^2 + y^2 + z^2x^2 + zx^2 + z^2x = 0.  \label{18.3c}
\end{empheq}
\end{subequations}
Computing the resultants of Eq. \eqref{18.3a} and Eq. \eqref{18.3b}, Eq. \eqref{18.3a} and Eq. \eqref{18.3c} with respect to $z$ respectively obtains
\begin{eqnarray*}
Res_{1}(x,y) &=& x(x+1)(x + y^2)(x^4y^2 + x^4y + x^3y^4 + x^3y^2 + x^3 + x^2y^6 + x^2y^4 + x^2y^2 + x^2 \nonumber \\
&&+ xy^6  + xy^4 + xy^2 + y^5 + y^4) = 0,   \\ 
Res_{2}(x,y) &=& y(y+1)(x^2 + y)(x^6y^2 + x^6y + x^5 + x^4y^3 + x^4y^2 + x^4y + x^4 + x^2y^4 + x^2y^3 \nonumber \\
&&+ x^2y^2 + x^2y + xy^4 + y^3 + y^2) = 0.  
\end{eqnarray*}
Notice that $x,y\not\in\mathbb{F}_{2}$, we compute the resultant of  $Res_{1}(x,y)/x(x+1)$ and $Res_{2}(x,y)/y(y+1)$ with respect to $y$, and then we decompose the resultant into
\begin{eqnarray*}
x^{19}(x + 1)^{19}(x^2 + x + 1)^{16} =0.
\end{eqnarray*}
It is obvious that the polynomial $x^2 + x + 1$ is irreducible on $\mathbb{F}_{2}$. Suppose $x^2 + x + 1=0$, this yields $x\in \mathbb{F}_{2^2}$.   When $k$ is odd, we have $x\in \mathbb{F}_{2^2}\cap\mathbb{F}_{2^n}=\mathbb{F}_{2}$, which contradicts with $x\neq0,1$. When $k$ is even, we have $x\in \mathbb{F}_{2^2}\cap\mathbb{F}_{2^n}=\mathbb{F}_{2^2}$. Then $x^{2^{2k}}=x$ and  $x^{2^k}=x$. Thus we derive from Eq. \eqref{18.1} that $x^4+x^2=0$. This means $x\in \mathbb{F}_{2}$, which is a contradiction.
Therefore, Eq. \eqref{18.0} has no solution in $\mathbb{F}_{2^n}\backslash\{0,1\}$. This completes the proof.
\end{proof}

\subsection{The case of $n=3k+1$}
In this subsection, we present a new class of $0$-APN power functions over $\mathbb{F}_{2^n}$ with $n=3k+1$.

\begin{theorem}\label{th:2.20}
Let $n$ and $k$ be positive integers with $k\not\equiv 11~({\rm mod}\ 34)$ and $n=3k+1$. Then
\begin{eqnarray*}
f(x)= x^{3\cdot(2^{k+1}-1)}
\end{eqnarray*}
is a $0$-APN function over $\mathbb{F}_{2^n}$.
\end{theorem}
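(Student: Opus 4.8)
The plan is to argue as in the preceding theorems of this section: it suffices to prove that
\[
(x+1)^{3(2^{k+1}-1)}+x^{3(2^{k+1}-1)}+1=0
\]
has no solution in $\mathbb{F}_{2^n}\setminus\{0,1\}$. Since $3(2^{k+1}-1)=3\cdot 2^{k+1}-3$, I would first clear the exponent by multiplying this equation by $x^3(x+1)^3$; using $x^{3}x^{3(2^{k+1}-1)}=(x^{2^{k+1}})^3$, $(x+1)^{3}(x+1)^{3(2^{k+1}-1)}=(x^{2^{k+1}}+1)^3$ and $(a+1)^3=a^3+a^2+a+1$ over $\mathbb{F}_2$, and then setting $y=x^{2^k}$ so that $x^{2^{k+1}}=y^2$, the equation becomes
\[
x^3y^4+x^3y^2+x^2y^6+xy^6+y^6+x^6+x^5+x^4=0 ,
\]
equivalently $x^3y^2(y+1)^2+(x^2+x+1)(y^3+x^2)^2=0$. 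Assume $x\neq 0,1$ is such a solution.

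Next I would exploit $n=3k+1$. Put $z=y^{2^k}=x^{2^{2k}}$ and $w=z^{2^k}=x^{2^{3k}}$, so that $w^2=x^{2^{n}}=x$. Raising the displayed identity to the $2^k$-th power gives $E_1(y,z)=0$, the same polynomial with $(x,y)$ replaced by $(y,z)$; raising it once more gives an equation in $z,w$ in which the relations $w^2(w+1)^2=w^4+w^2=x^2+x$ and $(w^3+z^2)^2=x^3+z^4$ remove $w$ and leave the bivariate equation $E_2(x,z):=z^3(x^2+x)+(z^2+z+1)(x^3+z^4)=0$. Thus $x,y,z$ satisfy the system $E_0(x,y)=E_1(y,z)=E_2(x,z)=0$. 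Eliminating $y$ from $E_0,E_1$ via $\mathrm{Res}(E_0,E_1,y)$ and then eliminating $z$ from this resultant and $E_2$ with respect to $z$, a MAGMA computation should yield a univariate polynomial in $x$ which factors over $\mathbb{F}_2$ into $x$, $x+1$, a few low-degree irreducibles (such as $x^2+x+1$ and cubics), and one or more irreducible factors of degree $34$.

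The concluding case analysis rests on intersecting splitting fields with $\mathbb{F}_{2^n}$. The factor $x^2+x+1$ is eliminated directly: if $x^3=1$ and $x\neq 1$ then $x+1$ is also a primitive cube root of unity, and since $3\mid 3(2^{k+1}-1)$ one gets $x^{3(2^{k+1}-1)}=(x+1)^{3(2^{k+1}-1)}=1$, so the left-hand side equals $1\neq 0$. A cubic factor forces $x\in\mathbb{F}_{2^3}\cap\mathbb{F}_{2^n}=\mathbb{F}_{2^{\gcd(3,3k+1)}}=\mathbb{F}_2$, a contradiction; any other moderate-degree factor is treated the same way, the finitely many residue classes of $k$ for which the relevant $\gcd$ is nontrivial being disposed of by substituting $y=x^{2^{k\bmod m}}$ into $E_0$, reducing exponents modulo $2^m-1$, and factoring once more. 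Finally, a root of a degree-$34$ factor generates $\mathbb{F}_{2^{34}}$, hence lies in $\mathbb{F}_{2^n}$ only if $34\mid n=3k+1$, that is $k\equiv 11\pmod{34}$; the hypothesis $k\not\equiv 11\pmod{34}$ rules this out and finishes the proof. I expect the principal obstacles to be the size of the intermediate resultants and, should a degree-$17$ factor appear alongside the degree-$34$ one, the extra subcase $k\equiv 28\pmod{34}$ (where $17\mid n$ but $34\nmid n$): there one must re-substitute $y=x^{2^{11}}$, reduce modulo $2^{17}-1$, and verify by a further MAGMA factorization that no root survives outside $\mathbb{F}_2$.
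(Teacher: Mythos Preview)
Your plan is correct and mirrors the paper's proof: the paper derives exactly your $E_0,E_1,E_2$ (its (20.2a)--(20.2c)), eliminates in the opposite order ($z$ first from $E_1,E_2$, then $y$), and factors the resulting univariate polynomial over $\mathbb{F}_2$. The one point where your expectations are off is the factorization itself: apart from $x^{64}(x+1)^{64}$, the irreducible factors are six of degree~$5$ and three of degree~$34$---no $x^2+x+1$ and no cubics---so the case analysis is $x\in\mathbb{F}_{2^5}$ (forcing $5\mid 3k+1$, i.e.\ $k\equiv 3\pmod 5$, then a substitution $y=x^{8}$ into $E_0$ as in your general scheme) and $x\in\mathbb{F}_{2^{34}}$ (your clean ``$34\mid n\iff k\equiv 11\pmod{34}$'' argument, which is in fact tighter than the paper's more roundabout reduction through $\mathbb{F}_{2^2}$); no degree-$17$ factor appears, so the $k\equiv 28\pmod{34}$ subcase never arises.
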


\begin{proof}
It suffices to prove that the equation
\begin{eqnarray}\label{20.0}
(x + 1)^{3\cdot(2^{k+1}-1)} + x^{3\cdot(2^{k+1}-1)} + 1 = 0
\end{eqnarray}
has no solution in $\mathbb{F}_{2^n}\backslash\{0,1\}$. Eq. \eqref{20.0} can be written as
\begin{eqnarray}\label{20.1}
x^{2^{k+2}+3} + x^{2^{k+1}+3} + x^{3\cdot2^{k+1}+2} + x^{3\cdot2^{k+1}+1} + x^{3\cdot2^{k+1}} + x^6 + x^5 + x^4 = 0.
\end{eqnarray}
Let $y=x^{2^k}$ and $z=y^{2^k}$. Then $z^{2^{k+1}}=x$, and raising the $2^{k}$-th and $2^{2k}$-th power to Eq. \eqref{20.1} respectively derives
\begin{subequations}  \label{eq:20.2}
\begin{empheq}[left=\empheqlbrace]{align}
x^3y^4 + x^3y^2 + x^2y^6 + xy^6 + y^6 + x^6 + x^5 + x^4 = 0,  \label{20.2a} \\
y^3z^4 + y^3z^2 + y^2z^6 + yz^6 + z^6 + y^6 + y^5 + y^4 = 0,  \label{20.2b} \\
z^3x^2 + z^3x   + z^2x^3 + zx^3 + x^3 + z^6 + z^5 + z^4 = 0.  \label{20.2c}
\end{empheq}
\end{subequations}
Computing the resultants of Eq. \eqref{20.2b} and Eq. \eqref{20.2c} with respect to $z$ respectively, we obtain
\begin{eqnarray} \label{20.3}
&& x^{18}y^{20} + x^{18}y^{18} + x^{18}y^{12} + x^{18}y^{10} + x^{18}y^8 + x^{18}y^2 + x^{18} + x^{16}y^{21} + x^{16}y^{19} + x^{16}y^{18} + x^{16}y^{17} \nonumber \\
&& + x^{16}y^{13} + x^{16}y^{11} + x^{16}y^9 + x^{16}y^8 + x^{16}y^4 + x^{14}y^{16} + x^{14}y^4 + x^{12}y^{20} + x^{12}y^{16} + x^{10}y^{26} \nonumber \\
&&+ x^{10}y^{25} + x^{10}y^{24} + x^{10}y^{22} + x^{10}y^{21} + x^{10}y^{16} + x^{10}y^{14} + x^{10}y^{13} + x^{10}y^{12} + x^{10}y^{10} + x^{10}y^9 \nonumber \\
&& + x^{10}y^8 + x^8y^{28} + x^8y^{27} + x^8y^{26} + x^8y^{24} + x^8y^{23} + x^8y^{22} + x^8y^{20} + x^8y^{15} + x^8y^{14} + x^8y^{12} \nonumber \\
&&+ x^8y^{11} + x^8y^{10} + x^6y^{20} + x^6y^{16} + x^4y^{32} + x^4y^{20} + x^2y^{32} + x^2y^{28} + x^2y^{27} + x^2y^{25} + x^2y^{23} \nonumber \\
&&+ x^2y^{19} + x^2y^{18} + x^2y^{17} + x^2y^{15} + y^{36} + y^{34} + y^{28} + y^{26} + y^{24} + y^{18} + y^{16} = 0.
\end{eqnarray}
Computing the resultant of Eq. \eqref{20.2a} and Eq. \eqref{20.3} with respect to $y$, with the help of MAGMA, and then decomposing the resultant into the following product of irreducible factors on $\mathbb{F}_{2}$ as
\begin{eqnarray}\label{20.4}
&& x^{64}(x + 1)^{64}(x^5 + x^2 + 1)(x^5 + x^3 + 1)(x^5 + x^3 + x^2 + x + 1)(x^5 + x^4 + x^2 + x + 1)    \nonumber \\
&& (x^5 + x^4 + x^3 + x + 1)(x^5 + x^4 + x^3 + x^2 + 1) (x^{34} + x^{30} + x^{29} + x^{28} + x^{27} + x^{23} \nonumber \\
&&+ x^{22} + x^{21} + x^{20} + x^{17} + x^{14} + x^{13} + x^{12} + x^{11} + x^7 + x^6 + x^5 + x^4 + 1)(x^{34} + x^{32} \nonumber \\
&&+ x^{30} + x^{29} + x^{28} + x^{27} + x^{23} + x^{22} + x^{21} + x^{20} + x^{17} + x^{16} + x^2 + x + 1)(x^{34} + x^{33} \nonumber \\
&&+ x^{32} + x^{18} + x^{17} + x^{14} + x^{13} + x^{12} + x^{11} + x^7 + x^6 + x^5 + x^4 + x^2 + 1)=0.
\end{eqnarray}
Thus the solutions of Eq. \eqref{20.4} are in $\mathbb{F}_{2^5}$ or $\mathbb{F}_{2^{34}}$.  Suppose $x\in\mathbb{F}_{2^5}$.
When $k\not\equiv 3~({\rm mod}\ 5)$, we have $x\in \mathbb{F}_{2^5}\cap\mathbb{F}_{2^n}=\mathbb{F}_{2}$, which contradicts with $x\neq0,1$.
When $k\equiv 3~({\rm mod}\ 5)$, we have $x\in \mathbb{F}_{2^5}\cap\mathbb{F}_{2^n}=\mathbb{F}_{2^5}$. Thereby $x^{2^{2k}}=x^2$ and  $x^{2^k}=x^8$. It follows from Eq. \eqref{20.1} that
\begin{eqnarray*}
x^{18} + x^{17} + x^6 + x^5 = x^5(x+1)^5(x^2 + x + 1)^4=0,
\end{eqnarray*}
which means $x\in \mathbb{F}_{2^2}$ since $x^2 + x + 1$ is irreducible in $\mathbb{F}_{2}$. This yields $x\in \mathbb{F}_{2^{5}}\cap\mathbb{F}_{2^2}=\mathbb{F}_{2}$, which is a contradiction.

Suppose $x\in \mathbb{F}_{2^{34}}$. When $k$ is even, we have $x\in \mathbb{F}_{2^{34}}\cap\mathbb{F}_{2^n}=\mathbb{F}_{2}$, which is a contradiction.
When $k$ is odd, we have $x\in \mathbb{F}_{2^{34}}\cap\mathbb{F}_{2^n}=\mathbb{F}_{2^2}$ since $k\not\equiv 11~({\rm mod}\ 34)$. Hence we have $x^{2^{2k}}=x$ and  $x^{2^k}=x^2$. We derive from Eq. \eqref{20.1} that  
\begin{eqnarray*}
 x^{11} + x^{7} + x^{14} + x^{13} + x^{12} + x^6 + x^5 + x^4 = x^4(x+1)^4(x^3+x+1)(x^3+x^2+1)=0.
\end{eqnarray*}
It can be checked that the polynomials $x^3+x+1$ and $x^3+x^2+1$ are irreducible in $\mathbb{F}_{2}$. Thus the solutions of the above equation are in $\mathbb{F}_{2^3}$, which implies that $x\in \mathbb{F}_{2^2}\cap\mathbb{F}_{2^3}=\mathbb{F}_{2}$. It leads to a contradiction.
Therefore, Eq. \eqref{20.0} has no solution in $\mathbb{F}_{2^n}\backslash\{0,1\}$.
This completes the proof.
\end{proof}

From the examples in Table \ref{tab} and by MAGMA computation, according to Lemma \ref{Lem:CCZ}, it is easy to see that all these $0$-APN power functions over $\mathbb{F}_{2^n}$ are CCZ-inequivalent to each other for $6\leq n\leq11$.

\section{Two classes of $0$-APN power functions over $\mathbb{F}_{2^n}$}\label{fourth}
In this section, we completely characterize two new classes of $0$-APN power functions over $\mathbb{F}_{2^n}$ by studying some special equations.

\begin{theorem}\label{th:4.1}
Let $n$, $m$, $k$ and $d$ be positive integers with $(2^k-1)d\equiv 2^m-1~({\rm mod}\ 2^n-1)$. Then
\begin{eqnarray*}
f(x)= x^d
\end{eqnarray*}
is a $0$-APN function over $\mathbb{F}_{2^n}$ if and only if $\gcd(n, m)=\gcd(n, m-k)=1$.
\end{theorem}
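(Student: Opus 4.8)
\medskip
\noindent\textbf{Proof strategy.}
The plan is to use the reduction recalled in the introduction: $f(x)=x^d$ is $0$-APN over $\mathbb{F}_{2^n}$ if and only if the equation $(x+1)^d+x^d+1=0$ has no root in $\mathbb{F}_{2^n}\setminus\{0,1\}$. I would fix a hypothetical root $x_0\notin\{0,1\}$, put $a:=x_0^d$ (so that $(x_0+1)^d=a+1$), and observe $a\neq 0,1$ --- if $a=0$ then $x_0=0$, and if $a=1$ then $(x_0+1)^d=0$ forces $x_0=1$. The hypothesis $(2^k-1)d\equiv 2^m-1\pmod{2^n-1}$ is exploited in the form $t^{(2^k-1)d}=t^{2^m-1}$ for every $t\in\mathbb{F}_{2^n}^{*}$; applying it to $t=x_0$ and $t=x_0+1$ produces the two identities $a^{2^k-1}=x_0^{2^m-1}$ and $(a+1)^{2^k-1}=(x_0+1)^{2^m-1}$.

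Next I would convert these into polynomial relations between $a$ and $x_0$. Multiplying the first by $ax_0$ gives $x_0a^{2^k}=ax_0^{2^m}$. Multiplying the second by $(a+1)(x_0+1)$, expanding via $(u+1)^{2^j}=u^{2^j}+1$, and cancelling the common term $x_0a^{2^k}=ax_0^{2^m}$ leaves $a^{2^k}+a=x_0^{2^m}+x_0$; substituting $a^{2^k}=a+x_0^{2^m}+x_0$ back into $x_0a^{2^k}=ax_0^{2^m}$ and factoring then yields
\[
(a+x_0)(x_0^{2^m}+x_0)=0.
\]
Hence either $x_0^{2^m}=x_0$, i.e. $x_0\in\mathbb{F}_{2^{\gcd(n,m)}}$, or $x_0^d=x_0$. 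In the latter case $(x_0+1)^d=a+1=x_0+1$ as well, and feeding $a=x_0$ into $a^{2^k-1}=x_0^{2^m-1}$ gives $x_0^{2^k-2^m}=1$; since $\mathrm{ord}(x_0)$ is odd the power of $2$ drops out, so $\mathrm{ord}(x_0)\mid 2^{\gcd(n,m-k)}-1$, i.e. $x_0\in\mathbb{F}_{2^{\gcd(n,m-k)}}$. Thus every root of $(x+1)^d+x^d+1$ lies in $\mathbb{F}_{2^{\gcd(n,m)}}\cup\mathbb{F}_{2^{\gcd(n,m-k)}}$; when $\gcd(n,m)=\gcd(n,m-k)=1$ this set equals $\{0,1\}$, so no admissible root exists and $f$ is $0$-APN. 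This settles the ``if'' direction.

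For the converse I would argue by contraposition and exhibit an explicit root whenever $\gcd(n,m)>1$ or $\gcd(n,m-k)>1$. If $\gcd(n,m-k)=h>1$, reducing the congruence modulo $2^h-1$ and using $m\equiv k\pmod h$ gives $(2^k-1)(d-1)\equiv 0\pmod{2^h-1}$, so $\tfrac{2^h-1}{\gcd(2^h-1,2^k-1)}\mid d-1$; I would then take $x_0$ in the subgroup $H\le\mathbb{F}_{2^h}^{*}$ of order $\gcd(d-1,2^h-1)>1$ (on which $x^d=x$) in such a way that $x_0+1\in H$ too, so that $(x_0+1)^d+x_0^d+1=(x_0+1)+x_0+1=0$. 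The case $\gcd(n,m)>1$ is handled analogously inside $\mathbb{F}_{2^{\gcd(n,m)}}$, where $x^{2^m}=x$ collapses the congruence and confines $x_0^d$ to a subfield.

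The main obstacle is precisely this converse step: one must control $\gcd(d-1,2^h-1)$ --- and, in the other case, the order of the relevant group of roots of unity --- exactly in terms of $n,m,k$, and then guarantee that $x_0$ and $x_0+1$ lie in the relevant set simultaneously, which is not automatic since a multiplicative subgroup need not be closed under $x\mapsto x+1$. The cleanest route I see is to show that $\{x\in\mathbb{F}_{2^n}:x^d=x\}$ meets the ambient subfield in exactly (the elements of) a true subfield $\mathbb{F}_{2^{h_0}}$ with $h_0>1$; such a set is automatically closed under $x\mapsto x+1$, and then any $x_0\in\mathbb{F}_{2^{h_0}}\setminus\{0,1\}$ does the job. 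Carrying out this subfield identification, and disposing of the degenerate cases $m=k$ and $\gcd(n,m,k)>1$, is where the real work concentrates.
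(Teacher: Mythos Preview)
Your ``if'' direction is essentially the paper's argument: both reduce $(x+1)^d+x^d+1=0$ (for $x\neq 0,1$) to the factored relation $(x^d+x)(x^{2^m}+x)=0$ via the congruence $(2^k-1)d\equiv 2^m-1$, and then finish with the gcd hypotheses. The paper writes this as $x(x^{d-1}+1)(x^{2^m-1}+1)=0$; your substitution $a=x_0^d$ is purely cosmetic.

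For the converse, the paper does something much shorter than what you attempt: it simply records that the factored equation $x(x^{d-1}+1)(x^{2^m-1}+1)=0$ has only the solutions $0,1$ if and only if $\gcd(d-1,2^n-1)=\gcd(2^m-1,2^n-1)=1$, translates this into $\gcd(n,m)=\gcd(n,m-k)=1$, and tacitly treats the passage from the original equation to the factored one as an equivalence. Your more explicit plan---produce a root of the original equation whenever a gcd condition fails---is different, and your hesitation about it is well placed: the reduction is \emph{not} an equivalence, and in fact the ``only if'' direction as stated does not hold. Take $n=4$, $k=1$, $m=2$, $d=3$: the congruence $(2^1-1)\cdot 3=2^2-1$ holds and $\gcd(n,m)=2>1$, yet $(x+1)^3+x^3+1=x(x+1)$ has only $0,1$ as roots in $\mathbb{F}_{16}$, so $x^3$ is $0$-APN (indeed APN). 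The extra roots $\omega,\omega^2\in\mathbb{F}_4$ of the factored equation do \emph{not} satisfy the original one. So the obstacle you identified---that the relevant multiplicative set need not be closed under $x\mapsto x+1$---is intrinsic: the paper's argument establishes only sufficiency of the gcd conditions, and no construction along your lines can rescue the stated necessity.
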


\begin{proof}
We consider the equation
\begin{eqnarray}\label{eq:4.0}
(x + 1)^d + x^d + 1 = 0.
\end{eqnarray}
Since $(2^k-1)d\equiv 2^m-1~({\rm mod}\ 2^n-1)$, we have $2^k\cdot d+1\equiv 2^m+d~({\rm mod}\ 2^n-1)$. It follows from
Eq. \eqref{eq:4.0}  that
\begin{eqnarray*}
(x + 1)^{2^m-1} = (x + 1)^{(2^k-1)d}= (x^d + 1)^{2^k-1}.
\end{eqnarray*}
It can be rewritten as
\begin{eqnarray}\label{eq:4.1}
x^{2^m+d} + x^d +x^{2^m} + x^{2^k\cdot d+1} + x^{d\cdot2^k} + x =0,
\end{eqnarray}
which implies
\begin{eqnarray*}
x^{2^k\cdot d} + x^d = x^{2^m} + x.
\end{eqnarray*}
This yields
\begin{eqnarray*}
x^d(x^{(2^k-1)\cdot d}+1) = x(x^{2^m-1} + 1).
\end{eqnarray*}
We further have
\begin{eqnarray}\label{eq:4.3}
x(x^{d-1}+1)(x^{2^m-1} + 1)=0.
\end{eqnarray}
Equation \eqref{eq:4.3} has only two solutions $x=0$ and $x=1$ if and only if $\gcd(2^m-1, 2^n-1)=1$ and $\gcd(d-1, 2^n-1)=1$.
It is easy to check that  $\gcd(2^m-1, 2^n-1)=1$ if and only if $\gcd(m, n)=1$.
Since $(2^k-1)d\equiv 2^m-1~({\rm mod}\ 2^n-1)$, we have $\gcd(k, n)=1$ and $\gcd(2^k-1, 2^n-1)=1$.
Note that
$$(2^k-1)(d-1)\equiv 2^k(2^{m-k}-1)~({\rm mod}\ 2^n-1).$$
%
This yields
$\gcd(d-1, 2^n-1)=\gcd(2^{m-k}-1, 2^n-1)=1$ if and only if $\gcd(m-k, n)=1$. Therefore we conclude the desire conclusion, and this completes the proof.
\end{proof}

\begin{lemma}\label{lemma:4.2}
Let $n$, $m$, $k$ and $d$ be positive integers with $(2^k+1)d\equiv 2^m+1~({\rm mod}\ 2^n-1)$. Then $\gcd(d\cdot2^k-1, 2^n-1)=1$ and $\gcd(d-1, 2^n-1)=1$
if and only if one of the following holds:

\emph{(i)} $\frac{n}{\gcd(n, k)}$ is odd and $\gcd(n, m+k)=\gcd(n, m-k)=1$.

\emph{(ii)} $d\equiv 0~({\rm mod}\ 3)$, $n$ is even, $k$ and $m$ are odd satisfying that $\frac{n}{\gcd(n, k)}$ is even, $\gcd(k, n)=1$, and $\gcd(m+k,n)= \gcd(m-k,n)=2$.
\end{lemma}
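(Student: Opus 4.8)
The plan is to translate both gcd conditions into a prime-by-prime statement about the defining congruence and then aggregate. Write $N=2^n-1$ and, for a prime $p\mid N$, let $t_p=\mathrm{ord}_p(2)$, so that $t_p\mid n$, $p\mid 2^j-1\iff t_p\mid j$, and $p\mid 2^j+1\iff t_p\mid 2j$ and $t_p\nmid j$. Since $\gcd(d\cdot2^k-1,N)=\gcd(d-1,N)=1$ is equivalent to $p\nmid(d\cdot2^k-1)(d-1)$ for every prime $p\mid N$, the task is to decide, for each such $p$, when $d\equiv 2^{-k}\pmod p$ or $d\equiv 1\pmod p$. Reducing $(2^k+1)d\equiv 2^m+1\pmod N$ modulo $p$: if $p\nmid 2^k+1$ then $d$ is pinned down mod $p$, and using the two identities $(2^k+1)(d\cdot2^k-1)\equiv 2^{m+k}-1$ and $(2^k+1)(d-1)\equiv 2^m-2^k\pmod N$ (both immediate from the hypothesis) one gets $p\mid d\cdot2^k-1\iff t_p\mid m+k$ and $p\mid d-1\iff t_p\mid m-k$; if instead $p\mid 2^k+1$ (which forces $p\mid 2^m+1$, since a solution $d$ exists), then $2^{-k}\equiv -1\pmod p$ and the requirement is simply $d\not\equiv\pm1\pmod p$.

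Next I would split on the parity of $n/\gcd(n,k)$, using $\gcd(2^k+1,N)=2^{\gcd(n,k)}+1$ when it is even and $\gcd(2^k+1,N)=1$ when it is odd. If $n/\gcd(n,k)$ is odd, no prime of $N$ divides $2^k+1$, so the conditions collapse to ``$t_p\nmid m+k$ and $t_p\nmid m-k$ for all $p\mid N$'', which by $\gcd(2^a-1,2^b-1)=2^{\gcd(a,b)}-1$ is exactly $\gcd(m+k,n)=\gcd(m-k,n)=1$; this is case (i), and (ii) is vacuously excluded. If $n/\gcd(n,k)$ is even, set $g=\gcd(n,k)$; then $2^g+1\mid N$ and the existence of $d$ forces $2^g+1\mid 2^m+1$, hence $g\mid m$ and $m/g$ odd. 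With $e_\pm:=\gcd(m\pm k,n)$, the condition coming from primes $p\nmid 2^k+1$ says precisely that every prime factor of $2^{e_+}-1$ and of $2^{e_-}-1$ divides $2^g+1$. I would then argue this forces $e_\pm\in\{1,2\}$, with $e_\pm=2$ only when $g$ is odd: a factor $4\mid e_\pm$ gives $5\mid 2^{e_\pm}-1\mid N$, forcing $2\mid g$, while $3\mid 2^{e_\pm}-1$ forces $2\nmid g$, a contradiction; an odd prime $r\mid e_\pm$ yields a prime $p$ with $t_p=r$ (using $r\nmid 2^r-1$), forcing $r\mid g$ and $r\nmid g$, again a contradiction; and $e_\pm=2$ makes $3\mid N$ with $t_3=2\mid m\pm k$, forcing $3\mid 2^g+1$, i.e. $g$ odd. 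Since $g\mid k$ and $g\mid m$ we have $g\mid e_+$, so $g\le 2$, and $g=2$ would contradict ``$e_+=2\Rightarrow g$ odd''; hence $g=1$. Then $n$ is even, $k$ and $m$ are odd, $e_+=e_-=2$, and the only surviving prime constraint is $d\not\equiv\pm1\pmod 3$, i.e. $d\equiv 0\pmod 3$ --- exactly case (ii). The converse implications (each of (i) and (ii) yields the two gcd conditions) follow by running these equivalences backwards.

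The main obstacle is the covering step in the even case: deducing from ``all prime factors of $2^{e_\pm}-1$ divide $2^g+1$'' that $e_\pm\in\{1,2\}$ and that $g$ is odd when $e_\pm=2$. I would handle this via the factorization $2^t-1=\prod_{s\mid t}\Phi_s(2)$ together with the elementary facts $\Phi_2(2)=3$, $\Phi_4(2)=5$, and $r\nmid\Phi_r(2)=2^r-1$ for an odd prime $r$ (so that $\Phi_r(2)$ has a prime factor of order exactly $r$), which conveniently sidesteps the Zsygmondy exception at $t=6$. The remaining ingredients --- the two displayed identities, the gcd formulas $\gcd(2^a-1,2^b-1)=2^{\gcd(a,b)}-1$ and $\gcd(2^a-1,2^b+1)\in\{1,2^{\gcd(a,b)}+1\}$, and the routine bookkeeping over primes --- require no real ideas.
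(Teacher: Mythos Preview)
Your proof is correct. Both you and the paper start from the two identities
\[
(2^k+1)(d\cdot 2^k-1)\equiv 2^{m+k}-1\pmod N,\qquad (2^k+1)(d-1)\equiv 2^k(2^{m-k}-1)\pmod N,
\]
split on the parity of $n/\gcd(n,k)$, and handle case~(i) identically. The divergence is in the even case. You localize at each prime $p\mid N$, translate the two gcd conditions into ``every prime factor of $2^{e_\pm}-1$ divides $2^g+1$'' (with $g=\gcd(n,k)$, $e_\pm=\gcd(m\pm k,n)$), and then run a short case analysis via cyclotomic factors to force $e_\pm\in\{1,2\}$ and hence $g=1$. The paper instead stays global: since $\gcd(d\cdot 2^k-1,N)=1$ makes $d\cdot 2^k-1$ a unit modulo $N$, one has
\[
\gcd(2^k+1,N)=\gcd\big((2^k+1)(d\cdot 2^k-1),N\big)=\gcd(2^{m+k}-1,N),
\]
that is, $2^{\gcd(k,n)}+1=2^{\gcd(m+k,n)}-1$, and likewise $2^{\gcd(k,n)}+1=2^{\gcd(m-k,n)}-1$. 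The equation $2^g+1=2^e-1$ in positive integers has the single solution $(g,e)=(1,2)$, so $\gcd(k,n)=1$ and $\gcd(m\pm k,n)=2$ drop out in one stroke; the residual condition $d\equiv 0\pmod 3$ then appears in both proofs the same way, from $3\nmid(d-1)(d\cdot 2^k-1)$. Your approach gives a more granular picture of which primes can obstruct each gcd and avoids invoking $\gcd(ab,N)=\gcd(a,N)$ for $\gcd(b,N)=1$ as a black box, but the price is the cyclotomic case analysis; the paper's numerical-equation shortcut eliminates that step entirely.
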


\begin{proof}
Since
\begin{eqnarray}\label{eq:6.1}
(2^k+1)d\equiv 2^m+1~({\rm mod}\ 2^n-1),
\end{eqnarray}
we obtain
\begin{eqnarray*}
d\cdot 2^k -1\equiv 2^m-d~({\rm mod}\ 2^n-1).
\end{eqnarray*}
Multiplying the equation with $2^k+1$ and then plugging Eq. \eqref{eq:6.1} into it gives
\begin{eqnarray}\label{eq:6.3}
(2^k+1)(d\cdot 2^k -1)\equiv (2^k+1)(2^m-d)\equiv 2^{m+k}+2^m - (2^m+1)\equiv 2^{m+k}-1~({\rm mod}\ 2^n-1).
\end{eqnarray}
We further derive from Eq. \eqref{eq:6.1} that
\begin{eqnarray}\label{eq:6.4}
(2^k+1)(d-1)\equiv 2^m+1 - (2^k+1)\equiv 2^k(2^{m-k}-1)~({\rm mod}\ 2^n-1).
\end{eqnarray}

Case 1: If $\frac{n}{\gcd(n, k)}$ is odd, then $\gcd(2^k+1, 2^n-1)=1$. Therefore, we conclude from Eqs. \eqref{eq:6.3} and \eqref{eq:6.4} that  $\gcd(d\cdot2^k-1, 2^n-1)=1$ and $\gcd(d-1, 2^n-1)=1$ if and only if $\gcd(2^{m+k}-1, 2^n-1)=1$ and $\gcd(2^{m-k}-1, 2^n-1)=1$, this yields  $\gcd(n, m+k)=1$ and $\gcd(n, m-k)=1$.

Case 2: If $\frac{n}{\gcd(n, k)}$ is even, then $n$ is even and $$\gcd(2^k+1, 2^n-1)=\frac{2^{\gcd(2k, n)}-1}{2^{\gcd(k, n)}-1}=2^{\gcd(k, n)}+1.$$
Observe that $\gcd(2^{m+k}-1, 2^n-1)=2^{\gcd(m+k,n)}-1$ and $\gcd(2^k(2^{m-k}-1), 2^n-1)=2^{\gcd(m-k,n)}-1$.

From $\gcd(d\cdot2^k-1, 2^n-1)=1$ and $\gcd(d-1, 2^n-1)=1$, we have $$2^{\gcd(k, n)}+1 = 2^{\gcd(m+k,n)}-1 = 2^{\gcd(m-k,n)}-1.$$
The above equation holds if and only if $\gcd(k, n)=1$ and $\gcd(m+k,n)= \gcd(m-k,n)=2$. This means that
$k$ is odd, $m$ is odd and $2^{\gcd(k, n)}+1 =3$. 
We also have $3\nmid(d-1)$ and $3\nmid(2^k\cdot d-1)$, which is equivalent to $d\not\equiv 1~({\rm mod}\ 3)$ and $d\not\equiv 2~({\rm mod}\ 3)$, i.e.,
$d\equiv 0~({\rm mod}\ 3)$.
\end{proof}

\begin{theorem}\label{th:4.2}
Let $n$, $m$, $k$ and $d$ be positive integers satisfying $(2^k+1)d\equiv 2^m+1~({\rm mod}\ 2^n-1)$. Then
\begin{eqnarray*}
f(x)= x^d
\end{eqnarray*}
is a $0$-APN function over $\mathbb{F}_{2^n}$ if and only if one of the following holds:

\emph{(i)} $\frac{n}{\gcd(n, k)}$ is odd and $\gcd(n, m+k)=\gcd(n, m-k)=1$.

\emph{(ii)} $d\equiv 0~({\rm mod}\ 3)$, $n$ is even, $k$ and $m$ are odd satisfying that $\frac{n}{\gcd(n, k)}$ is even, $\gcd(k, n)=1$, and $\gcd(m+k,n)= \gcd(m-k,n)=2$.
\end{theorem}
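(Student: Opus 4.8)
The plan is to reduce the $0$-APN property of $f(x)=x^d$ to the solvability of the equation $(x+1)^d + x^d + 1 = 0$ in $\mathbb{F}_{2^n}\setminus\{0,1\}$, exactly as in the proof of Theorem~\ref{th:4.1}, and then to combine Lemma~\ref{lemma:4.2} with a factorization of that equation. First I would use the hypothesis $(2^k+1)d \equiv 2^m+1~({\rm mod}\ 2^n-1)$ to rewrite $(x+1)^d + x^d + 1 = 0$ by raising both sides of $x^d + 1 = (x+1)^d$ to the power $2^k+1$; since $(2^k+1)d \equiv 2^m+1$, the left side becomes $(x^d+1)^{2^k+1}$ and the right side becomes $(x+1)^{2^m+1}$. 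Expanding both sides and cancelling common terms should collapse the identity to something of the shape $x^{d\cdot 2^k} + x^d = x^{2^m} + x$, i.e. $x^d\bigl(x^{(2^k+1-2)d}+1\bigr)$-type manipulation; I would track the exponents carefully using $d\cdot 2^k - 1 \equiv 2^m - d~({\rm mod}\ 2^n-1)$ (this is the congruence already extracted at the start of the proof of Lemma~\ref{lemma:4.2}) to land on
\begin{eqnarray*}
x^{d\cdot 2^k} + x^d + x^{2^m} + x = 0,
\end{eqnarray*}
and then factor this as $x\bigl(x^{d\cdot 2^k - 1} + 1\bigr)\bigl(x^{2^m - 1} + 1\bigr)$ or, after dividing through appropriately, as a product $x \cdot \bigl(x^{d-1}+1\bigr)\cdot(\text{something})$ whose nonzero roots are governed purely by gcd conditions.

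The key algebraic point I expect is that the resulting product factorization has nonzero roots \emph{only} at $x=1$ precisely when $\gcd(d\cdot 2^k - 1, 2^n-1) = 1$ and $\gcd(d-1, 2^n-1)=1$; equivalently, after using that $x \mapsto x^{2^m-1}+1$ etc. have kernels of size $\gcd(\cdot,2^n-1)$ in $\mathbb{F}_{2^n}^*$. At that moment Lemma~\ref{lemma:4.2} applies verbatim: it states that these two gcd conditions are equivalent to the disjunction of (i) and (ii) in the theorem. So the structure is: (1) reduce to the polynomial equation; (2) massage it via the congruence into a factored form; (3) read off that ``no solution outside $\{0,1\}$'' $\iff$ ``two gcds equal $1$''; (4) quote Lemma~\ref{lemma:4.2}. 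Care is needed with the direction of the equivalence --- I want an \emph{iff}, so I must check that when one of the gcds exceeds $1$ there genuinely \emph{is} a root $x \in \mathbb{F}_{2^n}\setminus\{0,1\}$ of the original equation, not just of the derived one; this requires that the steps (raising to the $(2^k+1)$-th power) are reversible on $\mathbb{F}_{2^n}^*$, which holds because $\gcd(2^k+1, 2^n-1)$ divides into things controllably, or more simply because every manipulation used was a chain of equalities of elements of $\mathbb{F}_{2^n}$.

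The main obstacle I anticipate is the bookkeeping in step (2): the exponent reductions modulo $2^n-1$ must be done so that the final factorization has the clean form $x(x^{d-1}+1)(x^{d\cdot 2^k-1}+1)$ (up to a monomial factor), since only then do the two gcd conditions of Lemma~\ref{lemma:4.2} appear on the nose. In Theorem~\ref{th:4.1} the analogous computation produced $x(x^{d-1}+1)(x^{2^m-1}+1)=0$ and the gcd of $d-1$ was converted via $(2^k-1)(d-1)\equiv 2^k(2^{m-k}-1)$; here the ``$2^k+1$'' sign change means I should instead expect the factor $x^{d\cdot 2^k - 1}+1$ to survive directly, and I must confirm that $\gcd(2^m-1,2^n-1)$ does \emph{not} impose an extra constraint --- i.e. that after the cancellation the term $x^{2^m}+x$ is absorbed rather than contributing an independent factor. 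Once the factorization is pinned down, the rest is a direct citation of Lemma~\ref{lemma:4.2}, and the CCZ-inequivalence / table-population remarks are routine MAGMA checks requiring no further argument. I would also note at the end that the lemma's condition (ii) forces $d \equiv 0 ~({\rm mod}\ 3)$, so in that branch the hypothesis ``$d\equiv 0~({\rm mod}\ 3)$'' in the theorem statement is not an extra assumption but part of the characterization.
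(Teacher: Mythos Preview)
Your proposal is correct and follows essentially the same route as the paper: raise $(x+1)^d=x^d+1$ to the $(2^k+1)$-th power, expand and cancel to reach $x^{d\cdot 2^k}+x^d+x^{2^m}+x=0$, then factor --- the paper's factorization is your second guess, namely $x(x^{d-1}+1)(x^{d\cdot 2^k-1}+1)=0$, obtained by rewriting $x^{2^m}=x^{d\cdot 2^k+d-1}$ via the congruence --- and finally quote Lemma~\ref{lemma:4.2}. Your attention to the reversibility of the $(2^k{+}1)$-th-power step is actually more care than the paper itself spells out.
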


\begin{proof}
We consider the equation
\begin{eqnarray}\label{eq:5.0}
(x + 1)^d + x^d + 1 = 0.
\end{eqnarray}
Since $(2^k+1)d\equiv 2^m+1~({\rm mod}\ 2^n-1)$, we deduce $2^m-2^k\cdot d \equiv d-1~({\rm mod}\ 2^n-1)$. Therefore we derive from
Eq. \eqref{eq:5.0}  that
\begin{eqnarray*}
(x + 1)^{2^m+1} = (x + 1)^{(2^k+1)d}= (x^d + 1)^{2^k+1},
\end{eqnarray*}
which can be written as
\begin{eqnarray}\label{eq:5.1}
x^{(2^k+1)\cdot d} + x^{d\cdot2^k} + x^d + x^{2^m+1} + x^{2^m} + x =0.
\end{eqnarray}
Equation \eqref{eq:5.1} becomes
\begin{eqnarray*}
x(x^{d-1}+1) + x^{d\cdot2^k}(x^{d-1}+1) =0.
\end{eqnarray*}
This yields
\begin{eqnarray}\label{eq:5.3}
x(x^{d-1}+1) (x^{d\cdot2^k-1}+1) = 0.
\end{eqnarray}
Equation \eqref{eq:5.3} has only the two solutions $x=0,1$ if and only if  $\gcd(d\cdot2^k-1, 2^n-1)=1$ and $\gcd(d-1, 2^n-1)=1$.
According to Lemma \ref{lemma:4.2}, we derive the desire conclusion.
The proof is complete.
\end{proof}

According to the above theorems, we can deduce the following corollaries directly.

\begin{corollary}\label{cor:4.1}
Let $n$, $l$ and $k$ be positive integers. Then
\begin{eqnarray*}
f(x)= x^{\frac{2^{lk}-1}{2^k-1}}
\end{eqnarray*}
is a $0$-APN function over $\mathbb{F}_{2^n}$ if and only if $\gcd(n, lk)=\gcd(n, (l-1)k)=1$.
\end{corollary}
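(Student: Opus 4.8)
The plan is to obtain Corollary \ref{cor:4.1} as an immediate specialization of Theorem \ref{th:4.1}. First I would record the identity
$$ d \;=\; \frac{2^{lk}-1}{2^k-1} \;=\; 1 + 2^k + 2^{2k} + \cdots + 2^{(l-1)k}, $$
which shows that $d$ is a genuine positive integer, so that the power function $x^d$ is well defined and Theorem \ref{th:4.1} can be applied verbatim. Multiplying through by $2^k-1$ yields the integer identity $(2^k-1)\,d = 2^{lk}-1$; in particular, putting $m := lk$, the congruence $(2^k-1)\,d \equiv 2^m-1~({\rm mod}\ 2^n-1)$ holds (indeed it already holds over $\mathbb{Z}$), so the hypothesis of Theorem \ref{th:4.1} is met with this value of $m$.

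Next I would simply invoke Theorem \ref{th:4.1}: with $n,m,k,d$ as above, $f(x)=x^d$ is a $0$-APN function over $\mathbb{F}_{2^n}$ if and only if $\gcd(n,m)=\gcd(n,m-k)=1$. Substituting $m=lk$ and using $m-k=lk-k=(l-1)k$, this condition reads precisely $\gcd(n,lk)=\gcd(n,(l-1)k)=1$, which is the asserted characterization. This completes the argument.

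Since the proof is a one-line substitution into an already-proved theorem, there is essentially no obstacle; the only points meriting a sentence of care are (a) verifying that the quotient defining $d$ is an integer — immediate from the geometric-series expansion above — and (b) the degenerate case $l=1$, where $d=1$ and $f(x)=x$: here the stated condition becomes $\gcd(n,k)=\gcd(n,0)=n=1$, which is consistent, as $\mathbb{F}_{2^n}\setminus\{0,1\}=\varnothing$ when $n=1$ (so every power function is trivially $0$-APN), while for $n\geq 2$ the map $x\mapsto x$ fails to be $0$-APN. No additional computation is required.
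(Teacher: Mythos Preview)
Your proposal is correct and matches the paper's approach exactly: the paper states that the corollary follows directly from Theorem~\ref{th:4.1}, and your substitution $m=lk$ (so that $(2^k-1)d=2^{lk}-1$ and $m-k=(l-1)k$) is precisely the intended specialization. Your remarks on the integrality of $d$ and the degenerate case $l=1$ are fine extra care but not needed for the argument.
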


\begin{corollary}\label{cor:4.2}
Let $n$, $l$ be odd and $k$ be a positive integer with $\gcd(n, (l+1)k)=\gcd(n, (l-1)k)=1$. Then
\begin{eqnarray*}
f(x)= x^{\frac{2^{lk}+1}{2^k+1}}
\end{eqnarray*}
is a $0$-APN function over $\mathbb{F}_{2^n}$.
\end{corollary}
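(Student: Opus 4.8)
The plan is to derive Corollary~\ref{cor:4.2} directly from Theorem~\ref{th:4.2}(i) by choosing the parameters appropriately. First I would record the elementary fact that, since $l$ is odd, $2^k+1$ divides $2^{lk}+1$ (write $2^{lk}+1=(2^k)^l+1^l$ and use that $a+b$ divides $a^l+b^l$ for odd $l$), so that $d:=\frac{2^{lk}+1}{2^k+1}$ is a genuine positive integer. Setting $m:=lk$, we then have the exact integer identity $(2^k+1)d=2^{lk}+1=2^m+1$, which in particular yields the congruence $(2^k+1)d\equiv 2^m+1~({\rm mod}\ 2^n-1)$ needed to invoke Theorem~\ref{th:4.2}.

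Next I would check that the hypotheses of Theorem~\ref{th:4.2}(i) are all met. With $m=lk$ one has $m+k=(l+1)k$ and $m-k=(l-1)k$, so the assumed conditions $\gcd(n,(l+1)k)=\gcd(n,(l-1)k)=1$ are literally $\gcd(n,m+k)=\gcd(n,m-k)=1$. It then remains to verify that $\frac{n}{\gcd(n,k)}$ is odd. Since $l$ is odd, $l-1$ is even, so $(l-1)k$ is even; hence $\gcd(n,(l-1)k)=1$ forces $n$ to be odd, and therefore $\frac{n}{\gcd(n,k)}$, being a divisor of $n$, is odd as well. Thus case (i) of Theorem~\ref{th:4.2} applies and gives that $f(x)=x^d$ is a $0$-APN function over $\mathbb{F}_{2^n}$.

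There is no real obstacle in this argument — it is a specialization of a result already proved. The only point that merits a moment's care is the observation just made, namely that the parity requirement ``$\frac{n}{\gcd(n,k)}$ odd'' is \emph{not} an additional hypothesis but is automatically implied by $\gcd(n,(l-1)k)=1$ together with $l$ odd. For completeness one may also remark that the companion Corollary~\ref{cor:4.1} follows in the same way from Theorem~\ref{th:4.1} via the substitution $m=lk$, $d=\frac{2^{lk}-1}{2^k-1}$, using that $2^k-1$ divides $2^{lk}-1$ unconditionally.
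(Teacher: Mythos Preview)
Your proposal is correct and follows exactly the approach the paper intends: the paper states only that the corollaries follow ``directly'' from Theorems~\ref{th:4.1} and~\ref{th:4.2}, and your argument spells out precisely this specialization with $m=lk$ and $d=(2^{lk}+1)/(2^k+1)$. Your observation that the hypothesis ``$n$ odd'' is in fact forced by $\gcd(n,(l-1)k)=1$ together with $l$ odd (so that the parity condition $\frac{n}{\gcd(n,k)}$ odd comes for free) is a nice clarification that the paper does not make explicit.
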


Theorems \ref{th:4.1} and \ref{th:4.2} present more new $0$-APN functions over $\mathbb{F}_{2^n}$.
Some of them are CCZ-inequivalent to the known ones. The corresponding examples can be seen in Table \ref{tab}.


\section{Conclusion}\label{conclu}

In this paper, we investigated the examples of exponents $d$ of the power function $f(x)=x^d$ over $\mathbb{F}_{2^n}$($1\leq n\leq11$) in Table $1$ of \cite{BKRS2020}. 
Based on the multivariate method and resultant elimination, we presented several new infinite classes of $0$-APN power functions over $\mathbb{F}_{2^n}$. Furthermore, two new classes of $0$-APN power functions over $\mathbb{F}_{2^n}$ were characterized completely by studying some special equations. These new $0$-APN power functions were CCZ-inequivalent to the known ones.


\section*{Appendix}\label{App}

In this section, we list some resultants by the computation of MAGMA used in the proofs of the results in the previous sections and some equations.

\vspace{0.6\baselineskip}
The following $ Res_{1}(x,y)$ and $Res_{2}(x,y)$ are the resultants of  Eq. \eqref{eq:1.5} in Theorem \ref{th1.1}.

$ Res_{1}(x,y) = x(x+1)(xy + y + 1)(xy + x + 1)(x^{28}y^{22} + x^{28}y^{21} + x^{28}y^{20} + x^{28}y^{19}+ x^{28}y^{14} + x^{28}y^{13} +
    x^{28}y^{12} + x^{28}y^{11} + x^{27}y^{20} + x^{27}y^{18} + x^{27}y^{12} + x^{27}y^{10} +
    x^{26}y^{22} + x^{26}y^{20} + x^{26}y^{18} + x^{26}y^{17} + x^{26}y^{16} + x^{26}y^{14} +
    x^{26}y^{12} + x^{26}y^{10} + x^{26}y^9 + x^{26}y^8 + x^{25}y^{21} + x^{25}y^{20} +
    x^{25}y^{19} + x^{25}y^{18} + x^{25}y^{16} + x^{25}y^{14} + x^{25}y^9 + x^{25}y^7 +
    x^{24}y^{22} + x^{24}y^{21} + x^{24}y^{18} + x^{24}y^{17} + x^{24}y^{15} + x^{24}y^{11} +
    x^{24}y^{10} + x^{24}y^6 + x^{23}y^{20} + x^{23}y^{17} + x^{23}y^{16} + x^{23}y^{12} +
    x^{23}y^{10} + x^{23}y^8 + x^{23}y^6 + x^{23}y^5 + x^{22}y^{22} + x^{22}y^{20} +
    x^{22}y^{19} + x^{22}y^{18} + x^{22}y^{17} + x^{22}y^{13} + x^{22}y^{11} + x^{22}y^{10} +
    x^{22}y^6 + x^{22}y^4 + x^{21}y^{21} + x^{21}y^{20} + x^{21}y^{18} + x^{21}y^{14} +
    x^{21}y^{12} + x^{21}y^9 + x^{20}y^{22} + x^{20}y^{21} + x^{20}y^{19} + x^{20}y^{18} +
    x^{20}y^{17} + x^{20}y^{16} + x^{20}y^{14} + x^{20}y^{12} + x^{20}y^{11} + x^{20}y^8 +
    x^{20}y^6 + x^{20}y^4 + x^{19}y^{20} + x^{19}y^{19} + x^{19}y^{17} + x^{19}y^{12} +
    x^{19}y^7 + x^{19}y^5 + x^{18}y^{22} + x^{18}y^{20} + x^{18}y^{18} + x^{18}y^{17} +
    x^{18}y^{15} + x^{18}y^{13} + x^{18}y^{11} + x^{18}y^{10} + x^{18}y^6 + x^{18}y^4 +
    x^{17}y^{21} + x^{17}y^{20} + x^{17}y^{19} + x^{17}y^{18} + x^{17}y^{16} + x^{17}y^{14} +
    x^{17}y^{13} + x^{17}y^{12} + x^{17}y^{11} + x^{17}y^{10} + x^{17}y^8 + x^{17}y^5 +
    x^{17}y^4 + x^{17}y^3 + x^{16}y^{22} + x^{16}y^{21} + x^{16}y^{18} + x^{16}y^{12} +
    x^{16}y^{11} + x^{16}y^8 + x^{16}y^6 + x^{16}y^2 + x^{15}y^{20} + x^{15}y^{17} +
    x^{15}y^{16} + x^{15}y^{10} + x^{15}y^9 + x^{15}y^8 + x^{15}y^6 + x^{15}y^4 + x^{15}y^2
    + x^{15}y + x^{14}y^{22} + x^{14}y^{20} + x^{14}y^{19} + x^{14}y^{18} + x^{14}y^{17} +
    x^{14}y^{16} + x^{14}y^{12} + x^{14}y^{11} + x^{14}y^{10} + x^{14}y^6 + x^{14}y^5 +
    x^{14}y^4 + x^{14}y^3 + x^{14}y^2 + x^{14} + x^{13}y^{21} + x^{13}y^{20} + x^{13}y^{18} +
    x^{13}y^{16} + x^{13}y^{14} + x^{13}y^{13} + x^{13}y^{12} + x^{13}y^6 + x^{13}y^5 +
    x^{13}y^2 + x^{12}y^{20} + x^{12}y^{16} + x^{12}y^{14} + x^{12}y^{11} + x^{12}y^{10} +
    x^{12}y^4 + x^{12}y + x^{12} + x^{11}y^{19} + x^{11}y^{18} + x^{11}y^{17} + x^{11}y^{14} +
    x^{11}y^{12} + x^{11}y^{11} + x^{11}y^{10} + x^{11}y^9 + x^{11}y^8 + x^{11}y^6 +
    x^{11}y^4 + x^{11}y^3 + x^{11}y^2 + x^{11}y + x^{10}y^{18} + x^{10}y^{16} + x^{10}y^{12}
    + x^{10}y^{11} + x^{10}y^9 + x^{10}y^7 + x^{10}y^5 + x^{10}y^4 + x^{10}y^2 + x^{10} +
    x^9y^{17} + x^9y^{15} + x^9y^{10} + x^9y^5 + x^9y^3 + x^9y^2 + x^8y^{18} +
    x^8y^{16} + x^8y^{14} + x^8y^{11} + x^8y^{10} + x^8y^8 + x^8y^6 + x^8y^5 +
    x^8y^4 + x^8y^3 + x^8y + x^8 + x^7y^{13} + x^7y^{10} + x^7y^8 + x^7y^4 +
    x^7y^2 + x^7y + x^6y^{18} + x^6y^{16} + x^6y^{12} + x^6y^{11} + x^6y^9 +
    x^6y^5 + x^6y^4 + x^6y^3 + x^6y^2 + x^6 + x^5y^{17} + x^5y^{16} + x^5y^{14}
    + x^5y^{12} + x^5y^{10} + x^5y^6 + x^5y^5 + x^5y^2 + x^4y^{16} + x^4y^{12} +
    x^4y^{11} + x^4y^7 + x^4y^5 + x^4y^4 + x^4y + x^4 + x^3y^{15} + x^3y^{13} +
    x^3y^8 + x^3y^6 + x^3y^4 + x^3y^3 + x^3y^2 + x^3y + x^2y^{14} +
    x^2y^{13} + x^2y^{12} + x^2y^{10} + x^2y^8 + x^2y^6 + x^2y^5 + x^2y^4 +
    x^2y^2 + x^2 + xy^{12} + xy^{10} + xy^4 + xy^2 + y^{11} + y^{10} + y^9 + y^8 +
    y^3 + y^2 + y + 1). $  \\

$ Res_{2}(x,y) = (xy + y + 1)(xy + x + 1)(x^{32}y^{28} + x^{32}y^{26} + x^{32}y^{24} + x^{32}y^{22} + x^{32}y^{20} + x^{32}y^{18} +
    x^{32}y^{16} + x^{32}y^{14} + x^{31}y^{26} + x^{31}y^{25} + x^{31}y^{22} + x^{31}y^{21} +
    x^{31}y^{18} + x^{31}y^{17} + x^{31}y^{14} + x^{31}y^{13} + x^{30}y^{27} + x^{30}y^{26} +
    x^{30}y^{24} + x^{30}y^{23} + x^{30}y^{22} + x^{30}y^{20} + x^{30}y^{19} + x^{30}y^{18} +
    x^{30}y^{16} + x^{30}y^{15} + x^{30}y^{14} + x^{30}y^{12} + x^{29}y^{27} + x^{29}y^{26} +
    x^{29}y^{23} + x^{29}y^{21} + x^{29}y^{20} + x^{29}y^{18} + x^{29}y^{15} + x^{29}y^{13} +
    x^{29}y^{12} + x^{29}y^{11} + x^{28}y^{27} + x^{28}y^{22} + x^{28}y^{21} + x^{28}y^{19} +
    x^{28}y^{18} + x^{28}y^{16} + x^{28}y^{14} + x^{28}y^{13} + x^{28}y^{10} + x^{28}y^8 +
    x^{27}y^{27} + x^{27}y^{26} + x^{27}y^{24} + x^{27}y^{23} + x^{27}y^{22} + x^{27}y^{20} +
    x^{27}y^{19} + x^{27}y^{17} + x^{27}y^{15} + x^{27}y^{13} + x^{27}y^{11} + x^{27}y^7 +
    x^{26}y^{27} + x^{26}y^{25} + x^{26}y^{24} + x^{26}y^{23} + x^{26}y^{21} + x^{26}y^{20} +
    x^{26}y^{18} + x^{26}y^{16} + x^{26}y^{13} + x^{26}y^{11} + x^{26}y^9 + x^{26}y^8 +
    x^{26}y^7 + x^{26}y^6 + x^{25}y^{27} + x^{25}y^{26} + x^{25}y^{25} + x^{25}y^{20} +
    x^{25}y^{16} + x^{25}y^{15} + x^{25}y^{14} + x^{25}y^{13} + x^{25}y^{12} + x^{25}y^9 +
    x^{25}y^8 + x^{25}y^5 + x^{24}y^{27} + x^{24}y^{24} + x^{24}y^{21} + x^{24}y^{17} +
    x^{24}y^{13} + x^{24}y^{12} + x^{24}y^{10} + x^{24}y^8 + x^{24}y^7 + x^{24}y^6 +
    x^{24}y^5 + x^{24}y^4 + x^{23}y^{27} + x^{23}y^{26} + x^{23}y^{24} + x^{23}y^{23} +
    x^{23}y^{21} + x^{23}y^{20} + x^{23}y^{16} + x^{23}y^{15} + x^{23}y^{14} + x^{23}y^{10} +
    x^{23}y^8 + x^{23}y^7 + x^{23}y^6 + x^{23}y^5 + x^{22}y^{27} + x^{22}y^{25} +
    x^{22}y^{24} + x^{22}y^{22} + x^{22}y^{21} + x^{22}y^{17} + x^{22}y^{15} + x^{22}y^{11} +
    x^{22}y^8 + x^{22}y^7 + x^{22}y^6 + x^{22}y^5 + x^{21}y^{27} + x^{21}y^{26} +
    x^{21}y^{25} + x^{21}y^{23} + x^{21}y^{22} + x^{21}y^{20} + x^{21}y^{18} + x^{21}y^{16} +
    x^{21}y^{14} + x^{21}y^{12} + x^{21}y^{10} + x^{21}y^9 + x^{21}y^8 + x^{21}y^6 +
    x^{20}y^{27} + x^{20}y^{24} + x^{20}y^{23} + x^{20}y^{21} + x^{20}y^{20} + x^{20}y^{19} +
    x^{20}y^{15} + x^{20}y^{11} + x^{20}y^{10} + x^{20}y^9 + x^{20}y^8 + x^{20}y^7 +
    x^{20}y^6 + x^{20}y^4 + x^{19}y^{27} + x^{19}y^{26} + x^{19}y^{24} + x^{19}y^{22} +
    x^{19}y^{21} + x^{19}y^{18} + x^{19}y^{17} + x^{19}y^{12} + x^{19}y^{11} + x^{19}y^8 +
    x^{19}y^5 + x^{19}y^3 + x^{18}y^{27} + x^{18}y^{25} + x^{18}y^{24} + x^{18}y^{23} +
    x^{18}y^{22} + x^{18}y^{20} + x^{18}y^{19} + x^{18}y^{18} + x^{18}y^{17} + x^{18}y^{15} +
    x^{18}y^{13} + x^{18}y^{11} + x^{18}y^{10} + x^{18}y^9 + x^{18}y^8 + x^{18}y^6 +
    x^{18}y^3 + x^{18}y^2 + x^{17}y^{27} + x^{17}y^{26} + x^{17}y^{25} + x^{17}y^{21} +
    x^{17}y^{20} + x^{17}y^{18} + x^{17}y^{17} + x^{17}y^{16} + x^{17}y^{14} + x^{17}y^{13} +
    x^{17}y^{11} + x^{17}y^{10} + x^{17}y^9 + x^{17}y^7 + x^{17}y^5 + x^{17}y^4 + x^{17}y^3
    + x^{17}y + x^{16}y^{28} + x^{16}y^{27} + x^{16}y^{26} + x^{16}y^{24} + x^{16}y^{21} +
    x^{16}y^{20} + x^{16}y^{19} + x^{16}y^{18} + x^{16}y^{17} + x^{16}y^{15} + x^{16}y^{14} +
    x^{16}y^{13} + x^{16}y^{11} + x^{16}y^{10} + x^{16}y^9 + x^{16}y^8 + x^{16}y^7 +
    x^{16}y^4 + x^{16}y^2 + x^{16}y + x^{16} + x^{15}y^{27} + x^{15}y^{25} + x^{15}y^{24} +
    x^{15}y^{23} + x^{15}y^{21} + x^{15}y^{19} + x^{15}y^{18} + x^{15}y^{17} + x^{15}y^{15} +
    x^{15}y^{14} + x^{15}y^{12} + x^{15}y^{11} + x^{15}y^{10} + x^{15}y^8 + x^{15}y^7 +
    x^{15}y^3 + x^{15}y^2 + x^{15}y + x^{14}y^{26} + x^{14}y^{25} + x^{14}y^{22} + x^{14}y^{20}
    + x^{14}y^{19} + x^{14}y^{18} + x^{14}y^{17} + x^{14}y^{15} + x^{14}y^{13} + x^{14}y^{11} +
    x^{14}y^{10} + x^{14}y^9 + x^{14}y^8 + x^{14}y^6 + x^{14}y^5 + x^{14}y^4 + x^{14}y^3
    + x^{14}y + x^{13}y^{25} + x^{13}y^{23} + x^{13}y^{20} + x^{13}y^{17} + x^{13}y^{16} +
    x^{13}y^{11} + x^{13}y^{10} + x^{13}y^7 + x^{13}y^6 + x^{13}y^4 + x^{13}y^2 + x^{13}y +
    x^{12}y^{24} + x^{12}y^{22} + x^{12}y^{21} + x^{12}y^{20} + x^{12}y^{19} + x^{12}y^{18} +
    x^{12}y^{17} + x^{12}y^{13} + x^{12}y^9 + x^{12}y^8 + x^{12}y^7 + x^{12}y^5 + x^{12}y^4
    + x^{12}y + x^{11}y^{22} + x^{11}y^{20} + x^{11}y^{19} + x^{11}y^{18} + x^{11}y^{16} +
    x^{11}y^{14} + x^{11}y^{12} + x^{11}y^{10} + x^{11}y^8 + x^{11}y^6 + x^{11}y^5 +
    x^{11}y^3 + x^{11}y^2 + x^{11}y + x^{10}y^{23} + x^{10}y^{22} + x^{10}y^{21} + x^{10}y^{20}
    + x^{10}y^{17} + x^{10}y^{13} + x^{10}y^{11} + x^{10}y^7 + x^{10}y^6 + x^{10}y^4 +
    x^{10}y^3 + x^{10}y + x^9y^{23} + x^9y^{22} + x^9y^{21} + x^9y^{20} + x^9y^{18} +
    x^9y^{14} + x^9y^{13} + x^9y^{12} + x^9y^8 + x^9y^7 + x^9y^5 + x^9y^4 +
    x^9y^2 + x^9y + x^8y^{24} + x^8y^{23} + x^8y^{22} + x^8y^{21} + x^8y^{20} +
    x^8y^{18} + x^8y^{16} + x^8y^{15} + x^8y^{11} + x^8y^7 + x^8y^4 + x^8y +
    x^7y^{23} + x^7y^{20} + x^7y^{19} + x^7y^{16} + x^7y^{15} + x^7y^{14} + x^7y^{13} +
    x^7y^{12} + x^7y^8 + x^7y^3 + x^7y^2 + x^7y + x^6y^{22} + x^6y^{21} +
    x^6y^{20} + x^6y^{19} + x^6y^{17} + x^6y^{15} + x^6y^{12} + x^6y^{10} + x^6y^8 +
    x^6y^7 + x^6y^5 + x^6y^4 + x^6y^3 + x^6y + x^5y^{21} + x^5y^{17} +
    x^5y^{15} + x^5y^{13} + x^5y^{11} + x^5y^9 + x^5y^8 + x^5y^6 + x^5y^5 +
    x^5y^4 + x^5y^2 + x^5y + x^4y^{20} + x^4y^{18} + x^4y^{15} + x^4y^{14} +
    x^4y^{12} + x^4y^{10} + x^4y^9 + x^4y^7 + x^4y^6 + x^4y + x^3y^{17} +
    x^3y^{16} + x^3y^{15} + x^3y^{13} + x^3y^{10} + x^3y^8 + x^3y^7 + x^3y^5 +
    x^3y^2 + x^3y + x^2y^{16} + x^2y^{14} + x^2y^{13} + x^2y^{12} + x^2y^{10} +
    x^2y^9 + x^2y^8 + x^2y^6 + x^2y^5 + x^2y^4 + x^2y^2 + x^2y + xy^{15} +
    xy^{14} + xy^{11} + xy^{10} + xy^7 + xy^6 + xy^3 + xy^2 + y^{14} + y^{12} +
    y^{10} + y^8 + y^6 + y^4 + y^2 + 1). $  

\vspace{0.6\baselineskip}
The following polynomial is the resultant of  Eq. \eqref{9.9} in Theorem \ref{th:2.9}.
\vspace{0.6\baselineskip}

$ (x^{83} + x^{80} + x^{79} + x^{78} + x^{76} + x^{74} + x^{72} + x^{69} + x^{68} + x^{67} + x^{64} + x^{63} + x^{59} + x^{58} + x^{57} + x^{55} + x^{54} + x^{53} + x^{51} + x^{50} + x^{49} + x^{48} + x^{46} + x^{44} + x^{42} + x^{39} + x^{37} + x^{35} + x^{33} + x^{32} + x^{28} + x^{24} + x^{23} + x^{22} + x^{21} + x^{20} + x^{18} + x^{17} + x^{16} + x^{13} + x^{12} + x^{11} + x^{10} + x^8 + x^6 + x^4 + x^3 + x^2 + 1)^2
 (x^{83} + x^{81} + x^{76} + x^{75} + x^{74} + x^{73} + x^{{71}} + x^{69} + x^{68} + x^{67} + x^{63} + x^{59} + x^{58} + x^{57} + x^{55} + x^{54} + x^{53} + x^{52} + x^{47} + x^{46} + x^{45} + x^{44} + x^{37} + x^{36} + x^{35} + x^{34} + x^{31} + x^{30} + x^{29} + x^{27} + x^{26} + x^{25} + x^{20} + x^{19} + x^{18} + x^{17} + x^{16} + x^{14} + x^{13} + x^{11} + x^5 + x^4 + x^3 + x + 1)^2
 (x^{83} + x^{81} + x^{80} + x^{79} + x^{77} + x^{75} + x^{73} + x^{72} + x^{{71}} + x^{70} + x^{67} + x^{66} + x^{65} + x^{63} + x^{62} + x^{61} + x^{60} + x^{59} + x^{55} + x^{51} + x^{50} + x^{48} + x^{46} + x^{44} + x^{41} + x^{39} + x^{37} + x^{35} + x^{34} + x^{33} + x^{32} + x^{30} + x^{29} + x^{28} + x^{26} + x^{25} + x^{24} + x^{20} + x^{19} + x^{16} + x^{15} + x^{14} + x^{11} + x^9 + x^7 + x^5 + x^4 + x^3 + 1)^ 2
 (x^{83} + x^{82} + x^{76} + x^{75} + x^{{71}} + x^{70} + x^{66} + x^{65} + x^{63} + x^{62} + x^{61} + x^{60} + x^{58} + x^{57} + x^{54} + x^{53} + x^{52} + x^{51} + x^{50} + x^{49} + x^{46} + x^{45} + x^{44} + x^{43} + x^{42} + x^{41} + x^{39} + x^{38} + x^{36} + x^{35} + x^{34} + x^{33} + x^{30} + x^{29} + x^{26} + x^{25} + x^{23} + x^{22} + x^{21} + x^{20} + x^{19} + x^{17} + x^{16} + x^{15} + x^{12} + x^{11} + x^{10} + x^9 + x^8 + x^7 + x^6 + x^4 + x^2 + x + 1)^2
 (x^{83} + x^{82} + x^{80} + x^{79} + x^{78} + x^{72} + x^{70} + x^{69} + x^{67} + x^{66} + x^{65} + x^{64} + x^{63} + x^{58} + x^{57} + x^{56} + x^{54} + x^{53} + x^{52} + x^{49} + x^{48} + x^{47} + x^{46} + x^{39} + x^{38} + x^{37} + x^{36} + x^{31} + x^{30} + x^{29} + x^{28} + x^{26} + x^{25} + x^{24} + x^{20} + x^{16} + x^{15} + x^{14} + x^{12} + x^{10} + x^9 + x^8 + x^7 + x^2 + 1)^ 2
 (x^{83} + x^{82} + x^{81} + x^{79} + x^{77} + x^{76} + x^{75} + x^{74} + x^{73} + x^{72} + x^{{71}} + x^{68} + x^{67} + x^{66} + x^{64} + x^{63} + x^{62} + x^{61} + x^{60} + x^{58} + x^{57} + x^{54} + x^{53} + x^{50} + x^{49} + x^{48} + x^{47} + x^{45} + x^{44} + x^{42} + x^{41} + x^{40} + x^{39} + x^{38} + x^{37} + x^{34} + x^{33} + x^{32} + x^{31} + x^{30} + x^{29} + x^{26} + x^{25} + x^{23} + x^{22} + x^{21} + x^{20} + x^{18} + x^{17} + x^{13} + x^{12} + x^8 + x^7 + x + 1)^2.$



\end{document}